\newcommand{\blind}{0}
\providecommand{\tightlist}{%
  \setlength{\itemsep}{0pt}\setlength{\parskip}{0pt}}
\patchcmd\longtable{\par}{\if@noskipsec\mbox{}\fi\par}{}{}
\newtheorem{theorem}{Theorem}[section]
\newtheorem{lemma}{Lemma}[section]
\newtheorem{corollary}{Corollary}[section]
\theoremstyle{definition}
\newtheorem{definition}{Definition}[section]
\theoremstyle{definition}
\theoremstyle{definition}
\theoremstyle{definition}
\theoremstyle{remark}
\newtheorem*{remark}{Remark}
\begin{document}

\def\spacingset#1{\renewcommand{\baselinestretch}%
{#1}\small\normalsize} \spacingset{1}


\if0\blind
{
  \title{\bf Generative Filtering for Recursive Bayesian Inference with Streaming Data}

  \author{
        Ian Taylor \thanks{Work performed while at Colorado State University} \\
    National Renewable Energy Laboratory\\
     and \\     Andee Kaplan \\
    Department of Statistics, Colorado State University\\
     and \\     Brenda Betancourt \\
    NORC at the University of Chicago\\
      }
  \maketitle
} \fi

\if1\blind
{
  \bigskip
  \bigskip
  \bigskip
  \begin{center}
    {\LARGE\bf Generative Filtering for Recursive Bayesian Inference with Streaming Data}
  \end{center}
  \medskip
} \fi

\bigskip
\begin{abstract}
In the streaming data setting, where data arrive continuously or in frequent batches and there is no pre-determined amount of total data, Bayesian models can employ recursive updates, incorporating each new batch of data into the model parameters' posterior distribution. Filtering methods are currently used to perform these updates efficiently, however, they suffer from eventual degradation as the number of unique values within the filtered samples decreases. We propose Generative Filtering, a method for efficiently performing recursive Bayesian updates in the streaming setting. Generative Filtering retains the speed of a filtering method while using parallel updates to avoid degenerate distributions after repeated applications. We derive rates of convergence for Generative Filtering and conditions for the use of sufficient statistics instead of fully storing all past data. We investigate the alleviation of filtering degradation through simulation and an ecological time series of counts.
\end{abstract}

\noindent%
{\it Keywords:} filtering, streaming data, recursive Bayes, Markov Chain Monte Carlo, parallel computation

\vfill

\newpage
\spacingset{1.75} 

\hypertarget{introduction}{%
\section{Introduction}\label{introduction}}

Modern data collection has given rise to the streaming setting, where data arrive continuously or in frequent batches. In a typical analysis, estimates of model parameters can be produced in one offline procedure. However, in the streaming data setting, estimates of model parameters may be desired after each arrival of new data. This setting poses a computational challenge as the data model becomes more complex with each new arrival and producing model estimates is increasingly more time-consuming.

In Bayesian statistics, parameters are estimated using the posterior distribution of the parameters conditioned on the values of the data. Because these posterior distributions are frequently intractable, they are typically approximated using samples produced by Markov chain Monte Carlo \citep[MCMC,][]{gelfand1990sampling}. MCMC can be computationally intensive, especially if a model is complex or has strong dependencies among the parameters. Sampling techniques such as Hamiltonian Monte Carlo \citep{duane1987hybrid} or the No-U-Turn Sampler \citep{hoffman2014no} more efficiently sample from the posterior distribution. However even with efficient sampling, offline MCMC begins from scratch each time new data arrive. This frequent restarting in the streaming setting ignores the previous parameter estimates, which may be helpful to more efficiently produce inference.

Approximate methods such as Variational Bayes \citep{blei2017variational} or Approximate Bayesian Computation \citep[ABC,][]{tavare1997inferring} provide Bayesian inference by approximating the posterior distribution. Variational Bayes approximates the distribution analytically, and ABC facilitates sampling from an approximation to the posterior distribution. These methods improve efficiency for intractable posterior distributions, but while the posterior means may be captured by the approximation, the entire shape of the posterior distribution is not. A practitioner may not be willing to make this trade-off between efficiency and inference. Additionally, these methods require the selection of a family of approximate posteriors (in the case of Variational Bayes) or summary statistics of data (in the case of ABC) for accurate approximation of the posterior, which may not be obvious.

Recursive Bayesian updates, in which the posterior distribution of one analysis is used as the prior for a subsequent analysis, have been proposed to leverage the streaming setting \citep{sarkka2013bayesian}. In the streaming setting, the posterior after the arrival of data up to time \(t\) is used as the prior when the next data arrive at time \(t+1\). When models are designed with conjugate prior distributions, recursive Bayesian updates can be performed analytically. Otherwise, samples approximating the previous posterior distribution can be used to inform the next analysis.

Sequential Monte Carlo (SMC) methods apply importance sampling to reweight and filter samples to changing probability distributions, particularly posterior distributions of state space models, and include sequential importance sampling \citep[e.g.,][]{hendry1992likelihood, liu1995blind}, the Bootstrap filter \citep{gordon1993novel}, and Monte Carlo filtering \citep{kitagawa1996monte}. SMC is applicable to the streaming setting through extension to sequences of arbitrary probability distributions \citep{neal2001annealed, chopin2002sequential, delmoral2006sequential}. \citet{berzuini1997dynamic} combine MCMC kernels and filtering to support dynamic models with expanding parameter spaces. Because these methods refine existing samples, we refer to them as ``filtering'' methods. Filtering methods are desirable for their speed but suffer from eventual degeneracy as samples are filtered. Markov chains have been combined with SMC methods to avoid filtering degradation \citep[e.g.][]{maceachern1999sequential, gilks2001following, dau2022waste}. However, the specific SMC samplers used may not be appropriate for all applications, e.g.~high dimensional discrete parameters spaces pose challenges for the choice of proposal.

We focus on two MCMC strategies for recursive Bayesian updates: Sequential MCMC \citep[SMCMC,][]{yang2013sequential} and Prior-Proposal-Recursive Bayes \citep[PPRB,][]{hooten2021making}. SMCMC uses an existing sample as initial values for independent Markov chains targeting the updated posterior. PPRB is a multistage MCMC method that filters existing samples through independent Metropolis-Hastings (MH) steps based on new data.

In this paper, we propose Generative Filtering, a method for recursive Bayesian inference in a streaming data setting that retains the speed benefits of filtering while avoiding the associated sample degradation. In Section \ref{sec:pprb-degradation} we derive theoretical bounds for the approximation error introduced through successive applications of PPRB. In Section \ref{sec:generative-filtering} we define Generative Filtering and describe Generative Filtering's relationship to the existing methods of SMCMC and PPRB. We also provide theoretical upper bounds on the convergence of Generative Filtering to the target posterior, and derive conditions to reduce data storage requirements. In Section \ref{sec:simulation-studies} we apply Generative Filtering to two numerical examples and evidence the inferential and computational benefits. In Section \ref{sec:application} we apply Generative Filtering to a publicly available data set of Steller sea lion (\emph{Eumetopias jubatus}) pup counts in Alaska, USA. We next provide necessary background information and notation that ground our development of Generative Filtering.

\hypertarget{sequential-markov-chain-monte-carlo}{%
\subsection{Sequential Markov Chain Monte Carlo}\label{sequential-markov-chain-monte-carlo}}

Sequential MCMC (SMCMC) is an algorithm to sample from a sequence of probability distributions, corresponding to posterior distributions at different times in streaming data contexts \citep{yang2013sequential}. Consider a model,
\begin{equation}
\bm{y}_1 \sim p(\bm{y}_1 | \bm{\theta}), \quad 
\bm{y}_2 \sim p(\bm{y}_2 | \bm{\theta}, \bm{\phi}, \bm{y}_1), \quad
\bm{\theta} \sim p(\bm{\theta}), \quad
\bm{\phi} \sim p(\bm{\phi} | \bm{\theta}),
\label{eqn:streaming-model-basic}
\end{equation}
where \(\bm{y}_1\) is available first and \(\bm{y}_2\) is available later. The parameters, \(\bm{\theta}\), encode the distribution of all data, while the parameters, \(\bm{\phi}\), encode only the distribution of \(\bm{y}_2\). SMCMC operates on an ensemble of samples, \(\{\bm{\theta}_s\}_{s=1}^S\), from the posterior distribution, \(p(\bm{\theta}|\bm{y}_1)\). First, a jumping kernel is applied to each sample \(\bm{\theta}_s\) in parallel to append a value \(\bm{\phi}_s\). Then, in parallel with each pair \((\bm{\theta}_s, \bm{\phi}_s)\) as an initial value, a transition kernel is applied \(m_t\) times. The transition kernel targets the posterior distribution \(p(\bm{\theta}, \bm{\phi}|\bm{y}_1, \bm{y}_2)\). Namely, for each initialization value \((\bm{\theta}_s, \bm{\phi}_s)^{(0)} = (\bm{\theta}_s, \bm{\phi}_s)\) for \(i = 1, \dots, m_t\) a sample is drawn \((\bm{\theta}_s, \bm{\phi}_s)^{(i)}\) from the density \(p\left(\bm{\theta}, \bm{\phi}|\bm{y}_1, \bm{y}_2, \bm{\theta}_s^{(i-1)}, \bm{\phi}_s^{(i-1)}\right) = T\left(\{\bm{\theta}_s^{(i-1)}, \bm{\phi}_s^{(i-1)}\}, \{\bm{\theta}, \bm{\phi}\}\right)\). The transition kernel \(T\) is defined so that the posterior \(p(\bm \theta, \phi|\bm y_1 \bm y_2)\) is the stationary measure of the Markov chain with transition kernel \(T\), i.e.~
\[
p(\bm \theta^{'}, \phi^{'}|\bm y_1 \bm y_2) = \int p(\bm \theta, \phi|\bm y_1 \bm y_2) T(\{\bm \theta, \phi\}, \{\bm \theta^{'}, \phi^{'}\}) d\{\bm \theta, \phi\}.
\]
After \(m_t\) iterations, the final value of each of the \(S\) parallel chains, \(\left\{(\bm{\theta}_s, \bm{\phi}_s)^{(m_t)}\right\}_{s = 1}^S\), is saved and comprise a sample to approximate the posterior distribution \(p(\bm{\theta}, \bm{\phi}|\bm{y}_1,\bm{y}_2)\).

\hypertarget{prior-proposal-recursive-bayes}{%
\subsection{Prior-Proposal-Recursive Bayes}\label{prior-proposal-recursive-bayes}}

Prior-Proposal-Recursive Bayes (PPRB) is a method for performing recursive Bayesian updates using existing samples from the previous posterior distribution \citep{hooten2021making}. Consider the general model in Eq. (\ref{eqn:streaming-model-basic}), but simplified so that there is no parameter \(\bm{\phi}\) and both data distributions are characterized by \(\bm{\theta}\). In PPRB the posterior samples, \(\{\bm{\theta}_s\}_{s=1}^S\), from \(p(\bm{\theta}|\bm{y}_1)\) are used as independent MH proposals in an MCMC to sample from the target \(p(\bm{\theta}|\bm{y}_1, \bm{y}_2)\), where the MH acceptance ratio for a proposal \(\bm{\theta}^\ast\) and a current value \(\bm{\theta}\) simplifies to
\begin{equation*}
\alpha = \min\left(\frac{p(\bm{y_2}|\bm{\theta}^\ast, \bm{y}_1)}{p(\bm{y_2}|\bm{\theta}, \bm{y}_1)}, 1\right).
\end{equation*}

One limitation of this method is its inability to account for a changing parameter space in a Bayesian update, represented in Eq. (\ref{eqn:streaming-model-basic}) by the parameter \(\bm{\phi}\). \citet{hooten2021making} propose first drawing \(\bm{\phi}_s\) from either the prior \(p(\bm{\phi}|\bm{\theta}_s)\) or the predictive distribution \(p(\bm{\phi}|\bm{\theta}_s, \bm{y}_1)\), to produce augmented samples \(\{(\bm{\theta}_s, \bm{\phi}_s)\}_{s=1}^S\) which are from \(p(\bm{\theta}, \bm{\phi}|\bm{y}_1)\). Then with these samples as proposals, proceed with PPRB as usual. \citet{taylor2023fast} note that this sample augmentation approach is problematic in situations where the prior \(p(\bm{\phi}|\bm{\theta}_s)\) or predictive distribution \(p(\bm{\phi}|\bm{\theta}_s, \bm{y}_1)\) is diffuse, because values drawn from these distributions are poor proposals for the full-conditional distribution \(p(\bm{\phi}|\bm{\theta}_s, \bm{y}_1, \bm{y}_2)\), leading to low acceptance rates of the independent MH proposals. This problem can arise, for example, if the prior \(p(\bm{\phi}|\bm{\theta}_s) = p(\bm{\phi})\) has no dependence on \(\bm{\theta}_s\). The authors propose PPRB-within-Gibbs as an adaptation to PPRB.

\begin{definition}
\protect\hypertarget{def:pprb-within-gibbs-2step}{}\label{def:pprb-within-gibbs-2step}

\textbf{PPRB-within-Gibbs}. Consider the streaming model defined in Eq. (\ref{eqn:streaming-model-basic}). Let there be existing posterior samples, \(\{\boldsymbol \theta^s\}_{s=1}^S\) from the distribution \(p(\boldsymbol \theta | \boldsymbol y_1)\). Then for the desired number of posterior samples,

\begin{enumerate}
\def\labelenumi{\arabic{enumi}.}
\tightlist
\item
  (Initialization) Initialize \(\boldsymbol \phi\) by drawing a value from its prior \(p(\boldsymbol \phi | \boldsymbol \theta_{s^\prime})\) using a randomly drawn \(s^\prime \sim \text{Uniform}(\{1, \dots, S\})\).
\item
  (PPRB step) Propose a new value \(\boldsymbol \theta^\ast\) by drawing from the existing posterior samples \(\{\boldsymbol \theta^s\}_{s=1}^S\) with replacement. Accept or reject the proposal using the MH ratio \begin{equation}\alpha = \min\left(\frac{p(\boldsymbol y_2 | \boldsymbol y_1, \boldsymbol \theta^\ast, \boldsymbol \phi)}{p(\boldsymbol y_2 | \boldsymbol y_1, \boldsymbol \theta, \boldsymbol \phi)}\frac{p(\boldsymbol \phi | \boldsymbol \theta^\ast)}{p(\boldsymbol \phi | \boldsymbol \theta)}, 1\right),\label{eqn:pprb-within-gibbs-accept-ratio}\end{equation}
\item
  Update the parameter \(\boldsymbol \phi\) from the full-conditional distribution \(p(\boldsymbol \phi | \boldsymbol \theta, \boldsymbol y_1, \boldsymbol y_2)\).
\end{enumerate}

\end{definition}

A formal specification of this algorithm is given as Algorithm \ref{alg:pprb-within-gibbs-2step} in the supplemental material. The PPRB step acceptance ratio (Eq. \ref{eqn:pprb-within-gibbs-accept-ratio}) is now the product of two ratios: the ratio of the distribution of the new data, and the ratio of the conditional prior of the new parameters, \(\bm{\phi}\). If the prior of \(\bm{\phi}\) is not dependent on \(\bm{\theta}\), i.e., \(p(\bm{\phi}|\bm{\theta}) = p(\bm{\phi})\), then this second ratio cancels. Further, if the data \(\bm{y}_1\) and \(\bm{y}_2\) are conditionally independent, the first ratio does not depend on \(\bm{y}_1\). \citet{taylor2023fast} originally proposed a three-step variant with a separate update for components within \(\bm{\theta}\) with conjugate full-conditionals. The simpler two-step PPRB-within-Gibbs algorithm is similar to Metropolis-Hastings Importance Resampling \citep{berzuini1997dynamic}, which was developed to address the growing parameter space complexity of dynamic conditional independence models. Unless otherwise noted, we refer to the algorithm in Definition \ref{def:pprb-within-gibbs-2step} as PPRB-within-Gibbs.

\hypertarget{sec:pprb-degradation}{%
\section{Filtering Degradation}\label{sec:pprb-degradation}}

Degradation refers to the tendency for MCMC samples resulting from filtering methods to contain many repeated values due to rejected proposals. As a result, when using existing a finite number of samples resampled with replacement as proposals for the next update \citep[as suggested in][]{hooten2021making}, the number of unique values decreases for continuous distributions. Due to degradation, the performance of filtering methods suffer for the streaming data setting, where a potentially large number of Bayesian updates must be performed while resampling the same pool of samples. As the number of unique values decreases within a set number, \(S\), of posterior samples produced after each update, the ability of those \(S\) samples to approximate the posterior distribution degrades. We next give an intuitive explanation for the causes of filtering degradation, provide theoretical bounds for the error introduced, and demonstrate filtering degradation via simulation in Section \ref{sec:simulation-studies}, all for the PPRB algorithm.

Each application of PPRB to perform a streaming update introduces error in the approximation to the updated posterior. To see this, consider a simple model with parameters \(\bm{\theta}\) and data partitioned in time, \(\bm{y}_t\) for \(t=1,2,\dots\). A finite number of samples \(\{\bm{\theta}^{(1)}_s\}_{s=1}^S\) are drawn from \(p(\bm{\theta}|\bm{y}_1)\). At time \(t=2\), \(\bm{y}_2\) are available and the posterior \(p(\bm{\theta}|\bm{y}_1,\bm{y}_2)\) must be approximated. By resampling \(\{\bm{\theta}_s^{(1)}\}_{s=1}^S\) with replacement as PPRB proposals, the true proposal distribution is the empirical posterior distribution of these samples, which we call \(F_S^{(1)}(\cdot)\). Of course, if an infinite number of samples are available after the first application of PPRB, \(F_S^{(1)}(\cdot)\) converges uniformly to the true distribution. However, in an actual application of PPRB, only a finite number of samples are obtained, resulting in an approximation. During the first PPRB update at \(t=2\), the target distribution is thus
\begin{equation*}
p(\bm{y}_2|\bm{\theta},\bm{y}_1)F_S^{(1)}(\bm{\theta}) \approx p(\bm{y}_2|\bm{\theta},\bm{y}_1)p(\bm{\theta}|\bm{y}_1) \propto p(\bm{\theta}|\bm{y}_1,\bm{y}_2),
\end{equation*}
an approximation of the desired posterior at \(t=2\). \(S\) samples, \(\{\bm{\theta}^{(2)}_s\}_{s=1}^S\), are drawn from this approximate posterior, which yields another empirical posterior distribution \(F^{(2)}_S(\cdot)\). The second PPRB update at \(t=3\) targets an approximation of an approximation of the desired posterior at \(t=3\),
\begin{align*}
p(\bm{y}_3|\bm{\theta},\bm{y}_1,\bm{y}_2)F_S^{(2)}(\bm{\theta}) &\approx p(\bm{y}_3|\bm{\theta},\bm{y}_1,\bm{y}_2)p(\bm{y}_2|\bm{\theta},\bm{y}_1)F_S^{(1)}(\bm{\theta}) \\
&\approx p(\bm{y}_3|\bm{\theta},\bm{y}_1,\bm{y}_2)p(\bm{y}_2|\bm{\theta},\bm{y}_1)p(\bm{\theta}|\bm{y}_1) \propto p(\bm{\theta}|\bm{y}_1,\bm{y}_2,\bm{y}_3).
\end{align*}

Intuitively, as this process repeats in subsequent updates, an accumulation of approximation error occurs. Whatever acceptable level of approximation for using PPRB once is surpassed in using PPRB a second time or beyond.

\hypertarget{sec:bounds-on-pprb-approximation-error}{%
\subsection{Bounds on PPRB Approximation Error}\label{sec:bounds-on-pprb-approximation-error}}

We derive upper and lower bounds on the PPRB approximation error in a streaming setting where PPRB is applied sequentially by resampling the finite number of samples produced at a previous stage as proposals in the next stage. We start by defining notation,
\begin{align}
\pi_t &= p(\bm{\theta}|\bm{y}_1,\dots,\bm{y}_t), \ \text{true posterior at time $t$}, \label{eqn:inequality-terms-start} \\
A_t &\propto p(\bm{y}_t|\bm{\theta},\bm{y}_1,\dots,\bm{y}_{t-1})F_S^{(t-1)}(\bm{\theta}), \ \text{approximate PPRB posterior at time $t$}, \\
F_S^{(t)} &= \text{empirical distribution of $S$ samples drawn from $A_t$}. \label{eqn:inequality-terms-end}
\end{align}
The quantity of interest is then \(\left\|A_t - \pi_t\right\|\), where \(\|\cdot\|\) is a norm on probability distributions. The following upper and lower bounds exist for this quantity.

\begin{theorem}
\protect\hypertarget{thm:pprb-error-bounds}{}\label{thm:pprb-error-bounds}Let \(\pi_t\), \(A_t\), and \(F_S^{(t)}\) be defined as in Eq. (\ref{eqn:inequality-terms-start})-(\ref{eqn:inequality-terms-end}) and let \(\|\cdot\|\) be a norm on probability measures. Then,
\begin{align}
\left\|A_t - \pi_t\right\| &\leq \underbrace{\left\|A_{t-1} - \pi_{t-1}\right\|}_{(1)} + \underbrace{\left\|F_S^{(t-1)} - A_{t-1}\right\|}_{(2)} + \underbrace{\left\|\pi_t - \pi_{t-1}\right\|}_{(3)} + \underbrace{\left\|A_t - F_S^{(t-1)}\right\|}_{(4)} \label{eqn:pprb-error-upper-bound} \\
\left\|A_t - \pi_t\right\| &\geq \bigg\lvert \underbrace{\left\|A_{t-1} - \pi_{t-1}\right\|}_{(1)} - \underbrace{\left\|F_S^{(t-1)} - A_{t-1}\right\|}_{(2)} \bigg\rvert - \underbrace{\left\|\pi_t - \pi_{t-1}\right\|}_{(3)} - \underbrace{\left\|A_t - F_S^{(t-1)}\right\|}_{(4)} \label{eqn:pprb-error-lower-bound}
\end{align}
\end{theorem}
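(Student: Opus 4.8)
The plan is to derive both inequalities purely from the triangle inequality and its reverse form, treating $A_{t-1}$, $\pi_{t-1}$, and $F_S^{(t-1)}$ as intermediate waypoints along a path from $A_t$ to $\pi_t$. No structural properties of the PPRB update are needed — only the norm axioms (in particular symmetry, $\|x-y\|=\|y-x\|$), so the statement holds for an arbitrary norm $\|\cdot\|$ on probability measures.

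For the upper bound I would insert the three waypoints in the order $A_t \to F_S^{(t-1)} \to A_{t-1} \to \pi_{t-1} \to \pi_t$ and apply the triangle inequality once per segment:
\[
\|A_t - \pi_t\| \leq \|A_t - F_S^{(t-1)}\| + \|F_S^{(t-1)} - A_{t-1}\| + \|A_{t-1} - \pi_{t-1}\| + \|\pi_{t-1} - \pi_t\|.
\]
Identifying the four summands with terms (4), (2), (1), and (3) respectively, and using symmetry to rewrite $\|\pi_{t-1}-\pi_t\| = \|\pi_t - \pi_{t-1}\|$, yields exactly \eqref{eqn:pprb-error-upper-bound}.

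For the lower bound I would proceed in two stages. First, rearranging the triangle inequality applied to the triple $F_S^{(t-1)}, A_t, \pi_{t-1}$ (with $\pi_t$ inserted) peels off terms (4) and (3):
\[
\|A_t - \pi_t\| \geq \|F_S^{(t-1)} - \pi_{t-1}\| - \|A_t - F_S^{(t-1)}\| - \|\pi_t - \pi_{t-1}\|.
\]
Second, I would lower-bound the remaining distance $\|F_S^{(t-1)} - \pi_{t-1}\|$ via the reverse triangle inequality, using the decomposition $F_S^{(t-1)} - \pi_{t-1} = \bigl(F_S^{(t-1)} - A_{t-1}\bigr) - \bigl(\pi_{t-1} - A_{t-1}\bigr)$ together with $\|u-v\| \geq \bigl\lvert \|u\| - \|v\| \bigr\rvert$, which gives
\[
\|F_S^{(t-1)} - \pi_{t-1}\| \geq \Bigl\lvert \|A_{t-1} - \pi_{t-1}\| - \|F_S^{(t-1)} - A_{t-1}\| \Bigr\rvert,
\]
i.e.\ the quantity $\lvert (1) - (2) \rvert$. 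Substituting this into the first-stage inequality produces \eqref{eqn:pprb-error-lower-bound}.

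The only point that requires care is the direction of the estimates in the lower bound: the intermediate term $\|F_S^{(t-1)} - \pi_{t-1}\|$ must be bounded from \emph{below}, so splitting it with an ordinary triangle inequality would give an upper estimate and push the argument the wrong way. The remedy is the specific reverse-triangle decomposition above, which is the essential step; everything else is direct substitution and bookkeeping of the four labelled terms.
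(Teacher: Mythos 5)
Your proposal is correct, and the upper bound is obtained exactly as in the paper: the triangle inequality along the chain $A_t \to F_S^{(t-1)} \to A_{t-1} \to \pi_{t-1} \to \pi_t$. For the lower bound your route is organized differently from the paper's, though it uses the same raw ingredients. The paper starts from the single reverse triangle inequality $\left\|A_t - \pi_t\right\| \geq \bigl\lvert \left\|A_t - A_{t-1}\right\| - \left\|A_{t-1} - \pi_t\right\| \bigr\rvert$ with $A_{t-1}$ as the pivot, expands the absolute value as a maximum of two terms, bounds the negative parts by ordinary triangle inequalities and the positive parts by reverse ones, and then discards two of the resulting four cases before recombining into an absolute value. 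You instead peel off terms (4) and (3) first, by rearranging the triangle inequality along $F_S^{(t-1)} \to A_t \to \pi_t \to \pi_{t-1}$ to get $\left\|A_t - \pi_t\right\| \geq \left\|F_S^{(t-1)} - \pi_{t-1}\right\| - \left\|A_t - F_S^{(t-1)}\right\| - \left\|\pi_t - \pi_{t-1}\right\|$, and then apply a single reverse triangle inequality with pivot $A_{t-1}$ to bound $\left\|F_S^{(t-1)} - \pi_{t-1}\right\|$ below by $\bigl\lvert \left\|A_{t-1} - \pi_{t-1}\right\| - \left\|F_S^{(t-1)} - A_{t-1}\right\| \bigr\rvert$. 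Both steps check out, and your version is shorter and avoids the case analysis over the maximum; the paper's version makes explicit that two of the four candidate lower bounds are being dropped, which is slightly more informative but not needed for the stated result. Your closing remark about needing a lower estimate on the intermediate distance, and hence the reverse rather than forward triangle inequality, correctly identifies the one place where the argument could go wrong.
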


\begin{proof}
Appendix \ref{sec:supplement-theorems}
\end{proof}

The form of these inequalities has the advantage that all terms on the right hand sides of Eq. (\ref{eqn:pprb-error-upper-bound}) and Eq. (\ref{eqn:pprb-error-lower-bound}) are easily interpreted in the context of streaming Bayesian updates:

\begin{enumerate}
\def\labelenumi{(\arabic{enumi})}
\tightlist
\item
  \(\left\|A_{t-1} - \pi_{t-1}\right\|\) is the existing approximation error at time \(t-1\).
\item
  \(\left\|F_S^{(t-1)} - A_{t-1}\right\|\) is the error introduced by producing a finite number, \(S\), of MCMC samples from the approximate posterior.
\item
  \(\left\|\pi_t - \pi_{t-1}\right\|\) is the difference in the true posterior after receiving \(\bm{y}_t\).
\item
  \(\left\|A_t - F_S^{(t-1)}\right\|\) is the difference in approximate distribution , from the prior \(F_S^{(t-1)}\) to the posterior \(A_t\) after receiving \(\bm{y}_t\).
\end{enumerate}

Together, Eq. (\ref{eqn:pprb-error-upper-bound}) and (\ref{eqn:pprb-error-lower-bound}) give upper and lower bounds for the approximation error at time \(t\), \(\left\|A_t - \pi_t\right\|\), which are especially useful for large \(t\). As \(t\to\infty\), the proportion of all data (\(\bm{y}_1,\dots,\bm{y}_t\)) contained in \(\bm{y}_t\) becomes smaller, so we have the intuition that the Bayesian updates will be smaller, in the sense that the difference between the ``prior'' and posterior goes to zero. Thus, \(\left\|\pi_t - \pi_{t-1}\right\| \to 0\) and \(\left\|A_t - F_S^{(t-1)}\right\| \to 0\) are reasonable to assume as \(t \to \infty\). Then \(\left\|A_t - \pi_t\right\|\) is approximately bounded by
\begin{align}
\left\|A_t - \pi_t\right\| &\lessapprox \left\|A_{t-1} - \pi_{t-1}\right\| + \left\|F_S^{(t-1)} - A_{t-1}\right\| \label{eqn:pprb-error-upper-bound-approx} \\
\left\|A_t - \pi_t\right\| &\gtrapprox \bigg\lvert \left\|A_{t-1} - \pi_{t-1}\right\| - \left\|F_S^{(t-1)} - A_{t-1}\right\| \bigg\rvert.  \label{eqn:pprb-error-lower-bound-approx}
\end{align}
For large \(t\), the inequalities in Eq. (\ref{eqn:pprb-error-upper-bound-approx}) and Eq. (\ref{eqn:pprb-error-lower-bound-approx}) show that the PPRB error at \(t\) is approximately bounded by a triangle inequality with \(\left\|A_{t-1} - \pi_{t-1}\right\|\), the approximation error at \(t-1\), and \(\left\|F_S^{(t-1)} - A_{t-1}\right\|\), the finite sample error at \(t-1\). If there exists some \(\epsilon > 0\) such that \(\left\|F_S^{(t)} - A_{t}\right\| > \epsilon\) for all \(t\), then the approximation error cannot decay to zero as \(t \to \infty\). For large \(t\), if \(\left\|A_{t} - \pi_{t}\right\| < \epsilon/2\) then \(\left\|A_{t+1} - \pi_{t+1}\right\| > \epsilon/2\). Similar results apply to PPRB-within-Gibbs, see Appendix \ref{sec:pprb-within-gibbs-approximation-error}.

\hypertarget{sec:generative-filtering}{%
\section{Generative Filtering}\label{sec:generative-filtering}}

We introduce Generative Filtering, an MCMC sampler for recursive Bayesian updates which avoids filtering degeneracy and achieves faster convergence than SMCMC. Consider data, \(\bm{y}\), partitioned into sequential batches, \(\bm{y}_t\) for \(t=1,2, \dots\), and data model
\begin{align*}
\bm{y}_1 &\sim p(\bm{y}_1 | \bm{\theta}_1), \\
\bm{y}_t &\sim p(\bm{y}_t | \bm{\theta}_1, \dots, \bm{\theta}_t, \bm{y}_1, \dots, \bm{y}_{t-1}), \text{ for $t \geq 2$}, \\
\bm{\theta}_1 &\sim p(\bm{\theta}_1), \quad
\bm{\theta}_t \sim p(\bm{\theta}_t | \bm{\theta}_1, \dots, \bm{\theta}_{t-1}), \text{ for $t \geq 2$}.
\end{align*}
That is, each batch of data has a distribution which can depend on some set of parameters and previously arrived batches of data. The parameters, \(\bm{\theta}\), are divided into batches, \(\bm{\theta}_t\) for \(t=1,2,\dots\), such that the batch \(\bm{y}_t\) of data depends only on the parameters \(\bm{\theta}_1, \dots, \bm{\theta}_t\). Each batch of parameters has a prior that may depend on parameters in previous batches. For convenience, define \(\bm{y}_{1:t} := (\bm{y}_1 \ \dots \ \bm{y}_t)\) and \(\bm{\theta}_{1:t} := (\bm{\theta}_1 \ \dots \ \bm{\theta}_t)\). The Generative Filtering algorithm begins at time \(t\), after the arrival of \(\bm{y}_t\).

\begin{definition}
\protect\hypertarget{def:generative-filtering}{}\label{def:generative-filtering}

\textbf{Generative Filtering}.

Let there be samples \(\{\bm{\theta}_{1:(t-1),s}\}_{s=1}^S\) from the posterior distribution \(p(\bm{\theta}_{1:(t-1)} | \bm{y}_{1:(t-1)})\). Let \(T_t\) be a transition kernel targeting the updated posterior distribution, \(p(\bm{\theta}_{1:t} | \bm{y}_{1:t})\). The Generative Filtering update consists of two steps:

\begin{enumerate}
\def\labelenumi{\arabic{enumi}.}
\tightlist
\item
  Perform a filtering step to produce a sample of size \(S\), \(\{\bm{\theta}^{\ast}_{1:t,s}\}_{s=1}^S\).
\item
  Using each \(\{\bm{\theta}^{\ast}_{1:t,s}\}_{s=1}^S\) as an initial value, apply the transition kernel \(T_t\) in parallel \(m_t\) times in \(S\) parallel chains, saving the final value of each chain.
\end{enumerate}

\end{definition}

A formal specification of this algorithm is given as Algorithm \ref{alg:generative-filtering} in the supplemental material. The resulting samples, \(\{\bm{\theta}_{1:t,s}\}_{s=1}^S\) approximate the updated posterior distribution, \(p(\bm{\theta}_{1:t} | \bm{y}_{1:t})\). The filtering method in step 1 is a method that resamples the ensemble \(\{\bm{\theta}_{1:(t-1),s}\}_{s=1}^S\) to produce samples approximating the distribution, \(p(\bm{\theta}_{1:t} | \bm{y}_{1:t})\). When the filtering method in step 1 is importance resampling \citep{rubin1988using} or the weighted bootstrap \citep{smith1992bayesian}, Generative Filtering encompasses the resample-move algorithm of \citet{gilks2001following}. When the filtering method is instead a PPRB-within-Gibbs update, the Generative Filtering algorithm can be seen as an extension of both PPRB-within-Gibbs and SMCMC. It extends PPRB-within-Gibbs by applying some number of transition kernel steps in parallel to each resulting sample and extends SMCMC by replacing the jumping kernel with a PPRB-within-Gibbs update. The transition kernel application requires for each initialization value \(\bm{\theta}_{1:t}^{(0)} = \bm{\theta}^*_{1:t,s}\) for \(i = 1, \dots, m_t\)
a sample is drawn \(\bm{\theta}_{1:t, s}^{(i)}\) from the density \(p\left(\bm{\theta}_{1:t}|\bm{y}_{1:t}, \bm{\theta}_{1:t,s}^{(i - 1)}\right) = T_t\left(\bm{\theta}_{1:t,s}^{(i - 1)}, \bm{\theta}_{1:t}\right)\). The transition kernel \(T_t\) is defined so that the posterior \(p(\bm{\theta}_{1:t}|\bm{y}_{1:t})\) is the stationary measure of the Markov chain with transition kernel \(T_t\). After \(m_t\) iterations, the final value of each of the \(S\) parallel chains, \(\left\{\bm{\theta}_{1:t, s}^{(m_t)}\right\}_{s = 1}^S\), is saved. The use of a filtering step rather than a jumping kernel allows for the update of all parameters in the ensemble instead of just new parameters. Generative Filtering seeks to quickly converge to and sample from its target distribution through the use of an existing posterior sample from an earlier posterior distribution, while avoiding the problem of filtering methods that degrade to degenerate distributions.

\hypertarget{convergence-results}{%
\subsection{Convergence Results}\label{convergence-results}}

We derive bounds for the convergence of Generative Filtering to its target posterior distribution, and find sufficient conditions under which Generative Filtering has a stronger convergence guarantee than SMCMC in fewer iterations of the transition kernel.

\begin{theorem}
\protect\hypertarget{thm:theorem39}{}\label{thm:theorem39}Let \(P^S_t(\bm{\theta}_{1:(t-1)},\cdot)\) represent the kernel resulting from \(S\) applications of a filtering method at time \(t\), which is a probability density for \(\bm{\theta}_{1:t} := (\bm{\theta}_{1:(t-1)}, \bm{\theta}_t)\). Let \(\pi_t = p(\bm{\theta}_{1:t}|\bm{y}_{1:t})\) be the target posterior at time \(t\). Assuming the following conditions:

\begin{enumerate}
\def\labelenumi{\arabic{enumi}.}
\tightlist
\item
  (Universal ergodicity) There exist \(\rho_t \in (0,1)\), such that for all \(t>0\) and \(x \in {\cal X}\), \[||T_t(x, \cdot) - \pi_t||_1 \leq 2\rho_t.\]
\item
  (Filtering consistency) For a sequence of \(\lambda_t \to 0\) and a bounded sequence of positive integers \(S_t\), the following holds: \[\underset{\bm{\theta}_{1:(t-1)}}{\sup}||\pi_t - P^{S_t}_t(\bm{\theta}_{1:(t-1)},\cdot)||_1 \leq 2\lambda_t.\]
\end{enumerate}

Let \(\epsilon_t = \rho_t^{m_t}\) and let \(Q_t = T_t^{m_t} \circ P_t^{S_t}\) be a Generative Filtering update at time \(t\). Then for any initial distribution \(\pi_0\),
\begin{equation*}||Q_t \circ \cdots \circ Q_1 \circ \pi_0 - \pi_t||_1 \leq \sum_{v=1}^t\left\{\prod_{u=v+1}^t \epsilon_u(1-\lambda_u)\right\}\epsilon_v\lambda_v \leq \sum_{v=1}^t\left\{\prod_{u=v}^t \epsilon_u\right\}\lambda_v.\end{equation*}
\end{theorem}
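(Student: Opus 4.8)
The plan is to collapse the $t$-fold composition into a single one-step recursive inequality and then unroll it by induction. Write $\mu_v = Q_v \circ \cdots \circ Q_1 \circ \pi_0$ for the ensemble distribution after the first $v$ Generative Filtering updates and set $D_v = \|\mu_v - \pi_v\|_1$. The whole theorem reduces to establishing the recursion
\begin{equation*}
D_t \;\le\; \epsilon_t\big[(1-\lambda_t)\,D_{t-1} + \lambda_t\big], \qquad t \ge 1,
\end{equation*}
with the constants read off from the total-variation scaling in Conditions 1--2; everything after that is algebra.

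I would treat the two halves of $Q_t = T_t^{m_t}\circ P_t^{S_t}$ separately. For the transition-kernel half, since $\pi_t$ is stationary for $T_t$, Condition 1 is a uniform (Doeblin-type) ergodicity that I would convert, via a standard uniform-ergodicity lemma, into a genuine per-step contraction toward $\pi_t$ with coefficient $\rho_t$, so that $\|T_t^{m_t}\nu - \pi_t\|_1 \le \rho_t^{m_t}\|\nu - \pi_t\|_1 = \epsilon_t\|\nu - \pi_t\|_1$ for every distribution $\nu$. Applying this with $\nu = \nu_t := P_t^{S_t}\mu_{t-1}$ reduces the recursion to controlling the filtering output $\|\nu_t - \pi_t\|_1$.

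The crux is the filtering bound $\|\nu_t - \pi_t\|_1 \le (1-\lambda_t)D_{t-1} + \lambda_t$. The point is that the filtering step resamples/reweights the previous ensemble, so its output carries over the input discrepancy $\|\mu_{t-1}-\pi_{t-1}\|_1$, while Condition 2 caps the fresh error it introduces by $\lambda_t$. I would make this precise through a mixture / maximal-coupling decomposition of the filtering step into a weight-$(1-\lambda_t)$ component that transports the input's error to $\pi_t$ (contributing $(1-\lambda_t)D_{t-1}$) and a weight-$\lambda_t$ component drawn consistently with $\pi_t$ (contributing $\lambda_t$), then bounding each piece by the triangle inequality. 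Composing with the contraction of the previous paragraph yields the displayed recursion. I expect this to be the main obstacle, because it is the only place where both hypotheses must be combined and where the exact convex weighting $(1-\lambda_t)$ versus $\lambda_t$ must be justified rather than merely asserted.

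Finally I would unroll by induction on $t$. The base case uses that the first update's filtering step removes all dependence on the initial distribution (Condition 2 at $t=1$ holds uniformly over its input), so $D_1 \le \epsilon_1\lambda_1$ and the recursion runs from $D_0 = 0$; this is exactly why the final bound does not depend on $\pi_0$. Unrolling $D_t \le \epsilon_t(1-\lambda_t)D_{t-1} + \epsilon_t\lambda_t$ telescopes to $\sum_{v=1}^t\{\prod_{u=v+1}^t \epsilon_u(1-\lambda_u)\}\epsilon_v\lambda_v$, which is the first claimed inequality. The second, looser inequality then follows by bounding each factor $(1-\lambda_u)\le 1$ and folding $\epsilon_v$ into the product as $\prod_{u=v}^t\epsilon_u$.
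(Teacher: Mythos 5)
Your proposal is correct and is essentially the paper's argument in different packaging: the paper (following Yang and Dunson) builds one global coupling of two inhomogeneous chains and obtains exactly your recursion $D_t \le \epsilon_t\lambda_t + \epsilon_t(1-\lambda_t)D_{t-1}$ as a bound on the meeting probability (the filtering error entering through $\tilde\alpha_t \le \lambda_t$ via their Lemma on $P^{S_t}_t$, the transition steps through $\rho_t^{m_t}$), then unrolls it just as you do, so your version merely localizes that coupling into the $T_t^{m_t}$-contraction lemma and the filtering step. The one imprecision is in your crux step: Condition 2 gives a total-variation bound, not a minorization of $P_t^{S_t}(\theta,\cdot)$ by $\pi_t$, so the convex weighting should arise from conditioning on whether a maximal coupling of the \emph{inputs} ($\mu_{t-1}$ against $\pi_{t-1}$) has met --- giving output disagreement probability at most $(1-d)\lambda_t + d\cdot 1 = \lambda_t + (1-\lambda_t)d$ with $d$ the input disagreement probability --- rather than from splitting the filtering kernel itself into weight-$(1-\lambda_t)$ and weight-$\lambda_t$ components; the resulting algebra is identical, and the remaining factor-of-two slack between $\|\cdot\|_1$ and the coupling probability is present in the paper's own proof as well.
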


\begin{remark}
When PPRB-within-Gibbs is used as the filtering method within Generative Filtering, the Filtering consistency condition contains two key requirements related to PPRB-within-Gibbs and the posterior distributions, \(\pi_1,\dots,\pi_t\). First, the PPRB-within-Gibbs transition kernel must produce an irreducible Markov chain. A sufficient condition for irreducibility is the positivity criterion for PPRB-within-Gibbs from \citet{taylor2023fast}. Second, in order for the sequence of \(S_t\) to be bounded, the difference between subsequent posteriors must not be too large as \(t\) increases, or the PPRB-within-Gibbs transition kernel must have fast enough convergence to overcome the differences.
\end{remark}

\begin{proof}
Appendix \ref{sec:supplement-theorems}
\end{proof}

\hypertarget{sec:comparison-bounds}{%
\subsubsection{Comparison of upper bounds}\label{sec:comparison-bounds}}

We next compare the upper bounds for convergence of SMCMC and Generative Filtering, providing conditions for when Generative Filtering's upper bound is at least as small as that of SMCMC. We compare the SMCMC algorithm with jumping kernel, \(J_t\), and transition kernel, \(T_t\), to the Generative Filtering update composed of a filtering step, \(P_t^{S_t}\), and the same transition kernel, \(T_t\).

\begin{theorem}
\protect\hypertarget{thm:bound-inequality}{}\label{thm:bound-inequality}Assume the following conditions hold:

\begin{enumerate}
\def\labelenumi{\arabic{enumi}.}
\tightlist
\item
  (Universal ergodicity) There exists \(\rho_t \in (0,1)\), such that for all \(t>0\) and \(x \in {\cal X}\), \[||T_t(x, \cdot) - \pi_t||_1 \leq 2\rho_t.\]
\item
  (Stationary convergence) The stationary distribution \(\pi_t\) of \(T_t\) satisfies \[\alpha_t = \frac{1}{2}||\pi_t - \pi_{t-1}||_1 \to 0,\] where \(\pi_t\) is the marginal posterior of \(\theta_{1:(t-1)}\) at time \(t\) in \(\alpha_t\).
\item
  (Filtering consistency) For a sequence of \(\lambda_t^{(F)} \to 0\) and a bounded sequence of positive integers \(S_t\), the following holds: \[\underset{\theta_{1:(t-1)}}{\sup}||\pi_t - P^{S_t}_t(\theta_{1:(t-1)},\cdot)||_1 \leq 2\lambda_t^{(F)}.\]
\item
  (Jumping consistency) For a sequence of \(\lambda_t^{(J)} \to 0\), the following holds: \[\underset{\theta_{1:(t-1)}}{\sup} ||\pi_t(\cdot | \theta_{1:(t-1)}) - J_t(\theta_{1:(t-1)}, \cdot)||_1 \leq 2\lambda_t^{(J)}.\]
\end{enumerate}

Let \(\epsilon_t = \rho_t^{m_t}\). Define \[\gamma^{(F)}_t = \sum_{v=1}^t\left\{\prod_{u=v+1}^t \epsilon_u(1-\lambda^{(F)}_u)\right\}\epsilon_v\lambda^{(F)}_v\] and \[\gamma^{(J)}_t = \sum_{v=1}^t\left\{\prod_{u=v}^t \epsilon_u\right\}(\lambda^{(J)}_v + \alpha_v)\] to be the bounds from Theorem \ref{thm:theorem39} and Theorem 3.9 of \citet{yang2013sequential}, respectively. If, for all \(u \leq t\), \(\lambda^{(F)}_u \leq \alpha_u + \lambda^{(J)}_u\), then \(\gamma_t^{(F)} \leq \gamma_t^{(J)}\).
\end{theorem}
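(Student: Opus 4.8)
The plan is to prove the inequality by a direct term-by-term comparison of the two sums, which becomes transparent once a common product of $\epsilon_u$'s is factored out of each summand. First I would rewrite the SMCMC bound $\gamma_t^{(J)}$ using $\prod_{u=v}^t \epsilon_u = \epsilon_v\prod_{u=v+1}^t \epsilon_u$, and observe that the Generative Filtering bound already carries the factor $\epsilon_v$ outside its internal product. In this form the $v$-th summand of $\gamma_t^{(F)}$ is $\left(\prod_{u=v}^t \epsilon_u\right)\lambda_v^{(F)}\prod_{u=v+1}^t\left(1-\lambda_u^{(F)}\right)$, while the $v$-th summand of $\gamma_t^{(J)}$ is $\left(\prod_{u=v}^t \epsilon_u\right)\left(\lambda_v^{(J)}+\alpha_v\right)$, so that both sums share the leading factor $\prod_{u=v}^t \epsilon_u$ in each term.

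With this common factor isolated, the comparison reduces to the scalar inequality, for each fixed $v \le t$,
\[
\lambda_v^{(F)}\prod_{u=v+1}^t\left(1-\lambda_u^{(F)}\right) \;\le\; \lambda_v^{(J)}+\alpha_v .
\]
To establish it I would use two observations. First, each consistency parameter $\lambda_u^{(F)}$ lies in $[0,1]$ --- the regime in which the bound of Theorem \ref{thm:theorem39} is informative --- so every factor $1-\lambda_u^{(F)}$ lies in $[0,1]$, whence the product is at most $1$ and $\lambda_v^{(F)}\prod_{u=v+1}^t(1-\lambda_u^{(F)}) \le \lambda_v^{(F)}$. Second, the hypothesis $\lambda_u^{(F)} \le \alpha_u + \lambda_u^{(J)}$ for all $u \le t$, applied at $u=v$, gives $\lambda_v^{(F)} \le \lambda_v^{(J)} + \alpha_v$. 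Chaining these yields the scalar inequality. The boundary case $v=t$ is automatic, since then the product $\prod_{u=t+1}^t(1-\lambda_u^{(F)})$ is empty and equals $1$.

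To conclude, I would multiply the scalar inequality by the common factor $\prod_{u=v}^t \epsilon_u$, which is strictly positive because $\epsilon_u = \rho_u^{m_u}$ with $\rho_u \in (0,1)$ under the universal ergodicity condition; positivity preserves the direction of the inequality. Summing the resulting per-term inequalities over $v = 1,\dots,t$ then gives $\gamma_t^{(F)} \le \gamma_t^{(J)}$.

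I do not anticipate a substantive obstacle, as the argument is a straightforward term-by-term domination. The only points demanding care are bookkeeping ones: aligning the two product ranges so the factor $\prod_{u=v}^t \epsilon_u$ can be pulled out of both summands, and checking the sign conditions --- nonnegativity of the $\lambda$'s and $\alpha$'s together with $\lambda_u^{(F)} \le 1$ and positivity of the $\epsilon$'s --- which are exactly what justify bounding the product of $(1-\lambda_u^{(F)})$ factors by $1$ and preserving the inequality under multiplication and summation.
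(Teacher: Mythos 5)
Your proposal is correct and follows essentially the same route as the paper's proof: bound each factor $1-\lambda_u^{(F)}$ by $1$ to reduce the $v$-th summand to $\left(\prod_{u=v}^t \epsilon_u\right)\lambda_v^{(F)}$, then apply the hypothesis $\lambda_v^{(F)} \leq \alpha_v + \lambda_v^{(J)}$ termwise. The only difference is presentational --- you phrase it as a per-term scalar comparison while the paper manipulates the whole sum in one chain of inequalities --- and your explicit remarks on nonnegativity and $\lambda_u^{(F)} \leq 1$ make the implicit assumptions of the paper's first inequality step slightly more visible.
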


\begin{proof}
Appendix \ref{sec:supplement-theorems}
\end{proof}

Theorem \ref{thm:bound-inequality} gives a sufficient, but somewhat restrictive, condition on the convergence of relative pieces such that Generative Filtering's convergence is bounded more tightly than SMCMC. It reveals an interesting relationship between the use of filtering or the jumping kernel. For a fixed \(t \geq 1\), no matter how good the jumping kernel is, e.g., \(\lambda^{(J)}_t = 0\), there is a fixed \(\alpha_t > 0\) contributing to the SMCMC upper bound. When PPRB-within-Gibbs is used as the filtering method, there exists an \(S_t\) such that \(\lambda^{(F)}_t \leq \alpha_t \leq \alpha_t + \lambda^{(J)}_t\).

\begin{theorem}
\protect\hypertarget{thm:bound-recursive-inequality}{}\label{thm:bound-recursive-inequality}With the conditions and definitions of Theorem \ref{thm:bound-inequality}, assume \(\gamma^{(F)}_{t-1} = \gamma^{(J)}_{t-1}\) and define \(\gamma := \gamma^{(F)}_{t-1} = \gamma^{(J)}_{t-1}\). If \(\gamma < 1\) and \(\lambda^{(F)}_t \leq \frac{\alpha_t + \lambda^{(J)}_t}{1-\gamma}\), then \(\gamma^{(F)}_{t} \leq \gamma^{(J)}_{t}\). If \(\gamma \geq 1\) then \(\gamma^{(F)}_{t} \leq \gamma^{(J)}_{t}\) always.
\end{theorem}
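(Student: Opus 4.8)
The plan is to convert both bounds $\gamma^{(F)}_t$ and $\gamma^{(J)}_t$ into one-step recursions in $t$, substitute the common value $\gamma$ at time $t-1$, and reduce the desired inequality $\gamma^{(F)}_t \le \gamma^{(J)}_t$ to a single scalar comparison. First I would isolate the $v=t$ summand in each bound and factor the shared $\epsilon_t$ out of the remaining sum. For the SMCMC bound, since $\prod_{u=v}^t \epsilon_u = \epsilon_t \prod_{u=v}^{t-1}\epsilon_u$ for every $v \le t-1$ and the $v=t$ term is $\epsilon_t(\lambda^{(J)}_t + \alpha_t)$, this gives
\[
\gamma^{(J)}_t = \epsilon_t\Big[(\lambda^{(J)}_t + \alpha_t) + \gamma^{(J)}_{t-1}\Big].
\]
For the Generative Filtering bound the $v=t$ term is $\epsilon_t\lambda^{(F)}_t$ (the product $\prod_{u=t+1}^t$ being empty), while for $v \le t-1$ we have $\prod_{u=v+1}^t \epsilon_u(1-\lambda^{(F)}_u) = \epsilon_t(1-\lambda^{(F)}_t)\prod_{u=v+1}^{t-1}\epsilon_u(1-\lambda^{(F)}_u)$, so factoring $\epsilon_t(1-\lambda^{(F)}_t)$ out of the tail yields
\[
\gamma^{(F)}_t = \epsilon_t\Big[\lambda^{(F)}_t + (1-\lambda^{(F)}_t)\,\gamma^{(F)}_{t-1}\Big].
\]

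Next I would invoke the hypothesis $\gamma^{(F)}_{t-1} = \gamma^{(J)}_{t-1} = \gamma$ to express both one-step bounds in terms of the single quantity $\gamma$. Since $\epsilon_t = \rho_t^{m_t} > 0$, dividing $\gamma^{(F)}_t \le \gamma^{(J)}_t$ through by $\epsilon_t$ and cancelling the shared additive $\gamma$ term reduces the entire claim to
\[
\lambda^{(F)}_t\,(1-\gamma) \le \lambda^{(J)}_t + \alpha_t,
\]
which is the heart of the argument.

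Finally I would split on the sign of $1-\gamma$. When $\gamma < 1$ the factor $1-\gamma$ is positive, so the displayed inequality is exactly $\lambda^{(F)}_t \le (\alpha_t + \lambda^{(J)}_t)/(1-\gamma)$, which is the stated hypothesis. When $\gamma \ge 1$ the factor $1-\gamma \le 0$; since $\lambda^{(F)}_t \ge 0$ (being an upper bound on an $L^1$ distance, hence nonnegative), the left-hand side is $\le 0 \le \lambda^{(J)}_t + \alpha_t$, so the inequality holds unconditionally. This covers both cases of the statement.

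The only care needed is in deriving the $\gamma^{(F)}$ recursion correctly, tracking the $(1-\lambda^{(F)}_u)$ factors through the factorization, and in recording that $\lambda^{(F)}_t$, $\lambda^{(J)}_t$, and $\alpha_t$ are all nonnegative since they bound norms. Everything after the two recursions is elementary algebra, so I expect no genuine obstacle; the main task is the bookkeeping in the factoring step, and the slightly subtle point is noticing that nonnegativity of $\lambda^{(F)}_t$ is what makes the $\gamma \ge 1$ case automatic.
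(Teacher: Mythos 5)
Your proposal is correct and follows essentially the same route as the paper: both derive the one-step recursions $\gamma^{(J)}_t = \epsilon_t\gamma^{(J)}_{t-1} + \epsilon_t(\alpha_t+\lambda^{(J)}_t)$ and $\gamma^{(F)}_t = \epsilon_t(1-\lambda^{(F)}_t)\gamma^{(F)}_{t-1} + \epsilon_t\lambda^{(F)}_t$, substitute the common value $\gamma$, rewrite the filtering bound as $\epsilon_t\lambda^{(F)}_t(1-\gamma)+\epsilon_t\gamma$, and close the two cases by the stated hypothesis when $\gamma<1$ and by nonnegativity of $\lambda^{(F)}_t$ when $\gamma\geq 1$. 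Your explicit note that nonnegativity of $\lambda^{(F)}_t$ is what drives the $\gamma\geq 1$ case is a small clarity improvement over the paper's terser chain of inequalities.
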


\begin{proof}
Appendix \ref{sec:supplement-theorems}
\end{proof}

Theorem \ref{thm:bound-recursive-inequality} shows a sufficient condition for a claim that is slightly weaker than that in Theorem \ref{thm:bound-inequality}, but as a result, the condition is less strict. Theorem \ref{thm:bound-recursive-inequality} is relevant to the relative gain in the bound of Generative Filtering over that of SMCMC, when both are starting from the same position. As a result we only need conditions on time \(t\), not all times \(u \leq t\). In the \(\gamma < 1\) case, this condition is less strict than the equivalent condition in Theorem \ref{thm:bound-inequality}, because of the denominator \(0 < 1 - \gamma \leq 1\). For larger \(\gamma\), the bound on the filtering step can be significantly worse than the bound on the jumping kernel and still result in a lower upper bound for the total Generative Filtering process. In the \(\gamma \geq 1\) case, when starting from the same condition, the upper bound for Generative Filtering at the next time point will \emph{always} be lower than that of SMCMC.

\hypertarget{transition-kernel-and-m_t-the-required-iterations}{%
\subsection{\texorpdfstring{Transition kernel and \(m_t\), the required iterations}{Transition kernel and m\_t, the required iterations}}\label{transition-kernel-and-m_t-the-required-iterations}}

In the above Section \ref{sec:comparison-bounds}, we assume that Generative Filtering and SMCMC use the same transition kernel for the same number of iterations, \(m_t\). In this section, we describe under what conditions Generative Filtering can use fewer transition kernel iterations to achieve the same convergence. As in the previous section, we compare the SMCMC algorithm with jumping kernel, \(J_t\), and transition kernel, \(T_t\), to the Generative Filtering update composed of a filtering step, \(P_t^{S_t}\), and the same transition kernel, \(T_t\).

Lemmas 3.2 and 3.1 of \citet{yang2013sequential} together show that if a transition kernel \(T_t\) satisfies \(\underset{x}{\sup}||T_t(x, \cdot) - \pi_t||_1 \leq 2\rho_t\), then for a distribution \(p_0\) we have \(||T_t^{m_t} \circ p_0 - \pi_t||_1 \leq \rho_t^{m_t}||p_0 - \pi_t||_1\). We wish to compare the case when \(p_0 = J_t \circ \pi_{t-1}\) and when \(p_0 = P^{S_t}_t \circ \pi_{t-1}\). Thus, filtering results in a distribution closer to \(\pi_t\) than the jumping kernel, the result of Generative Filtering using the same transition kernel with the same number of steps as SMCMC achieves a smaller upper bound on its distance from \(\pi_t\) than SMCMC.

Now, consider the case when the transition kernel is applied a different number of times in Generative Filtering (\(m_t^{(F)}\)) than in SMCMC (\(m_t^{(S)}\)). If
\begin{equation}
m_t^{(F)} \geq m_t^{(S)} - \frac{\log\left(||J_t \circ \pi_{t-1} - \pi_t||_1 / || P^{S_t}_t \circ \pi_{t-1} - \pi_t||_1\right)}{\log(1/\rho_t)}, \label{eqn:transition-kernel-step-inequality}
\end{equation}
then \(\rho_t^{m^{(F)}_t}||P^{S_t}_t \circ \pi_{t-1} - \pi_t||_1 \leq \rho_t^{m^{(S)}_t}||J_t \circ \pi_{t-1} - \pi_t||_1\). Since \(\rho_t < 1\), \(\log(1/\rho_t) > 0\) and if \(||P^{S_t}_t \circ \pi_{t-1} - \pi_t||_1 < ||J_t \circ \pi_{t-1} - \pi_t||_1\), then
\begin{equation*}
\frac{\log\left(||J_t \circ \pi_{t-1} - \pi_t||_1 / || P^{S_t}_t \circ \pi_{t-1} - \pi_t||_1\right)}{\log(1/\rho_t)} > 0.
\end{equation*}
Generative Filtering can achieve a lower convergence bound than SMCMC with either more transition kernel steps, or some smaller number of steps determined by Eq. (\ref{eqn:transition-kernel-step-inequality}).

In order for fewer transition kernel steps to be required, we need
\begin{equation*}
\frac{\log\left(||J_t \circ \pi_{t-1} - \pi_t||_1 / || P^{S_t}_t \circ \pi_{t-1} - \pi_t||_1\right)}{\log(1/\rho_t)} \geq k,
\end{equation*}
where the integer \(k \geq 1\) is the number of steps difference, which is true if and only if
\begin{equation}
\frac{|| P^{S_t}_t \circ \pi_{t-1} - \pi_t||_1}{||J_t \circ \pi_{t-1} - \pi_t||_1} \leq \rho_t^k. \label{eqn:transition-kernel-step-condition}
\end{equation}
This reveals a relationship between the gains of PPRB-within-Gibbs over the jumping kernel, \(|| P^{S_t}_t \circ \pi_{t-1} - \pi_t||_1/||J_t \circ \pi_{t-1} - \pi_t||_1\), and the mixing of the transition kernel, \(\rho_t\). If the transition kernel is weakly mixing, \(\rho_t \approx 1\), PPRB-within-Gibbs can result in a larger reduction in the number of required transition kernel steps over the SMCMC jumping kernel than in the case when the transition kernel is strongly mixing, \(\rho_t \ll 1\).

While Eq. (\ref{eqn:transition-kernel-step-condition}) provides a sufficient condition for Generative Filtering to require fewer transition kernel steps for an equivalent convergence bound to SMCMC, a lower number of transition kernel steps does not necessarily mean Generative Filtering takes less time to complete. This is because while SMCMC's jumping kernel can be applied to each ensemble sample in parallel, filtering methods such as PPRB-within-Gibbs are inherently sequential. Thus when many cores are available for computation, the filtering step of Generative Filtering is a potential bottleneck. The improvement in convergence provided by PPRB-within-Gibbs may or may not overcome the time saved by parallelizing the jumping kernel in SMCMC. In Section \ref{sec:simulation-studies} and Section \ref{sec:application}, we demonstrate both cases.

\hypertarget{streaming-data-storage-considerations}{%
\subsection{Streaming Data Storage Considerations}\label{streaming-data-storage-considerations}}

In this section, we provide conditions under which a transition kernel at time \(t\) can be applied without requiring storage of the full data through time \(t\). In general, we want a sufficient statistic, \(U_t(\bm{y}_t)\), to exist for each batch of data, \(\bm{y}_1, \bm{y}_2, \dots\), so that the likelihood can be evaluated without the full data, and transition kernels can be applied while needing to store only the statistics \(U_t\) for \(t=1,2,\dots\). The data themselves are always a trivial sufficient statistic, however when \(\dim U_t \ll \dim \bm{y}_t\), storage of the sufficient statistics allows for significant reductions in data storage requirements and potentially faster computation of the transition kernel. We describe conditions that are sufficient to allow reduced data storage when applying the transition kernel.

\begin{theorem}
\protect\hypertarget{thm:data-exponential-family}{}\label{thm:data-exponential-family}Assume:

\begin{enumerate}
\def\labelenumi{\arabic{enumi}.}
\tightlist
\item
  The data \(\bm{y}_{t_1}\) and \(\bm{y}_{t_2}\), for all \(t_1 < t_2\), are conditionally independent given \(\bm{\theta}_{1:t_2}\).
\item
  Each \(\bm{y}_t\) is a sample of \(n_t\) i.i.d. observations \(\bm{y}_{t,i}\) for \(i=1,\dots,n_t\).
\item
  Each observation \(\bm{y}_{t,i}\) comes from an exponential family distribution.
\end{enumerate}

Then storage of the full data can be avoided through the use of sufficient statistics.
\end{theorem}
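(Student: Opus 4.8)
The plan is to show that the posterior density $p(\bm{\theta}_{1:t}\mid\bm{y}_{1:t})$, which the transition kernel $T_t$ targets, depends on the data only through a low-dimensional summary for each batch, and then to note that any kernel built from evaluations or ratios of this target (Metropolis--Hastings, Gibbs) can be run from those summaries alone. By Bayes' theorem, $p(\bm{\theta}_{1:t}\mid\bm{y}_{1:t}) \propto p(\bm{\theta}_{1:t})\,p(\bm{y}_{1:t}\mid\bm{\theta}_{1:t})$, so it suffices to control the likelihood. The model specifies $\bm{y}_k \sim p(\bm{y}_k\mid\bm{\theta}_{1:k},\bm{y}_{1:(k-1)})$, but Assumption~1 (conditional independence across batches) removes the dependence on past data, giving $p(\bm{y}_k\mid\bm{\theta}_{1:k},\bm{y}_{1:(k-1)}) = p(\bm{y}_k\mid\bm{\theta}_{1:k})$ and hence the batch factorization
\[
p(\bm{y}_{1:t}\mid\bm{\theta}_{1:t}) = \prod_{k=1}^t p(\bm{y}_k\mid\bm{\theta}_{1:k}).
\]

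Next I would reduce each batch factor. By Assumption~2 (i.i.d.\ within a batch), $p(\bm{y}_k\mid\bm{\theta}_{1:k}) = \prod_{i=1}^{n_k} p(\bm{y}_{k,i}\mid\bm{\theta}_{1:k})$, and by Assumption~3 each observation density takes the exponential-family form in which the parameter enters only through a natural parameter $\bm{\eta}_k(\bm{\theta}_{1:k})$ and a log-normalizer $A_k(\bm{\theta}_{1:k})$, while the data enter only through a base measure $h_k$ and a statistic $\bm{T}_k(\bm{y}_{k,i})$. Collecting terms over $i$ yields
\[
p(\bm{y}_k\mid\bm{\theta}_{1:k}) = \Bigl[\prod_{i=1}^{n_k} h_k(\bm{y}_{k,i})\Bigr]\exp\Bigl\{\bm{\eta}_k(\bm{\theta}_{1:k})^\top U_k(\bm{y}_k) - n_k A_k(\bm{\theta}_{1:k})\Bigr\}, \qquad U_k(\bm{y}_k) := \sum_{i=1}^{n_k}\bm{T}_k(\bm{y}_{k,i}).
\]
This identifies $\bigl(U_k(\bm{y}_k),\, n_k\bigr)$ as a sufficient statistic for batch $k$, with $\dim U_k$ independent of $n_k$, which is exactly the regime $\dim U_t \ll \dim\bm{y}_t$ that makes storage reduction worthwhile.

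Finally I would assemble the full target. Multiplying over $k = 1,\dots,t$ and absorbing the product of base measures $\prod_{k,i} h_k(\bm{y}_{k,i})$ into the normalizing constant, since it does not involve $\bm{\theta}_{1:t}$, gives
\[
p(\bm{\theta}_{1:t}\mid\bm{y}_{1:t}) \propto p(\bm{\theta}_{1:t})\exp\Bigl\{\sum_{k=1}^t\bigl[\bm{\eta}_k(\bm{\theta}_{1:k})^\top U_k(\bm{y}_k) - n_k A_k(\bm{\theta}_{1:k})\bigr]\Bigr\}.
\]
Thus the target, up to proportionality, depends on the data only through $\{(U_k(\bm{y}_k),\, n_k)\}_{k=1}^t$. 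Because a Metropolis--Hastings acceptance ratio involves only the ratio $p(\bm{\theta}_{1:t}'\mid\bm{y}_{1:t})/p(\bm{\theta}_{1:t}\mid\bm{y}_{1:t})$ and a Gibbs full conditional is proportional to the same target, the kernel $T_t$ can be evaluated from the stored statistics without the raw observations, establishing the claim.

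I expect the main subtlety to lie not in the likelihood reduction, which is routine once Assumptions~1--3 are in place, but in the step where the kernel is argued to inherit sufficiency from the target. The substantive point is that \emph{applying} the transition kernel only ever requires the target up to a multiplicative constant, so the data-dependent base-measure factor cancels and only $\{(U_k(\bm{y}_k),n_k)\}$ survive. I would therefore phrase the conclusion for the density-ratio and full-conditional kernels actually used in Generative Filtering rather than for an arbitrary kernel, since a pathological kernel could in principle reference the raw data while still preserving the correct stationary distribution.
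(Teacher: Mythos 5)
Your proposal is correct and follows essentially the same route as the paper's proof: factor the likelihood using the conditional-independence and i.i.d.\ assumptions, invoke the exponential-family form so that $U_t(\bm{y}_t)=\sum_{i=1}^{n_t}T(\bm{y}_{t,i})$ is sufficient for each batch, and conclude that the transition kernel needs only these statistics. Your write-up is in fact somewhat more careful than the paper's, which asserts the final step (``any transition kernel can be computed while only storing the sufficient statistics'') without spelling out the batch factorization or the cancellation of the base measure in density ratios, details you make explicit.
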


\begin{proof}
Appendix \ref{sec:supplement-theorems}
\end{proof}

Note that in Theorem \ref{thm:data-exponential-family}, there is no requirement that the data distribution of \(\bm{y}_t\) be the same for each time \(t\), nor that the sufficient statistics \(U_t(\cdot)\) be the same function at each time \(t\). These conditions are thus flexible, applying to a wide range of streaming data settings including time-varying data sources, data collection techniques, or measurement devices that result in changing data distributions. Additionally, the Fisher-Darmois-Koopman-Pitman Theorem \citep{daum1986fisher} holds that any distribution with a smooth nowhere-vanishing density and finite-dimensional sufficient statistics is necessarily an exponential family distribution. Therefore, the exponential family encompasses most relevant distributions with sufficient statistics.

\hypertarget{sec:simulation-studies}{%
\section{Simulation Studies}\label{sec:simulation-studies}}

In this section, we apply Generative Filtering to two sets of simulated data, a state-space model and a streaming record linkage model, and compare to other streaming samplers for Bayesian updates with respect to speed and sampling accuracy.

\hypertarget{sec:simulation-gaussian}{%
\subsection{State-Space Model}\label{sec:simulation-gaussian}}

We analyze the following state-space model,
\begin{align*}
y_{t,i} &\sim \text{N}(\theta_t, \sigma^2), \text{ for $t=1,2,\dots$ and $i=1,\dots,n$} \\
\theta_1 &\sim \text{N}(0, \phi^2), \quad \theta_t | \bm \theta_{1:(t - 1)} \sim \text{N}(\theta_{t-1}, \phi^2), \text{ for $t>1$},
\end{align*}
where \(n\), \(\sigma^2\), and \(\phi^2\) are fixed. This model is a simple representation of a streaming data problem where a new state variable, \(\theta_t\), is required to parameterize the distribution of data \(\bm{y}_t\). As the data arrive sequentially, \(\bm{y}_1, \bm{y}_2, \dots\), updated estimates of the state variables, \(\bm{\theta}_{1:t}\), are desired through the posterior distribution of \(p(\bm{\theta}_{1:t}|\bm{y}_{1:t})\).

Data were generated at each combination of \(n=1,5,10,50\) and \(\sigma^2=0.25, 0.5, 1, 2, 4\), with \(\phi^2=1\). For practical reasons of generating the data, we generate data for \(t=1,\dots,T\) with \(T=100\). The values of \(n\) and \(\sigma^2\) create varying levels of signal from the data arriving at each time. For each of the 20 combinations of \(n\) and \(\sigma^2\), 20 sets of data were produced with differing random seeds by first drawing \(\bm{\theta}_{1:T}\) from \(\theta_1 \sim \text{N}(0, \phi^2)\) and \(\theta_t|\theta_{t-1} \sim N(\theta_{t-1},\phi^2)\) for \(2 \leq t \leq T\), then drawing \(\bm{y}_{1:T}\) from \(\bm{y}_t|\theta_t \overset{\text{i.i.d.}}{\sim} \text{N}(\theta_t, \sigma^2)\) for \(1 \leq t \leq T\).

\begin{figure}
\centering
\includegraphics{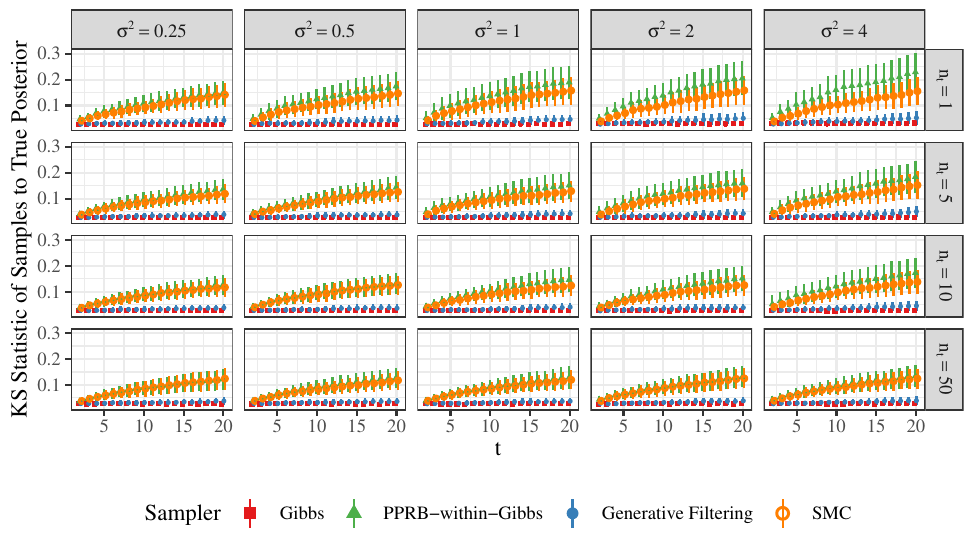}
\caption{\label{fig:simulation-plot-degrading}KS statistic (mean plus or minus standard deviation) for samples of \(\theta_1\) after repeated application of non-streaming Gibbs, PPRB-within-Gibbs, SMC, and Generative Filtering, for \(t=2,\dots,20\). Higher KS statistics indicate more MCMC error in the samples. After repeated applications, PPRB-within-Gibbs and SMC produce worse approximations to the posterior distribution while Generative Filtering does not.}
\end{figure}

For each simulated dataset, and for each value of \(t=1,\dots,T\), we estimated the posterior distribution of \(p(\bm{\theta}_{1:t}|\bm{y}_{1:t})\) first using four samplers: non-streaming Gibbs, PPRB-within-Gibbs, SMC, and Generative Filtering. Each component, \(\theta_\ell\), in \(\bm{\theta}_{1:t}\) has a conjugate Gaussian full-conditional distribution, making a Gibbs sampler a natural choice for a non-streaming MCMC. The Gibbs sampler produced samples directly from the posterior distribution, \(p(\bm{\theta}_{1:t}|\bm{y}_{1:t})\), for \(t=2,\dots,T\). Each Gibbs update was run for 1100 iterations, discarding the first 100 as burn-in with 10 independent chains.

The PPRB-within-Gibbs, SMC and Generative Filtering updates were sequentially applied in a streaming fashion using samples from the previous time point as the initial ensemble. The first streaming updates at time \(t=2\) used samples drawn independently from the true Gaussian posterior at \(t=1\), \(p(\theta_1|y_1)\), as their initial ensemble. For the update of the new state variable, \(\theta_t\), in PPRB-within-Gibbs, we used its full-conditional Gaussian distribution, \(\theta_t|\theta_{t-1},\bm{y}_t \sim \text{N}\left(V_t C_t,V_t\right)\) with \(V_t = \left(1/\phi^2 + n_t/\sigma^2\right)^{-1}\) and \(C_t = \theta_{t-1}/\phi^2 + \sum_{i=1}^{n_t}y_{t,i}/\sigma^2\). Each PPRB-within-Gibbs sampler was run for 1100 iterations, discarding the first 100 as burn-in. The PPRB-within-Gibbs updates that served as the filtering step in Generative Filtering used the same configuration. The transition kernel used in Generative Filtering updates all state variables \(\bm{\theta}_{1:t}\) simultaneously using a Metropolis random walk with multivariate Gaussian proposals. As the true posterior distribution is known, we choose a proposal distribution based on the adaptive Metropolis proposal of \citet{gelman1997weak} and \citet{haario2001adaptive}, i.e.~\(\bm{\theta}_{1:t}^\ast | \bm{\theta}_{1:t} \sim \text{N}(\bm{\theta}_{1:t}, 2.4^2\bm{\Sigma}/t)\), where \(\bm{\Sigma}\) is the true posterior covariance matrix of \(p(\bm{\theta}_{1:t}|\bm{y}_{1:t})\). We use a random walk Metropolis transition kernel to simulate a slowly-converging transition kernel, specifically, one which converges more slowly as the number of variables increases. This slow convergence can be an issue in many real-world MCMC applications. The transition kernel in Generative Filtering was applied \(m_t=5\) times for each update. For the SMC update we use the Bootstrap filter algorithm given in \citet{kantas2009overview} with the full-conditional \(p(\theta_t|\bm{y}_t,\theta_{t-1})\) as the importance density. This algorithm was developed to estimate the parameters of a state-space model. The full streaming update process from time \(t=2\) to \(t=T\) was run 10 times on each dataset, using 10 different random seeds.

To explore the phenomenon of filtering degradation in this simulated data, we compare the samples produced by Gibbs, PPRB-within-Gibbs, SMC and Generative Filtering to the true posterior distribution of \(p(\bm{\theta}_{1:t}|\bm{y}_{1:t})\). We used the max absolute difference between the true CDF and empirical CDF, i.e., the Kolmogorov-Smirnov test statistic, of each state variable, \(\theta_1, \dots, \theta_t\), as a measure of the MCMC error in the samples for that variable. Figure \ref{fig:simulation-plot-degrading} shows these measured errors for \(\theta_1\) at each time up to \(t=20\). As expected, the non-streaming Gibbs sampler is able to maintain a constant level of MCMC error at each time \(t=2,\dots,20\) because it is producing new samples from the posterior at each time. However, the MCMC error of PPRB-within-Gibbs and SMC increases with each update as the number of unique values of \(\theta_1\) in its samples decreases. Generative Filtering avoids this degradation, having MCMC error comparable to Gibbs.

\begin{figure}
\centering
\includegraphics{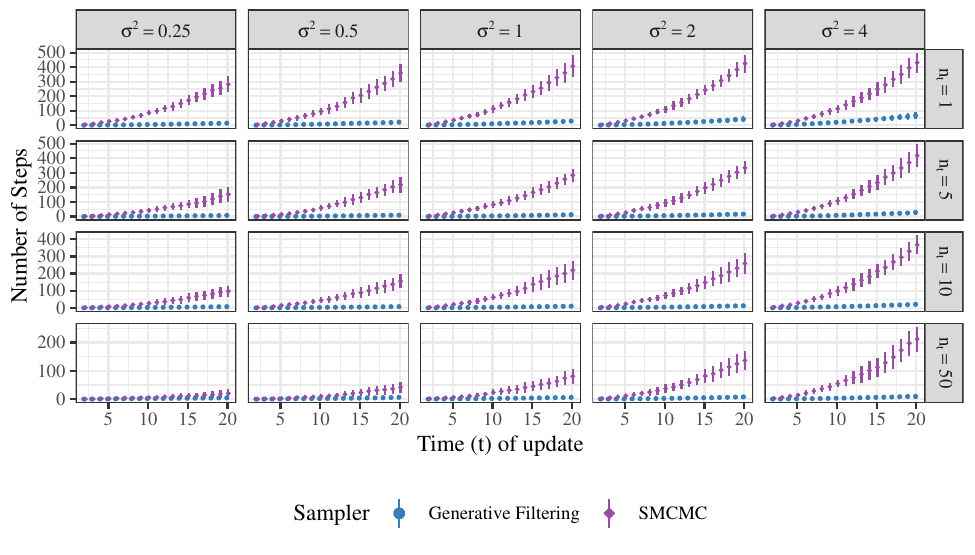}
\caption{\label{fig:simulation-plot-steps}Cumulative number (mean plus or minus standard deviation) of transition kernel steps to converge to the posterior distribution.}
\end{figure}

To investigate the convergence of Generative Filtering and SMCMC, we also compared the required number of transition kernel steps for SMCMC and Generative Filtering to estimate the state variables \(\bm{\theta}_{1:t}\) using the simulated data. At each time, \(t\), the jumping kernel for SMCMC and the update for \(\theta_t\) in the PPRB-within-Gibbs step of Generative Filtering were chosen to be the conjugate Gaussian full-conditional distribution, \(p(\theta_t|\theta_{t-1},\bm{y}_t)\). The Metropolis random walk transition kernel was used in both SMCMC and Generative Filtering, and the PPRB-within-Gibbs filtering step of Generative filtering was as described earlier. At each time \(t\) we began both Generative Filtering and SMCMC using the same initial ensemble of 1000 samples from \(p(\bm{\theta}_{1:(t-1)}|\bm{y}_{1:(t-1)})\) produced by the non-streaming Gibbs sampler. Thus, we are investigating the difference in minimum transition steps to converge for each method in individual updates, rather than over the course of all updates. This allows us to assess the effect of each method on convergence at every step.

We used an oracle stopping criteria to determine \(m_t\). Each sampler was stopped when the KS-statistics comparing the marginal empirical posterior distributions of \(\theta_t\) and \(\theta_{t-1}\) in the current ensemble to their respective true posterior distributions were both below 0.055, the critical value of the KS distribution. The cumulative number of transition kernel steps required to converge for each sampler up to each time is shown in Figure \ref{fig:simulation-plot-steps}. By \(t=20\), Generative Filtering has required many fewer transition kernel steps than SMCMC in all datasets due to its use of PPRB-within-Gibbs instead of the jumping kernel. However, while the jumping kernel in SMCMC is parallelizable, PPRB-within-Gibbs is inherently sequential. This presents a trade-off in environments with parallel computation available. We compare the cumulative runtime required for each method to converge in Appendix \ref{sec:appendix-simulation-study}. In this numerical experiment, for SMCMC to be faster than Generative Filtering on average in all settings would require 45 cores. A comparison to the correlation-based criteria to choose \(m_t\) of \citet{yang2013sequential} is in Appendix \ref{sec:appendix-simulation-study}.

\hypertarget{sec:simulation-rl}{%
\subsection{Streaming Record Linkage}\label{sec:simulation-rl}}

Record linkage is the task of consolidating records that refer to overlapping sets of entities. Often this task must be performed without the presence of a unique identifier. Noisy duplicated records present a problem for those who wish to use the data to make inferences. Bayesian record linkage estimates duplicate entities via a posterior distribution of linkages and provides natural uncertainty quantification \citep[e.g.,][]{tancredi2011hierarchical, steorts2016bayesian, sadinle2017bayesian, zanella2020informed, aleshinguendel2021multifile}.

\begin{figure}
\centering
\includegraphics{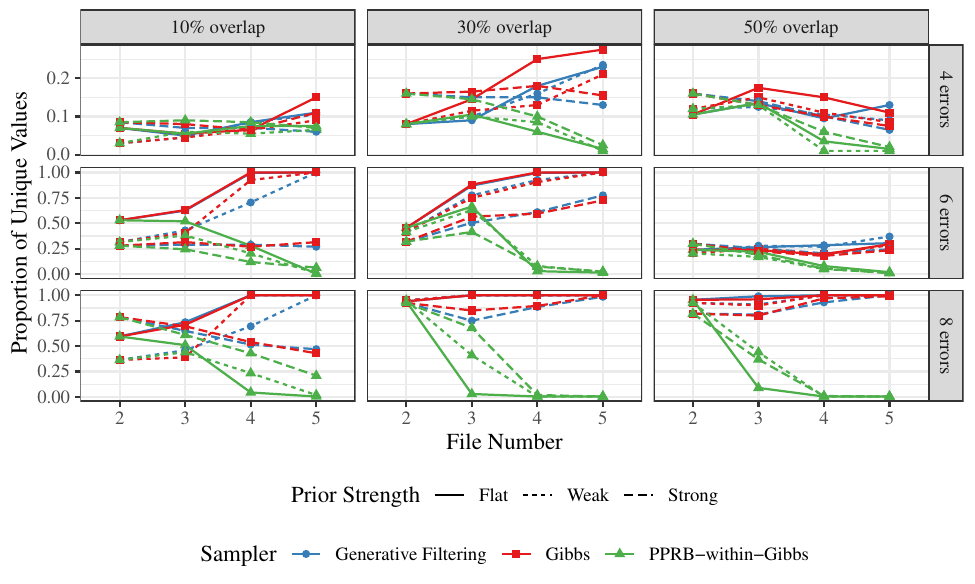}
\caption{\label{fig:streaming-rl-unique}Proportion of unique values within samples of \(Z^{(1)}\) produced by Gibbs, PPRB-within-Gibbs, and Generative Filtering. While the proportion of unique values from the PPRB-within-Gibbs samples decreases with each successive update, the proportion of unique values from Generative Filtering remains similar to the samples produced by Gibbs.}
\end{figure}

In streaming record linkage, sets of records (files) arrive sequentially in time with no predetermined number and estimates of links are updated after the arrival of each. Recent advancements in record linkage allow for near real-time data-driven record linkage \citep[e.g.,][]{christen2009similarity, ioannou2010onthefly, dey2011online, altwaijry2017qda, karapiperis2018summarization} and model-driven Bayesian record linkage in the streaming data setting can be performed using recursive Bayesian updates with SMCMC or PPRB-within-Gibbs \citep{taylor2023fast}. \citet{taylor2023fast} note PPRB-within-Gibbs is efficient, but degrades, and SMCMC requires a parallel computing environment but maintains accuracy. We expand on the simulation study performed in that paper by comparing to Generative Filtering. We demonstrate that Generative Filtering does not suffer from the same degradation as PPRB-within-Gibbs on the streaming record linkage problem.

For this study, we use the data files as described in \citet{taylor2023fast}. Data were simulated using the GeCo software package \citep{tran2013geco} to create realistic demographic information and insert realistic errors into randomly selected fields. Each record had 10 fields: given name, surname, age, occupation, and 6 categorical fields with 12 possible levels. Each record had errors inserted to mimic common typos, misspellings, or OCR errors. Files were generated with records that contained up to 4, 6, or 8 errors. The files were generated with varying overlap, having 10\%, 30\%, 50\% or 90\% of their records coreferent with records with previous files. For each of the 12 combinations of error and overlap, a set of 4 files was generated to arrive sequentially, simulating streaming data.

We use the streaming record linkage model of \citet{taylor2023fast}, with diffuse, weak, and strong priors for the parameter governing the distribution of disagreement levels for matched records. Details on the model formulation are included in Appendix \ref{sec:streaming-rl-model-details}. We begin the streaming updates using posterior samples from a two file record linkage using the \texttt{BRL} package \citep{sadinle2017bayesian}. Then, we perform sequential PPRB-within-Gibbs updates and sequential Generative Filtering updates with the third, fourth, and fifth file of each dataset. For comparison, we also fit the model using a non-streaming Gibbs sampler. The initial Gibbs sampler for two-file linkage was run using component-wise link updates for 11,000 iterations, discarding the first 1,000 as burn-in. This initial run was used as a basis for all streaming methods. The Gibbs sampler for subsequent files was run using component-wise link updates for 2,500 iterations, discarding the first 500 as burn-in. The PPRB-within-Gibbs sampler was run using locally balanced link updates for 11,000 iterations, discarding the first 1,000 as burn-in. The Generative Filtering updates consisted of an ensemble of 200 samples. Each filtering step used PPRB-within-Gibbs for 11,000 iterations, discarding the first 1000 and thinning to 200 before applying the transition kernel. The Generative Filtering transition kernel used locally balanced proposals for link updates, and was run for 200 iterations. The number of transition kernel iterations was conservatively chosen by examining traceplots for convergence. The PPRB-within-Gibbs updates use the three step version originally proposed by \citet{taylor2023fast} and implemented in the \texttt{bstrl} R package \citep{pkgbstrl}.

We examine the unique values within the produced samples for \(\bm{Z}^{(1)}\), the parameter that encodes links from file 2 to file 1. Before comparing, we thinned samples produced by Gibbs and PPRB-within-Gibbs from 2,000 and 10,000, respectively, to 200 to match the ensemble size of Generative Filtering. The number of unique values produced by PPRB-within-Gibbs decreases with each Bayesian update relative to Gibbs, while the number of unique values produced by Generative Filtering is comparable to Gibbs (Figure \ref{fig:streaming-rl-unique}). This indicates that Generative Filtering is able to avoid the degradation problems of PPRB-within-Gibbs.

\hypertarget{sec:application}{%
\section{Application}\label{sec:application}}

We investigate the effectiveness of Generative Filtering on data of Steller sea lion pup counts. Steller sea lions are an endangered species\footnote{United States Endangered Species Act of 1973} whose population has been impacted due to a reduction in juvenile survival attributed to pollution, oceanographic changes, and the effects of fisheries, among other potential causes \citep{fritz1995threatened}. The National Marine Mammal Laboratory of the National Oceanographic and Atmospheric Administration performs aerial surveys of the number of pups born each year across several sites in Alaska. The data are available in the \texttt{R} package \texttt{agTrend} \citep{pkgagTrend} and contain 713 observations between the years 1973 and 2016, at 72 sites. Not every site is measured at every year, with sites having between 1 and 22 observations.

\begin{figure}
\centering
\includegraphics{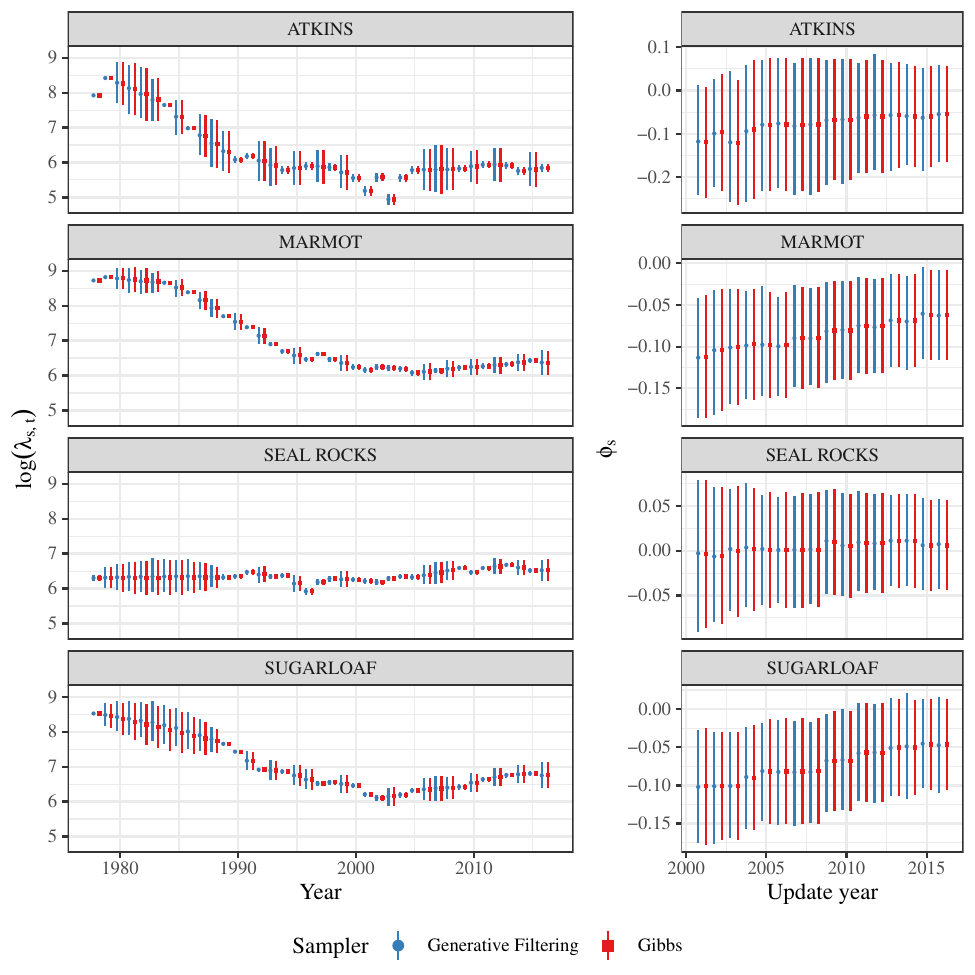}
\caption{\label{fig:pups-intervals-plot1}LEFT: Posterior means and credible intervals for \(\log(\lambda_{s,t})\) parameters after all updates. Credible intervals are wider in years where data were missing. In three of four sites, population intensity estimates declined, then began to rise after 2000. RIGHT: Posterior means and credible intervals for \(\phi_s\) following each streaming update. Means shift positive and credible intervals narrow with the arrival of new data. On both the left and right, means and credible intervals agree between Gibbs and Generative Filtering.}
\end{figure}

We assume that counts of sea lion pups are produced from latent population intensities for each site-year combination, which are related through a log-scale autoregressive process. We model these data using the following hierarchical model,
\begin{align*}
y_{s,t} &\sim \mbox{Pois}(\lambda_{s,t}) \\
\log(\lambda_{s,1}) &\sim \mbox{N}(\mu_1, \sigma_1^2) \\
\log(\lambda_{s,t}) &\sim \mbox{N}(\phi_s + \log(\lambda_{s,t-1}), \sigma^2_s) \\
\phi_s &\sim \mbox{N}(0, \sigma^2_\phi) \\
\sigma^2_s &\sim \mbox{Inverse-gamma}(\alpha, \beta),
\end{align*}
where \(s=1,2,3,4\) for each of our four studied sites, and \(t=1978, \dots, 2016\). The parameters \(\lambda_{s,t}\) represent the latent population intensities, and the parameters \(\phi_s\) and \(\sigma^2_s\) define the relationship between population intensity parameters over time. The parameters of interest are population intensities \(\lambda_{s,t}\) and population intensity trends \(\phi_s\). Negative values of \(\phi_s\) indicate the population intensities are decreasing at site \(s\) while positive values of \(\phi_s\) indicate the population intensities are increasing at site \(s\).

These data were analyzed by \citet{hooten2021making} who used PPRB to update a temporal model of two measured sites (Marmot and Sugarloaf) a single time from year 2013 to 2015. We extend this analysis in two ways. First, we include two additional sites (Seal Rocks and Atkins) from the same survey. Second, we perform a series of 16 streaming updates from the year 2000 to 2016. We assume a first-stage analysis has been performed for all data through year 2000 which will form the basis for the series of streaming updates performed. Each update incorporates the data from an additional year. We use the same hyperparameters as in \citet{hooten2021making}, specifically, \(\mu_1 = 8.7\), \(\sigma^2_1 = 1.69\), \(\sigma^2_\phi = 1\), \(\alpha = 1\), and \(\beta = 20\).

We perform the series of updates using four methods: a non-streaming Gibbs sampler for each year fitting the full model (Gibbs), sequential PPRB-within-Gibbs updates (PPRB-within-Gibbs), sequential SMCMC updates (SMCMC), and sequential Generative Filtering. The full sampler details can be found in Appendix \ref{sec:application-sampling-details}. Similarly to the simulation in Section \ref{sec:simulation-gaussian}, we perform transition kernel steps in SMCMC and Generative Filtering until a desired level of convergence to the posterior is reached. Unlike in Section \ref{sec:simulation-gaussian}, however, the true posterior is unknown in this model. Thus we use the Gibbs samples as a reference, and stop SMCMC and Generative filtering when the KS statistic comparing the current state of the ensemble to the Gibbs samples is below a desired threshold. This approach is not feasible in scenarios where streaming is required and we discuss practical options for choosing \(m_t\) in Section \ref{sec:discussion}.

\begin{figure}
\centering
\includegraphics{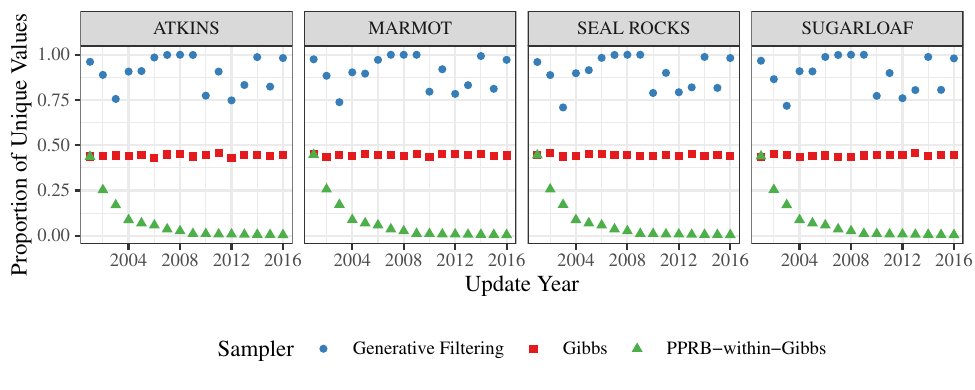}
\caption{\label{fig:pups-unique-values-plot}Number of unique values of \(\log(\lambda_{s,t})\), for \(t=2001\), present in samples as a proportion of the number of samples produced with each sampler. The unique values produced by PPRB-within-Gibbs decays with each successive update. The Gibbs sampler produces a consistent number of unique values, approximately 44\%, due to the tuning of its component updates. Generative Filtering produces consistently the highest proportion of unique values among its samples.}
\end{figure}

Figure \ref{fig:pups-intervals-plot1} (left) shows the posterior means and credible intervals for \(\log(\lambda_{s,t})\). In three of four sites, population intensity estimates declined, then began to rise after 2000. We are also able to recover estimates of \(\log(\lambda_{s,t})\) in years where there is missing data for the sites, though the credible intervals are wide for these parameters. Figure \ref{fig:pups-intervals-plot1} (right) shows the changing posterior mean and credible intervals for each parameter \(\phi_s\), with each Bayesian update. Corresponding to the trends observed in the estimates of \(\log(\lambda_{s,t})\), the posterior means and credible intervals of \(\phi_s\) become less negative with each new year of data. The inference for both \(\log(\lambda_{s,t})\) and \(\phi_s\) using Generative Filtering is nearly identical to non-streaming Gibbs, but with cumulatively 32\% less time required with only one core.

We compare the samplers' ability to obtain these estimates through three metrics: the number of distinct values present in repeatedly updated parameters (Figure \ref{fig:pups-unique-values-plot}), the number of transition kernel steps required by each of SMCMC and Generative Filtering (Figure \ref{fig:pups-runtime-steps-plot}, top), and the total time required to perform all updates for SMCMC, Generative Filtering, PPRB-within-Gibbs, and Gibbs with varying numbers of cores available (Figure \ref{fig:pups-runtime-steps-plot}, bottom). First, Generative filtering outperforms PPRB-within-Gibbs in terms of the proportion of unique values within its produced samples (Figure \ref{fig:pups-unique-values-plot}). Generative Filtering also has a higher proportion of unique values than Gibbs, due to the Gibbs sampler's component updates being tuned to 44\% acceptance rate. We also see that Generative Filtering outperforms SMCMC in both number of transition kernel steps required, and both SMCMC and Gibbs in runtime (Figure \ref{fig:pups-runtime-steps-plot}). In particular, two cases in the runtime comparison are especially interesting. For one available core, i.e., sequential execution, Generative Filtering is faster even than Gibbs. This indicates that even when no parallel execution is possible, Generative Filtering takes advantage of previously produced samples and is faster than refitting the model from scratch. For 1000 available cores, we simulate the situation where essentially unlimited parallel computing resources are available and each thread in either SMCMC or Generative Filtering is able to be run in parallel. In this case, SMCMC overtakes Generative Filtering in speed because its jumping kernel is able to be parallelized while PPRB-within-Gibbs is not. However, this is an extreme setting. We also note that both methods outperform Gibbs by wide margins, and that while PPRB is relatively fast for small numbers of available cores, it is undesirable due to its degradation. Together, these timing results suggest that Generative Filtering is an attractive method for streaming Bayesian updates where time is a consideration, but few or moderate computational resources are available.

\begin{figure}
\centering
\includegraphics{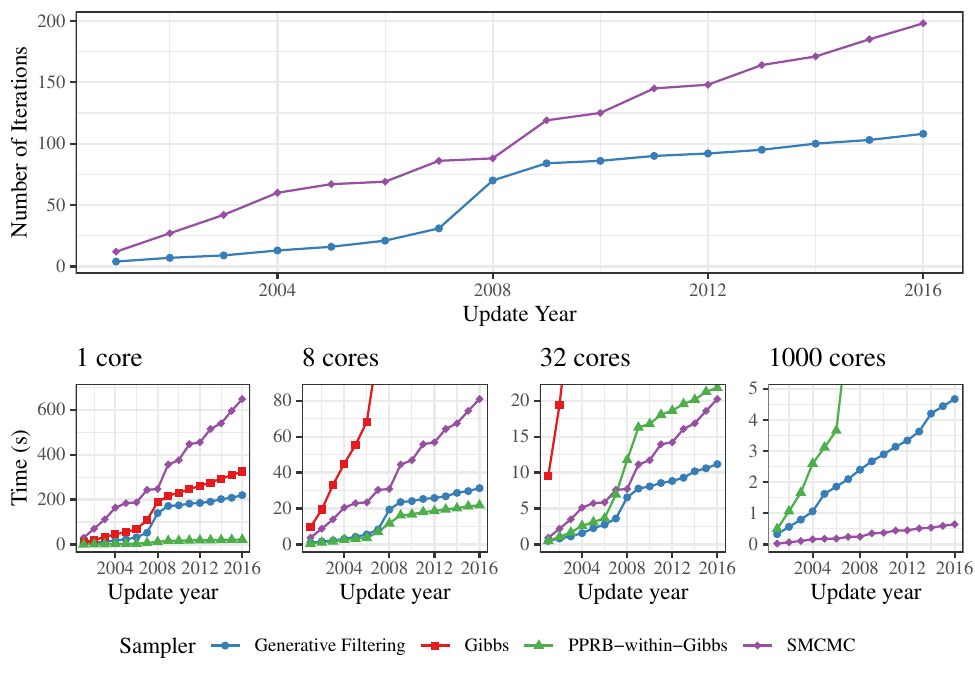}
\caption{\label{fig:pups-runtime-steps-plot}TOP: Total number of transition kernel steps required for each method to converge to the posterior distribution. Generative Filtering requires fewer transition kernel steps due to its initial PPRB-within-Gibbs step. BOTTOM: Cumulative time to reach 1000 effective sample size, for varying numbers of available parallel cores. Left to right: 1 core, 8 cores, 32 cores, 1000 cores. These represent, respectively, no available parallel resources, a typical laptop or workstation, a single node in a high performance computing environment, and unlimited parallel resources. Due to degradation, the PPRB-within-Gibbs samples are not necessarily from the posterior distribution.}
\end{figure}

\hypertarget{sec:discussion}{%
\section{Discussion}\label{sec:discussion}}

We define Generative Filtering as an efficient way to perform recursive Bayesian updates in a streaming data context when moderate parallel computing resources are available. We show that Generative Filtering resolves the problems of the two methods it extends, avoiding the degradation of filtering methods such as PPRB and PPRB-within-Gibbs and converging faster with fewer resources than SMCMC. We characterize the degradation of PPRB and PPRB-within-Gibbs samples after repeated application, and provide novel bounds on the error introduced by this degradation. We additionally provide sufficient conditions for reducing the data storage needs of Generative Filtering.

We conduct two simulation studies with streaming data models to demonstrate the effectiveness of Generative Filtering. We find that repeated application of PPRB-within-Gibbs in a streaming setting leads to a measurable accumulation of MCMC error and show that Generative Filtering avoids this error to reach the same level of convergence as SMCMC using less time with moderate computational resources.

We use Generative Filtering to analyze Steller sea lion pup counts discovering trends in count intensity at four different sites and observing changes in parameter estimates with the arrival of new data. We find that Generative Filtering required fewer transition kernel steps than SMCMC and less time with moderate computation resources (32 cores or fewer). We also find that Generative Filtering requires less time to produce the equivalent of 1000 independent samples than Gibbs when only one core is available.

Throughout this paper, when comparing the time to convergence for streaming samplers, we have used an objective standard to determine when a given algorithm converged to the posterior distribution. We compared samples to either the true posterior distribution (if known), or to reference samples produced by a non-streaming alternative sampler. This worked for our purposes of demonstrating faster convergence for Generative Filtering than SMCMC. However, this approach is not relevant in practice and determining the number of transition kernel iterations to perform remains an open question. We have compared to the correlation-based stopping of \citet{yang2013sequential} in an appropriate example, however, this approach is not appropriate for many cases, e.g., discrete parameters without a natural meaning or measure of correlation. Future work for samplers of this type will include a more general stopping criterion.

When the number of transition kernel steps, \(m_t\), is large, then it can be considered wasteful to discard all but the final value in each parallel chain. \citet{dau2022waste} propose Waste-Free SMC, in which a subset of the values from an SMC method are saved, from which parallel MCMC chains are run and all values in each chain are saved. Such a modification may be made to Generative Filtering to avoid wasted computation of the transition kernel.

\hypertarget{funding-acknowledgement}{%
\section*{Funding Acknowledgement}\label{funding-acknowledgement}}
\addcontentsline{toc}{section}{Funding Acknowledgement}

B. Betancourt acknowledges the support of NSF DMS-2310222. A. Kaplan acknowledges the support of NSF DMS-2330089 and NSF SES-2338428.

\bibliographystyle{Chicago}
\bibliography{bibliography.bib}

\vfill

\newpage
\spacingset{1.75} 

\hypertarget{appendix-appendix}{%
\appendix}

\hypertarget{algorithm-specifications}{%
\section{Algorithm Specifications}\label{algorithm-specifications}}

Algorithm \ref{alg:pprb-within-gibbs-2step} provides a formal algorithm specification for the two step PPRB-within-Gibbs algorithm (Definition \ref{def:pprb-within-gibbs-2step}).

\begin{algorithm}
\caption{Two step PPRB-within-Gibbs}\label{alg:pprb-within-gibbs-2step}
\algrenewcommand\algorithmicrequire{\textbf{Input:}}
\algrenewcommand\algorithmicensure{\textbf{Output:}}
\begin{algorithmic}[1]
\Require Posterior sample $\{\boldsymbol \theta_s\}_{s=1}^S \sim p(\boldsymbol \theta | \boldsymbol y_1)$
\Require Data $\bm{y}_{2} \sim p(\bm{y}_2 | \bm{\theta}, \bm{\phi}, \bm{y}_1)$
\Ensure Updated posterior sample $\{(\boldsymbol \theta'_s, \boldsymbol \phi_s)\}_{s=1}^S \sim p(\boldsymbol \theta, \boldsymbol \phi | \boldsymbol y_1, \boldsymbol y_2)$
    
    \State Draw $s^\ast \sim \text{Uniform}(\{1,\dots,S\})$
    \State Set $\boldsymbol \theta'_0 = \boldsymbol \theta_{s^\ast}$
    \State Draw $\boldsymbol \phi_0 \sim p(\boldsymbol \phi| \boldsymbol \theta'_0)$
    \For{$s=1,\dots,S$}
        \State Draw $s^\ast \sim \text{Uniform}(\{1,\dots,S\})$
        \State Set $\boldsymbol \theta^\ast = \boldsymbol \theta_{s^\ast}$
        \State Set $\alpha = \min\left(\frac{p(\boldsymbol y_2 | \boldsymbol y_1, \boldsymbol \theta^\ast, \boldsymbol \phi)}{p(\boldsymbol y_2 | \boldsymbol y_1, \boldsymbol \theta'_{s-1}, \boldsymbol \phi)}\frac{p(\boldsymbol \phi | \boldsymbol \theta^\ast)}{p(\boldsymbol \phi | \boldsymbol \theta'_{s-1})}, 1\right)$
        \State Draw $p \sim \text{Uniform}([0,1])$.
        \If{$p < \alpha$}
            \State Set $\boldsymbol \theta'_s = \boldsymbol \theta^\ast$
        \Else
            \State Set $\boldsymbol \theta'_s = \boldsymbol \theta'_{s-1}$
        \EndIf
        \State Draw $\boldsymbol \phi_s \sim p(\boldsymbol \phi|\boldsymbol \theta'_s, \boldsymbol y_1, \boldsymbol y_2)$ \label{algstep:pprb-within-gibbs-phi-update}
    \EndFor

\end{algorithmic}
\end{algorithm}

Algorithm \ref{alg:generative-filtering} provides a formal algorithm specification for Generative Filtering. As noted, we use PPRB-within-Gibbs for the filtering method in step (\ref{algstep:gf-filtering}) throughout this paper. However, any filtering algorithm may be substituted. SMCMC can be obtained by replacing step (\ref{algstep:gf-filtering}) by the parallel use of a jumping kernel to initialize \(\bm{\theta}_{t}\). The loop over the sample in step (\ref{algstep:gf-parallel-chains}) may be performed in parallel as each chain is independent of the others.

\begin{algorithm}
\caption{Generative Filtering}\label{alg:generative-filtering}
\algrenewcommand\algorithmicrequire{\textbf{Input:}}
\algrenewcommand\algorithmicensure{\textbf{Output:}}
\begin{algorithmic}[1]
\Require Posterior sample $\{\bm{\theta}_{1:(t-1),s}\}_{s=1}^S \sim p(\bm{\theta}_{1:(t-1)} | \bm{y}_{1:(t-1)})$
\Require Data $\bm{y}_{t} \sim p(\bm{y}_t | \bm{\theta}_{1:t})$
\Require Filtering method \Call{Filter}{} approximately targeting $p(\bm{\theta}_{1:t} | \bm{y}_{1:t})$, e.g., Algorithm \ref{alg:pprb-within-gibbs-2step}
\Require Markov chain transition kernel $T_t(\cdot, \cdot)$, targeting $p(\bm{\theta}_{1:t} | \bm{y}_{1:t})$
\Require $m_t \geq 1$
\Ensure Updated posterior sample $\{\bm{\theta}_{1:t,s}\}_{s=1}^S \sim p(\bm{\theta}_{1:t} | \bm{y}_{1:t})$

    \State Set $\{\bm{\theta}^{\ast}_{1:t,s}\}_{s=1}^S$ to be the result of $\Call{Filter}{\{\bm{\theta}_{1:(t-1),s}\}_{s=1}^S, \bm{y}_{t}}$ \label{algstep:gf-filtering}
    \For{$s=1,\dots,S$} \label{algstep:gf-parallel-chains}
        \State Set $\bm{\theta}_{1:t,s} = \bm{\theta}^{\ast}_{1:t,s}$
        \For{$i=1,\dots,m_t$}
            \State Draw $\bm{\theta}_{1:t,s}$ from the density $T_t(\bm{\theta}_{1:t,s}, \cdot)$ \label{algstep:generative-filtering-transition-kernel}
        \EndFor
    \EndFor

\end{algorithmic}
\end{algorithm}

\hypertarget{sec:supplement-theorems}{%
\section{Theorems and Proofs}\label{sec:supplement-theorems}}

\hypertarget{pprb-within-gibbs}{%
\subsection{PPRB-within-Gibbs}\label{pprb-within-gibbs}}

\begin{theorem}
\protect\hypertarget{thm:pprb-within-gibbs-sampling}{}\label{thm:pprb-within-gibbs-sampling}The PPRB-within-Gibbs sampler (Definition \ref{def:pprb-within-gibbs-2step}) produces an ergodic Markov chain with the model's posterior distribution as its target distribution if the posterior distribution satisfies the following positivity condition,
\[p(\boldsymbol \theta | \boldsymbol y_1, \boldsymbol y_2) > 0,\ p(\boldsymbol \phi | \boldsymbol y_1, \boldsymbol y_2) > 0 \implies p(\boldsymbol \theta, \boldsymbol \phi | \boldsymbol y_1, \boldsymbol y_2) > 0.\]
\end{theorem}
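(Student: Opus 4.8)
The plan is to factor the PPRB-within-Gibbs kernel into its two constituent updates, show that each preserves the posterior $p(\boldsymbol\theta,\boldsymbol\phi|\boldsymbol y_1,\boldsymbol y_2)$ as an invariant measure, and then argue that the composite kernel is $\pi$-irreducible and aperiodic under the stated positivity condition. Invariance together with irreducibility and aperiodicity then yields ergodicity, i.e.\ convergence to the posterior, by standard Markov chain convergence theory (e.g.\ Tierney, 1994).

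First I would recognize the PPRB step as a Metropolis--Hastings independence sampler targeting the full conditional $p(\boldsymbol\theta|\boldsymbol\phi,\boldsymbol y_1,\boldsymbol y_2)$ with proposal density $p(\boldsymbol\theta|\boldsymbol y_1)$. Using the model factorization $p(\boldsymbol\theta,\boldsymbol\phi|\boldsymbol y_1,\boldsymbol y_2) \propto p(\boldsymbol y_2|\boldsymbol\theta,\boldsymbol\phi,\boldsymbol y_1)\,p(\boldsymbol\phi|\boldsymbol\theta)\,p(\boldsymbol\theta|\boldsymbol y_1)$, the generic independence-sampler ratio $\frac{p(\boldsymbol\theta^\ast|\boldsymbol\phi,\boldsymbol y_1,\boldsymbol y_2)\,p(\boldsymbol\theta|\boldsymbol y_1)}{p(\boldsymbol\theta|\boldsymbol\phi,\boldsymbol y_1,\boldsymbol y_2)\,p(\boldsymbol\theta^\ast|\boldsymbol y_1)}$ collapses: the two proposal factors $p(\cdot|\boldsymbol y_1)$ cancel the corresponding factors in the conditional posterior, leaving exactly the acceptance ratio in Eq.~\eqref{eqn:pprb-within-gibbs-accept-ratio}. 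Since a correctly specified Metropolis--Hastings step leaves its target conditional invariant, this update preserves $p(\boldsymbol\theta,\boldsymbol\phi|\boldsymbol y_1,\boldsymbol y_2)$. The $\boldsymbol\phi$-update is an exact Gibbs draw from $p(\boldsymbol\phi|\boldsymbol\theta,\boldsymbol y_1,\boldsymbol y_2)$ and hence also preserves the posterior; composing two invariance-preserving kernels preserves invariance, establishing stationarity.

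For ergodicity I would next argue irreducibility. A key observation is that the proposal $p(\boldsymbol\theta|\boldsymbol y_1)$ dominates the conditional target: since $p(\boldsymbol\theta|\boldsymbol\phi,\boldsymbol y_1,\boldsymbol y_2) \propto p(\boldsymbol y_2|\boldsymbol\theta,\boldsymbol\phi,\boldsymbol y_1)\,p(\boldsymbol\phi|\boldsymbol\theta)\,p(\boldsymbol\theta|\boldsymbol y_1)$, any $\boldsymbol\theta$ with positive conditional-posterior density necessarily has $p(\boldsymbol\theta|\boldsymbol y_1)>0$, so the independence proposal can reach the whole support of the $\boldsymbol\theta$-conditional. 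The positivity condition then plays the role of the classical Besag positivity hypothesis for a two-block sampler: it guarantees that the support of the joint posterior equals the product of the supports of its two marginals, so that alternating the $\boldsymbol\theta$- and $\boldsymbol\phi$-updates can connect any two points of positive posterior density. Aperiodicity follows from the strictly positive rejection probability of the Metropolis--Hastings step on a set of positive measure. Combining invariance, $\pi$-irreducibility, and aperiodicity gives convergence to $p(\boldsymbol\theta,\boldsymbol\phi|\boldsymbol y_1,\boldsymbol y_2)$.

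I expect the main obstacle to be the irreducibility step: carefully verifying that the positivity condition, together with the domination of the target conditional by the proposal $p(\boldsymbol\theta|\boldsymbol y_1)$, suffices to join arbitrary points of the posterior's support through the two-block update without reducible ``islands.'' One must also be explicit about whether the proposal is the idealized previous posterior $p(\boldsymbol\theta|\boldsymbol y_1)$ (for which the true posterior is the exact target) or a finite empirical sample (whose target is the approximate posterior studied in Section~\ref{sec:pprb-degradation}); I would state the result for the idealized kernel and note that the positivity condition is precisely the hypothesis needed to rule out reducible configurations.
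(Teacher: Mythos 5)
Your proposal is correct and follows essentially the same route as the paper's proof: verify the collapsed independence-sampler acceptance ratio for the full conditional $p(\boldsymbol\theta\mid\boldsymbol\phi,\boldsymbol y_1,\boldsymbol y_2)$, check that the proposal $p(\boldsymbol\theta\mid\boldsymbol y_1)$ dominates the target conditional's support, and invoke the positivity condition to obtain irreducibility of the two-block Gibbs structure. Your additional remarks on aperiodicity and on the idealized-versus-empirical proposal are sensible elaborations that the paper leaves implicit, but they do not change the argument.
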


\begin{proof}
First, we show that the acceptance ratio, \(\alpha\), in step 1 is appropriate for the target full conditional distribution. Since the proposals, \(\bm{\theta}^\ast\) are produced from the distribution \(p(\bm{\theta}^\ast|\bm{y}_1)\) and the target distribution is the full conditional, \(p(\bm{\theta}|\bm{\phi},\bm{y}_1,\bm{y}_2)\), the MH acceptance ratio would be
\begin{align*}
\alpha &= \frac{p(\bm{\theta}^\ast|\bm{\phi},\bm{y}_1,\bm{y}_2)}{p(\bm{\theta}|\bm{\phi},\bm{y}_1,\bm{y}_2)}\cdot\frac{p(\bm{\theta}|\bm{y}_1)}{p(\bm{\theta}^\ast|\bm{y}_1)} \\
&= \frac{p(\bm{\theta}^\ast,\bm{\phi}|\bm{y}_1,\bm{y}_2)}{p(\bm{\theta},\bm{\phi}|\bm{y}_1,\bm{y}_2)}\cdot\frac{p(\bm{\theta}|\bm{y}_1)}{p(\bm{\theta}^\ast|\bm{y}_1)} \\
&= \frac{p(\bm{y}_2|\bm{\phi},\bm{\theta}^\ast,\bm{y}_1)p(\bm{y}_1|\bm{\theta}^\ast)p(\bm{\phi}|\bm{\theta}^\ast)p(\bm{\theta}^\ast)}{p(\bm{y}_2|\bm{\phi},\bm{\theta},\bm{y}_1)p(\bm{y}_1|\bm{\theta})p(\bm{\phi}|\bm{\theta})p(\bm{\theta})}\cdot\frac{p(\bm{\theta}|\bm{y}_1)}{p(\bm{\theta}^\ast|\bm{y}_1)} \\
&= \frac{p(\bm{y}_2|\bm{\phi},\bm{\theta}^\ast,\bm{y}_1)p(\bm{\phi}|\bm{\theta}^\ast)\cdot p(\bm{\theta}^\ast|\bm{y}_1)}{p(\bm{y}_2|\bm{\phi},\bm{\theta},\bm{y}_1)p(\bm{\phi}|\bm{\theta}) \cdot p(\bm{\theta}|\bm{y}_1)}\cdot\frac{p(\bm{\theta}|\bm{y}_1)}{p(\bm{\theta}^\ast|\bm{y}_1)} \\
&= \frac{p(\boldsymbol y_2 | \boldsymbol y_1, \boldsymbol \theta^\ast, \boldsymbol \phi)}{p(\boldsymbol y_2 | \boldsymbol y_1, \boldsymbol \theta, \boldsymbol \phi)}\frac{p(\boldsymbol \phi | \boldsymbol \theta^\ast)}{p(\boldsymbol \phi | \boldsymbol \theta)}.
\end{align*}
Second, the distribution \(p(\bm{\theta}|\bm{y}_1)\) works as independent MH proposals for the distribution \(p(\bm{\theta}|\bm{y}_1,\bm{y}_2,\bm{\phi})\) because \(p(\bm{\theta}|\bm{y}_1,\bm{y}_2,\bm{\phi}) \propto p(\bm{\theta}|\bm{y}_1)p(\bm{y}_2|\bm{\theta},\bm{\phi},\bm{y}_1)p(\bm{\phi}|\bm{\theta})\), so \(p(\bm{\theta}|\bm{y}_1) = 0 \implies p(\bm{\theta}|\bm{y}_1,\bm{y}_2,\bm{\phi}) = 0\).

Finally, the positivity condition implies that a Gibbs sampler is irreducible \citep{robert2005monte}, so the PPRB-within-Gibbs algorithm produces an ergodic Markov chain.
\end{proof}

\begin{theorem}
\protect\hypertarget{thm:pprb-error-bounds-restated}{}\label{thm:pprb-error-bounds-restated}\textbf{(Theorem \ref{thm:pprb-error-bounds} restated)} Let \(\pi_t\), \(A_t\), and \(F_S^{(t)}\) be defined as in Eq. (\ref{eqn:inequality-terms-start})-(\ref{eqn:inequality-terms-end}) and let \(\|\cdot\|\) be a norm on probability measures. Then,
\begin{align*}
\left\|A_t - \pi_t\right\| &\leq \underbrace{\left\|A_{t-1} - \pi_{t-1}\right\|}_{(1)} + \underbrace{\left\|F_S^{(t-1)} - A_{t-1}\right\|}_{(2)} + \underbrace{\left\|\pi_t - \pi_{t-1}\right\|}_{(3)} + \underbrace{\left\|A_t - F_S^{(t-1)}\right\|}_{(4)} \\
\left\|A_t - \pi_t\right\| &\geq \bigg\lvert \underbrace{\left\|A_{t-1} - \pi_{t-1}\right\|}_{(1)} - \underbrace{\left\|F_S^{(t-1)} - A_{t-1}\right\|}_{(2)} \bigg\rvert - \underbrace{\left\|\pi_t - \pi_{t-1}\right\|}_{(3)} - \underbrace{\left\|A_t - F_S^{(t-1)}\right\|}_{(4)} 
\end{align*}
\end{theorem}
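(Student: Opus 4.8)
The entire statement follows from the triangle inequality and its reverse applied in the abstract normed space of probability measures; beyond the definitions in Eq. (\ref{eqn:inequality-terms-start})--(\ref{eqn:inequality-terms-end}), no probabilistic structure of $\pi_t$, $A_t$, or $F_S^{(t-1)}$ is required. The plan is to interpolate between $A_t$ and $\pi_t$ through the intermediate measures $F_S^{(t-1)}$, $A_{t-1}$, and $\pi_{t-1}$, chosen so that each consecutive gap is exactly one of the four labelled terms.

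For the upper bound (Eq. \ref{eqn:pprb-error-upper-bound}), I would insert these three intermediate points and apply the triangle inequality along the chain $A_t \to F_S^{(t-1)} \to A_{t-1} \to \pi_{t-1} \to \pi_t$. Concretely,
\[
\|A_t - \pi_t\| \leq \|A_t - F_S^{(t-1)}\| + \|F_S^{(t-1)} - A_{t-1}\| + \|A_{t-1} - \pi_{t-1}\| + \|\pi_{t-1} - \pi_t\|,
\]
which, after reordering the summands and using $\|\pi_{t-1}-\pi_t\| = \|\pi_t-\pi_{t-1}\|$, is exactly terms $(1)$ through $(4)$.

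For the lower bound (Eq. \ref{eqn:pprb-error-lower-bound}), I would first isolate the ``core'' quantity $\|F_S^{(t-1)} - \pi_{t-1}\|$. The reverse triangle inequality with pivot $A_{t-1}$ gives
\[
\|F_S^{(t-1)} - \pi_{t-1}\| \geq \Big\lvert \|A_{t-1} - \pi_{t-1}\| - \|F_S^{(t-1)} - A_{t-1}\| \Big\rvert,
\]
which produces the expression inside the absolute value in the target inequality. I would then bound the same core quantity from above by the target distance plus the two remaining terms, using the triangle inequality with pivots $A_t$ and $\pi_t$:
\[
\|F_S^{(t-1)} - \pi_{t-1}\| \leq \|F_S^{(t-1)} - A_t\| + \|A_t - \pi_t\| + \|\pi_t - \pi_{t-1}\|.
\]
Rearranging this inequality to solve for $\|A_t - \pi_t\|$ and substituting the reverse-triangle lower bound for $\|F_S^{(t-1)} - \pi_{t-1}\|$ then yields Eq. (\ref{eqn:pprb-error-lower-bound}).

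I do not expect a substantive obstacle: both bounds are pure normed-space manipulations. The only care required is in choosing the interpolation order so that precisely the four prescribed terms emerge, and in tracking the sign and the pivot when invoking the reverse triangle inequality so that the absolute-value term is reproduced exactly.
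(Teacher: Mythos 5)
Your upper bound is exactly the paper's argument: the triangle inequality along the chain $A_t \to F_S^{(t-1)} \to A_{t-1} \to \pi_{t-1} \to \pi_t$, followed by reordering. Your lower bound is correct but takes a genuinely cleaner route than the paper's. The paper starts from $\left\|A_t - \pi_t\right\| \geq \bigl\lvert \left\|A_t - A_{t-1}\right\| - \left\|A_{t-1} - \pi_t\right\| \bigr\rvert$, expands the absolute value as a maximum of two signed expressions, bounds the negative terms by forward triangle inequalities and the positive terms by further reverse triangle inequalities, expands again into a four-way maximum, discards two branches, and finally recombines the surviving pair into an absolute value. You instead isolate $\left\|F_S^{(t-1)} - \pi_{t-1}\right\|$ as a pivot quantity, bound it below by a single reverse triangle inequality about $A_{t-1}$ (which directly manufactures the absolute-value term $\bigl\lvert \left\|A_{t-1} - \pi_{t-1}\right\| - \left\|F_S^{(t-1)} - A_{t-1}\right\| \bigr\rvert$), bound it above by $\left\|F_S^{(t-1)} - A_t\right\| + \left\|A_t - \pi_t\right\| + \left\|\pi_t - \pi_{t-1}\right\|$, and chain the two. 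The rearrangement is sound: substituting a smaller quantity for a positively occurring term preserves the direction of the inequality. Your version avoids the paper's case analysis entirely and makes it transparent why the absolute value attaches to terms $(1)$ and $(2)$ specifically while $(3)$ and $(4)$ enter only subtractively; the paper's version, by contrast, exhibits the intermediate four-branch maximum, which is slightly more informative about which bounds were discarded but is otherwise just longer. No gap.
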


\begin{proof}
We derive the upper bound by the triangle inequality as
\begin{align}
\left\|A_t - \pi_t\right\| &\leq \left\|A_t - F_S^{(t-1)}\right\| + \left\|F_S^{(t-1)} - A_{t-1}\right\| + \left\|A_{t-1} - \pi_{t-1}\right\| + \left\|\pi_t - \pi_{t-1}\right\| \\
&= \left\|A_{t-1} - \pi_{t-1}\right\| + \left\|F_S^{(t-1)} - A_{t-1}\right\| + \left\|\pi_t - \pi_{t-1}\right\| + \left\|A_t - F_S^{(t-1)}\right\|. \label{eqn:pprb-error-triangle-2}
\end{align}

We use reverse triangle inequalities to derive the lower bound:
\begin{align}
\left\|A_t - \pi_t\right\| &\geq \bigg\lvert \left\|A_t - A_{t-1}\right\| - \left\|A_{t-1} - \pi_t\right\| \bigg\rvert \label{eqn:pprb-error-lower-work1} \\
&= \max\begin{Bmatrix*}[l]
\left\|A_t - A_{t-1}\right\| - \left\|A_{t-1} - \pi_t\right\|, \\
\left\|A_{t-1} - \pi_t\right\| - \left\|A_t - A_{t-1}\right\|
\end{Bmatrix*} \label{eqn:pprb-error-lower-work2} \\
&\geq \max\begin{Bmatrix*}[l]
\left\|A_t - A_{t-1}\right\| - \left(\left\|A_{t-1} - \pi_{t-1}\right\| + \left\|\pi_{t-1} - \pi_t\right\|\right), \\
\left\|A_{t-1} - \pi_t\right\| - \left(\left\|A_t - F_S^{(t-1)}\right\| + \left\|F_S^{(t-1)} - A_{t-1}\right\|\right) 
\end{Bmatrix*} \label{eqn:pprb-error-lower-work3} \\
&\geq \max\begin{Bmatrix*}[l]
\bigg\lvert \left\|A_t - F_S^{(t-1)}\right\| - \left\|F_S^{(t-1)} - A_{t-1}\right\| \bigg\rvert - \left\|A_{t-1} - \pi_{t-1}\right\| - \left\|\pi_{t-1} - \pi_t\right\|, \\
\bigg\lvert \left\|A_{t-1} - \pi_{t-1}\right\| - \left\|\pi_{t-1} - \pi_t\right\| \bigg\rvert - \left\|A_t - F_S^{(t-1)}\right\| - \left\|F_S^{(t-1)} - A_{t-1}\right\| 
\end{Bmatrix*} \label{eqn:pprb-error-lower-work4} \\
&= \max\begin{Bmatrix*}[l]
\left\|A_t - F_S^{(t-1)}\right\| - \left\|F_S^{(t-1)} - A_{t-1}\right\| - \left\|A_{t-1} - \pi_{t-1}\right\| - \left\|\pi_{t-1} - \pi_t\right\|, \\
\left\|F_S^{(t-1)} - A_{t-1}\right\| - \left\|A_t - F_S^{(t-1)}\right\| - \left\|A_{t-1} - \pi_{t-1}\right\| - \left\|\pi_{t-1} - \pi_t\right\|, \\
\left\|A_{t-1} - \pi_{t-1}\right\| - \left\|\pi_{t-1} - \pi_t\right\| - \left\|A_t - F_S^{(t-1)}\right\| - \left\|F_S^{(t-1)} - A_{t-1}\right\|, \\
\left\|\pi_{t-1} - \pi_t\right\| - \left\|A_{t-1} - \pi_{t-1}\right\| - \left\|A_t - F_S^{(t-1)}\right\| - \left\|F_S^{(t-1)} - A_{t-1}\right\| 
\end{Bmatrix*} \label{eqn:pprb-error-lower-work5} \\
&\geq \max\begin{Bmatrix*}[l]
\left\|F_S^{(t-1)} - A_{t-1}\right\| - \left\|A_{t-1} - \pi_{t-1}\right\| - \left\|\pi_{t-1} - \pi_t\right\| - \left\|A_t - F_S^{(t-1)}\right\|, \\
\left\|A_{t-1} - \pi_{t-1}\right\| - \left\|F_S^{(t-1)} - A_{t-1}\right\| - \left\|\pi_{t-1} - \pi_t\right\| - \left\|A_t - F_S^{(t-1)}\right\|
\end{Bmatrix*} \label{eqn:pprb-error-lower-work6} \\
&= \bigg\lvert \left\|A_{t-1} - \pi_{t-1}\right\| - \left\|F_S^{(t-1)} - A_{t-1}\right\| \bigg\rvert - \left\|\pi_{t-1} - \pi_t\right\| - \left\|A_t - F_S^{(t-1)}\right\| \label{eqn:pprb-error-lower}
\end{align}
Line (\ref{eqn:pprb-error-lower-work2}) is true by expanding \(|x| = \max\{x, -x\}\). Line (\ref{eqn:pprb-error-lower-work3}) follows by using a triangle inequality on the negative terms. Line (\ref{eqn:pprb-error-lower-work4}) follows by using a reverse triangle inequality on the positive terms. Line (\ref{eqn:pprb-error-lower-work5}) comes from expanding the abolute value as before. Line (\ref{eqn:pprb-error-lower-work6}) is true because \(\max A \geq \max B\) if \(B \subset A\). Finally, line (\ref{eqn:pprb-error-lower}) recombines the maximum into an absolute value.
\end{proof}

\hypertarget{generative-filtering}{%
\subsection{Generative Filtering}\label{generative-filtering}}

\begin{lemma}
\protect\hypertarget{lem:lemma38a}{}\label{lem:lemma38a}Let \(P^S_t(\bm{\theta}_{1:(t-1)},\cdot)\) represent the kernel resulting from \(S\) applications of PPRB-within-Gibbs at time \(t\), which is a probability density for \(\bm{\theta}_{1:t} := (\bm{\theta}_{1:(t-1)}, \bm{\theta}_t)\). Then for any probability density \(p(\cdot)\) for \(\bm{\theta}_{1:(t-1)}\), the following holds:
\[||\pi_t - P^S_t \circ p||_1 \leq \underset{\bm{\theta}_{1:(t-1)}}{\sup}||\pi_t - P^S_t(\bm{\theta}_{1:(t-1)},\cdot)||_1.\]
\end{lemma}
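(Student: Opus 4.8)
The plan is to treat this as a convexity (Jensen-type) inequality for the $L^1$ norm under mixing: I will express both $\pi_t$ and $P^S_t \circ p$ as $p$-weighted averages of densities on $\bm{\theta}_{1:t}$, after which convexity of $\|\cdot\|_1$ under such averaging delivers the bound. Notably, nothing specific about PPRB-within-Gibbs is needed here; the argument only uses that $P^S_t(\bm{\theta}_{1:(t-1)},\cdot)$ is, for each fixed $\bm{\theta}_{1:(t-1)}$, a probability density on $\bm{\theta}_{1:t}$, and that $p$ is a probability density on $\bm{\theta}_{1:(t-1)}$.

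First I would unwind the composition, writing $(P^S_t \circ p)(\bm{\theta}_{1:t}) = \int P^S_t(\bm{\theta}_{1:(t-1)}, \bm{\theta}_{1:t})\, p(\bm{\theta}_{1:(t-1)})\, d\bm{\theta}_{1:(t-1)}$. Next, using that $p$ integrates to one and that $\pi_t(\bm{\theta}_{1:t})$ does not depend on the integration variable, I would rewrite the target as $\pi_t(\bm{\theta}_{1:t}) = \int \pi_t(\bm{\theta}_{1:t})\, p(\bm{\theta}_{1:(t-1)})\, d\bm{\theta}_{1:(t-1)}$, where the argument of $\pi_t$ is held fixed and $\bm{\theta}_{1:(t-1)}$ serves purely as a dummy variable. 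Subtracting the two representations, the difference $(\pi_t - P^S_t \circ p)(\bm{\theta}_{1:t})$ becomes a single integral of $\bigl[\pi_t(\bm{\theta}_{1:t}) - P^S_t(\bm{\theta}_{1:(t-1)}, \bm{\theta}_{1:t})\bigr]$ against the nonnegative weight $p(\bm{\theta}_{1:(t-1)})$.

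Then I would take the $L^1$ norm over $\bm{\theta}_{1:t}$, move the absolute value inside the $\bm{\theta}_{1:(t-1)}$-integral by the triangle inequality, and swap the order of the two integrations by Fubini--Tonelli, which is justified because the integrand is nonnegative. The inner integral over $\bm{\theta}_{1:t}$ is then exactly the pointwise distance $\|\pi_t - P^S_t(\bm{\theta}_{1:(t-1)}, \cdot)\|_1$, leaving $\int \|\pi_t - P^S_t(\bm{\theta}_{1:(t-1)}, \cdot)\|_1\, p(\bm{\theta}_{1:(t-1)})\, d\bm{\theta}_{1:(t-1)}$. Bounding this integrand by its supremum over $\bm{\theta}_{1:(t-1)}$ and invoking $\int p = 1$ yields the claimed inequality.

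The main obstacle is less a calculation than making the spaces line up correctly: $p$ lives on $\bm{\theta}_{1:(t-1)}$, the kernel $P^S_t$ maps a point $\bm{\theta}_{1:(t-1)}$ to a density on the larger space of $\bm{\theta}_{1:t}$, and $\pi_t$ is itself a density on $\bm{\theta}_{1:t}$. The crucial and slightly deceptive step is re-expressing $\pi_t$ as a $p$-average over the dummy variable $\bm{\theta}_{1:(t-1)}$; this is legitimate precisely because $\pi_t(\bm{\theta}_{1:t})$ is constant with respect to that integration and $p$ normalizes to one. Once both objects are averages against the common weight $p$, the remainder is the standard convexity argument, with the interchange of order of integration being the only analytic point that genuinely needs checking.
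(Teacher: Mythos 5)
Your proposal is correct and follows essentially the same route as the paper's proof: rewrite $\pi_t$ as a $p$-weighted average using $\int p = 1$, combine the two integrals, apply the triangle inequality, and bound by the supremum. One small point in your favor: by applying Fubini--Tonelli to compute the inner $L^1$ norm $\|\pi_t - P^S_t(\bm{\theta}_{1:(t-1)},\cdot)\|_1$ \emph{before} passing to the supremum, you avoid the paper's final step, which takes a pointwise supremum inside the $\bm{\theta}_{1:t}$-integral and then asserts $\int \sup = \sup \int$ --- an exchange that in general only gives $\int \sup \geq \sup \int$; your ordering is the rigorous one and arrives at the same bound.
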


\begin{proof}
\begin{align}
||\pi_t - P^S_t \circ p||_1 &= \int\left|\pi_t(\bm{\theta}_{1:t}) - \int p(\bm{\theta}_{1:(t-1)}^\prime)P^S_t(\bm{\theta}_{1:(t-1)}^\prime,\bm{\theta}_{1:t}) \mathrm{d}\bm{\theta}_{1:(t-1)}^\prime\right|\mathrm{d}\bm{\theta}_{1:t} \\
&= \int\left|\int \left( p(\bm{\theta}_{1:(t-1)}^\prime)\pi_t(\bm{\theta}_{1:t}) - p(\bm{\theta}_{1:(t-1)}^\prime)P^S_t(\bm{\theta}_{1:(t-1)}^\prime,\bm{\theta}_{1:t})\right) \mathrm{d}\bm{\theta}_{1:(t-1)}^\prime\right|\mathrm{d}\bm{\theta}_{1:t} \\
&\leq \int\int \left| p(\bm{\theta}_{1:(t-1)}^\prime)\pi_t(\bm{\theta}_{1:t}) - p(\bm{\theta}_{1:(t-1)}^\prime)P^S_t(\bm{\theta}_{1:(t-1)}^\prime,\bm{\theta}_{1:t}) \right| \mathrm{d}\bm{\theta}_{1:(t-1)}^\prime\mathrm{d}\bm{\theta}_{1:t} \\
&= \int\int p(\bm{\theta}_{1:(t-1)}^\prime) \left| \pi_t(\bm{\theta}_{1:t}) - P^S_t(\bm{\theta}_{1:(t-1)}^\prime,\bm{\theta}_{1:t}) \right| \mathrm{d}\bm{\theta}_{1:(t-1)}^\prime\mathrm{d}\bm{\theta}_{1:t} \\
&\leq \int\int p(\bm{\theta}_{1:(t-1)}^\prime) \underset{\bm{\theta}_{1:(t-1)}^\prime}{\sup}\left| \pi_t(\bm{\theta}_{1:t}) - P^S_t(\bm{\theta}_{1:(t-1)}^\prime,\bm{\theta}_{1:t}) \right| \mathrm{d}\bm{\theta}_{1:(t-1)}^\prime\mathrm{d}\bm{\theta}_{1:t} \\
&= \int \underset{\bm{\theta}_{1:(t-1)}^\prime}{\sup}\left| \pi_t(\bm{\theta}_{1:t}) - P^S_t(\bm{\theta}_{1:(t-1)}^\prime,\bm{\theta}_{1:t}) \right| \mathrm{d}\bm{\theta}_{1:t}  \label{eqn:lemma38a-pre-switch-sup-int}\\
&= \underset{\bm{\theta}_{1:(t-1)}^\prime}{\sup}||\pi_t - P^S_t(\bm{\theta}_{1:(t-1)}^\prime,\cdot)||_1 \label{eqn:lemma38a-switch-sup-int}
\end{align}
\end{proof}

\begin{theorem}
\protect\hypertarget{thm:theorem39-restated}{}\label{thm:theorem39-restated}\textbf{(Theorem \ref{thm:theorem39} restated)} Let \(P^S_t(\bm{\theta}_{1:(t-1)},\cdot)\) represent the kernel resulting from \(S\) applications of a filtering method at time \(t\), which is a probability density for \(\bm{\theta}_{1:t} := (\bm{\theta}_{1:(t-1)}, \bm{\theta}_t)\). Let \(\pi_t = p(\bm{\theta}_{1:t}|\bm{y}_{1:t})\) be the target posterior at time \(t\). Assuming the following conditions:

\begin{enumerate}
\def\labelenumi{\arabic{enumi}.}
\tightlist
\item
  (Universal ergodicity) There exist \(\rho_t \in (0,1)\), such that for all \(t>0\) and \(x \in {\cal X}\), \[||T_t(x, \cdot) - \pi_t||_1 \leq 2\rho_t.\]
\item
  (Filtering consistency) For a sequence of \(\lambda_t \to 0\) and a bounded sequence of positive integers \(S_t\), the following holds: \[\underset{\bm{\theta}_{1:(t-1)}}{\sup}||\pi_t - P^{S_t}_t(\bm{\theta}_{1:(t-1)},\cdot)||_1 \leq 2\lambda_t.\]
\end{enumerate}

Let \(\epsilon_t = \rho_t^{m_t}\) and let \(Q_t = T_t^{m_t} \circ P_t^{S_t}\) be a Generative Filtering update at time \(t\). Then for any initial distribution \(\pi_0\),
\begin{equation*}||Q_t \circ \cdots \circ Q_1 \circ \pi_0 - \pi_t||_1 \leq \sum_{v=1}^t\left\{\prod_{u=v+1}^t \epsilon_u(1-\lambda_u)\right\}\epsilon_v\lambda_v \leq \sum_{v=1}^t\left\{\prod_{u=v}^t \epsilon_u\right\}\lambda_v.\end{equation*}
\end{theorem}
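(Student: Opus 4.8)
The plan is to establish the bound by induction on $t$, reducing everything to the one-step recursion $e_t \leq \epsilon_t\lambda_t + \epsilon_t(1-\lambda_t)e_{t-1}$ for the error $e_t := \|Q_t \circ \cdots \circ Q_1 \circ \pi_0 - \pi_t\|_1$, with the convention $e_0 = 0$. The reason to start here is that the closed-form quantity $\sum_{v=1}^t \{\prod_{u=v+1}^t \epsilon_u(1-\lambda_u)\}\epsilon_v\lambda_v$ in the statement is exactly the solution of this linear recursion: unrolling $e_t \leq \epsilon_t\lambda_t + \epsilon_t(1-\lambda_t)e_{t-1}$ identifies the $v$-th summand as the filtering error $\epsilon_v\lambda_v$ injected at time $v$ and then discounted by every later contraction-and-retention factor $\epsilon_u(1-\lambda_u)$. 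Once the recursion is in hand, the second (looser) inequality follows at once by replacing each $(1-\lambda_u) \leq 1$.

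For the inductive step I would write $\nu_{t-1} := Q_{t-1}\circ\cdots\circ Q_1\circ\pi_0$ and peel off the two halves of $Q_t = T_t^{m_t}\circ P_t^{S_t}$ in order. First, because $\pi_t$ is the stationary distribution of $T_t$, the combination of Lemmas 3.1 and 3.2 of \citet{yang2013sequential} recalled in the main text gives the geometric contraction $\|T_t^{m_t}\circ\mu - \pi_t\|_1 \leq \rho_t^{m_t}\|\mu - \pi_t\|_1 = \epsilon_t\|\mu - \pi_t\|_1$ for any distribution $\mu$. Taking $\mu = P_t^{S_t}\circ\nu_{t-1}$ reduces the task to bounding the post-filtering distance $\|P_t^{S_t}\circ\nu_{t-1} - \pi_t\|_1$, and the goal becomes showing this is at most $\lambda_t + (1-\lambda_t)e_{t-1}$, which combined with the contraction yields the recursion.

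The heart of the argument is this filtering-step bound. I would decompose the filtering kernel $P_t^{S_t}(\bm{\theta}_{1:(t-1)},\cdot)$ as a mixture of a \emph{fresh} component, of weight tied to $\lambda_t$, which targets $\pi_t$ to within the Filtering consistency tolerance irrespective of the conditioning point, and a \emph{retained} component of weight $1-\lambda_t$ that merely propagates the incoming ensemble forward. The fresh component contributes the source term $\lambda_t$, controlled through the Filtering consistency condition together with Lemma \ref{lem:lemma38a} (whose proof uses only that $P_t^{S_t}$ is a Markov kernel, hence applies to any filtering method, not just PPRB-within-Gibbs). The retained component carries the previous error $e_{t-1} = \|\nu_{t-1} - \pi_{t-1}\|_1$ forward with discount $1-\lambda_t$, using the non-expansiveness of a Markov kernel in $L_1$, i.e. $\|P_t^{S_t}\circ\nu_{t-1} - P_t^{S_t}\circ\pi_{t-1}\|_1 \leq \|\nu_{t-1}-\pi_{t-1}\|_1$. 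Adding the two pieces gives the desired $\|P_t^{S_t}\circ\nu_{t-1}-\pi_t\|_1 \leq \lambda_t + (1-\lambda_t)e_{t-1}$, and the base case $t=1$ is this same bound read with $e_0 = 0$.

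I expect the main obstacle to be making the mixture decomposition rigorous with exactly the constants $\lambda_t$ and $1-\lambda_t$. The Filtering consistency hypothesis is a uniform total-variation-type statement $\sup_{\bm{\theta}_{1:(t-1)}}\|\pi_t - P_t^{S_t}(\bm{\theta}_{1:(t-1)},\cdot)\|_1 \leq 2\lambda_t$, and converting it into a genuine convex splitting of $P_t^{S_t}$ into fresh and retained parts — so that the fresh residual and the retention weight both align with $\lambda_t$ — is where a careful coupling argument is needed, including clean bookkeeping of the factor two between the $L_1$ norm and total variation. Everything else (the contraction, the non-expansiveness, and the unrolling of the recursion into the stated sum) is routine once this splitting is secured.
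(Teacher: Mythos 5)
Your plan is correct and is essentially the paper's own argument: the paper establishes exactly your recursion $e_t \le \epsilon_t\lambda_t + \epsilon_t(1-\lambda_t)e_{t-1}$ by constructing a global coupling of two time-inhomogeneous chains, in which the maximal coupling at the filtering step ($r=1$) plays the role of your fresh/retained mixture decomposition, the step-by-step coupling of the transition iterations plays the role of your contraction $\|T_t^{m_t}\circ\mu-\pi_t\|_1\le\epsilon_t\|\mu-\pi_t\|_1$, and Lemma \ref{lem:lemma38a} is used, as you propose, to convert the uniform filtering-consistency bound into a bound at the incoming mixed distribution. The factor-of-two bookkeeping you flag as the main obstacle is a genuine subtlety, but it is handled no more carefully in the paper's proof (which bounds $\|\cdot\|_1$ by $P(X_{t,m_t}\neq X'_{t,m_t})$ rather than $2P(X_{t,m_t}\neq X'_{t,m_t})$), so your outline matches the paper's route and level of rigor.
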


\begin{proof}
The proof follows closely the proof of Theorem 3.9 in \citet{yang2013sequential}.

We will construct two time inhomogeneous Markov chains \(\{X_{t,r}:r=1,\dots,m_t, t \geq 0\}\) and \(\{X_{t,r}^\prime:r=1,\dots,m_t, t \geq 0\}\). The chains proceed in the double index first in \(r\) then \(t\), i.e., \((t,r) = (0,1),\dots,(0,m_0),(1,1),\dots,(1,m_1),\dots\). The two chains are constructed as follows:

\begin{enumerate}
\def\labelenumi{\arabic{enumi}.}
\tightlist
\item
  \(X_{0,1} \sim \pi_0\), \(X^\prime_{0,1} \sim \pi_0\).
\item
  For \(t \geq 1\)

  \begin{enumerate}
  \def\labelenumii{\alph{enumii}.}
  \tightlist
  \item
    For \(r=1\). Let \(X_{t-1,m_{t-1}} = x\), \(X^\prime_{t-1,m_{t-1}} = x^\prime\). Draw \(X_{t,1} = x^\ast \sim P^{S_t}_t(x, \cdot)\). With probability \(\min\left\{1,\frac{\pi_t(x^\ast)}{P_t^{S_t}\circ\pi_{t-1}(x^\ast)}\right\}\), set \(X^\prime_{t,1}=x^\ast\); with probability \(1 - \min\left\{1,\frac{\pi_t(x^\ast)}{P_t^{S_t}\circ\pi_{t-1}(x^\ast)}\right\}\), draw \begin{equation}X^\prime_{t,1} \sim \frac{\pi_t(\cdot) - \min\left\{\pi_t(\cdot),P^{S_t}_t\circ\pi_{t-1}(\cdot)\right\}}{\tilde{\alpha}_t},\label{eqn:x-transition-between}\end{equation} where \(\tilde\alpha_t = \frac{1}{2}||\pi_t - P^{S_t}_t\circ\pi_{t-1}||_1\).
  \item
    For \(1 < r \leq m_t\). Let \(X_{t,r-1}=x\) and \(X^\prime_{t,r-1}=x^\prime\).

    \begin{enumerate}
    \def\labelenumiii{\roman{enumiii}.}
    \tightlist
    \item
      If \(x=x^\prime\), choose \(X_{t,r}=X^\prime_{t,r}\sim T_t(x,\cdot)\);
    \item
      else, first choose \(X^\prime_{t,r}=y\sim T_t(x^\prime, \cdot)\), then with probability \(\min\left\{1,\frac{T_t(x,y)}{\pi_t(y)}\right\}\), set \(X_{t,s}=y\), with probability \(1 - \min\left\{1,\frac{T_t(x,y)}{\pi_t(y)}\right\}\), draw \begin{equation}X_{t,r} \sim \frac{T_t(x,\cdot) - \min\{T_t(x,\cdot), \pi_t(\cdot)\}}{\delta_t(x)},\label{eqn:x-transition-within}\end{equation} where \(\delta_t(x) = \frac{1}{2}||T_t(x,\cdot) - \pi_t||_1\).
    \end{enumerate}
  \end{enumerate}
\end{enumerate}

First, for \(t \geq 1\) and \(1<r\leq m_t\), both chains have the same transition kernel, \(T_t\), which targets \(\pi_t\). This is apparent for \(\{X^\prime\}_{t,s}\), while for \(\{X\}_{t,s}\), we can see that its transition kernel is a mixture of \(\pi_t\) and the distribution given by (\ref{eqn:x-transition-within}) which equals \(T_t\). For \(t \geq 1\) and \(r = 1\), the distribution of \(X^\prime_{t,1}\) is \(\pi_t\) because its distribution is a mixture of \(P^{S_t}_t \circ \pi_{t-1}\) and the distribution given by (\ref{eqn:x-transition-between}), which equals \(\pi_t\). Therefore for any \((t,r)\), the marginal distribution of \(X^\prime_{t,r}\) is \(\pi_t\).

For any \((t, r)\), the marginal distribution of \(X_{t,r}\) is \(T_t^r \circ P^{S_t}_t \circ Q_{t-1} \circ \cdots \circ Q_1 \circ \pi_0\). Therefore,
\begin{equation}||Q_t \circ \cdots \circ Q_1 \circ \pi_0 - \pi_t||_1 \leq P(X_{t,m_t} \neq X^\prime_{t, m_t}).\end{equation}

Conditional on \(X_{t-1,m_{t-1}} = X^\prime_{t-1,m_{t-1}}\), the distribution of \(X_{t-1,m_{t-1}}\) is \(\pi_{t-1}\). So \(P(X_{t,1} \neq X^\prime_{t,1}|X_{t-1,m_{t-1}} = X^\prime_{t-1,m_{t-1}}) = \tilde\alpha_t\), which by Lemma \ref{lem:lemma38a}, \(\tilde\alpha_t \leq \lambda_t\).

Then,
\begin{align}
P(X_{t,m_t} \neq X^\prime_{t,m_t}) &= P(X_{t-1,m_{t-1}} \neq X^\prime_{t-1,m_{t-1}}, X_{t,m_t} \neq X^\prime_{t,m_t}) \\
&\quad + P(X_{t-1,m_{t-1}} = X^\prime_{t-1,m_{t-1}}, X_{t,m_t} \neq X^\prime_{t,m_t}) \\
&= P(X_{t,m_t} \neq X^\prime_{t,m_t} | X_{t-1,m_{t-1}} \neq X^\prime_{t-1,m_{t-1}}) \cdot P(X_{t-1,m_{t-1}} \neq X^\prime_{t-1,m_{t-1}}) \label{eqn:notequal-case-pre} \\
&\quad + P(X_{t,m_t} \neq X^\prime_{t,m_t} | X_{t-1,m_{t-1}} = X^\prime_{t-1,m_{t-1}}) \cdot (1 - P(X_{t-1,m_{t-1}} \neq X^\prime_{t-1,m_{t-1}})) \label{eqn:equal-case-pre}\\
&\leq \rho_t^{m_t} \cdot P(X_{t-1,m_{t-1}} \neq X^\prime_{t-1,m_{t-1}}) \label{eqn:notequal-case-post} \\
&\quad + \tilde{\alpha}_t\rho_t^{m_t} \cdot (1 - P(X_{t-1,m_{t-1}} \neq X^\prime_{t-1,m_{t-1}})) \label{eqn:equal-case-post} \\
&\leq \rho_t^{m_t} \cdot P(X_{t-1,m_{t-1}} \neq X^\prime_{t-1,m_{t-1}}) \\
&\quad + \lambda_t\rho_t^{m_t} \cdot (1 - P(X_{t-1,m_{t-1}} \neq X^\prime_{t-1,m_{t-1}})) \\
&= \lambda_t\rho_t^{m_t} + (1 - \lambda_t)\rho_t^{m_t}\cdot P(X_{t-1,m_{t-1}} \neq X^\prime_{t-1,m_{t-1}}) \label{eqn:sub-pprb-convergence-lemma-result}
\end{align}

Line (\ref{eqn:notequal-case-post}) follows from line (\ref{eqn:notequal-case-pre}) and line (\ref{eqn:equal-case-post}) follows from line (\ref{eqn:equal-case-pre}) because \(\rho_t\) is the probability of \(X_{t,r}\) and \(X_{t,r}^\prime\) remaining unequal given that the chains are unequal at step \(r-1\), and \(P(X_{t,1} \neq X^\prime_{t,1}|X_{t-1,m_{t-1}} = X^\prime_{t-1,m_{t-1}}) = \tilde\alpha_t\) and \(P(X_{t,1} \neq X^\prime_{t,1}|X_{t-1,m_{t-1}} \neq X^\prime_{t-1,m_{t-1}}) \leq 1\). Line (\ref{eqn:sub-pprb-convergence-lemma-result}) follows from Lemma \ref{lem:lemma38a}.

There is now a recursive relation ship between \(t\) and \(t-1\). We can repeat this for all \(t >= 1\), and using \(P(X_{0,m_0} \neq X^\prime_{0,m_0}) \leq 1\) and \(\epsilon_t = \rho_t^{m_t}\), we arrive at the result.
\end{proof}

\begin{theorem}
\protect\hypertarget{thm:bound-inequality-restated}{}\label{thm:bound-inequality-restated}\textbf{(Theorem \ref{thm:bound-inequality} restated)} Assume the following conditions hold:

\begin{enumerate}
\def\labelenumi{\arabic{enumi}.}
\tightlist
\item
  (Universal ergodicity) There exists \(\epsilon \in (0,1)\), such that for all \(t>0\) and \(x \in {\cal X}\), \[||T_t(x, \cdot) - \pi_t||_1 \leq 2\rho_t.\]
\item
  (Stationary convergence) The stationary distribution \(\pi_t\) of \(T_t\) satisfies \[\alpha_t = \frac{1}{2}||\pi_t - \pi_{t-1}||_1 \to 0,\] where \(\pi_t\) is the marginal posterior of \(\theta_{1:(t-1)}\) at time \(t\) in \(\alpha_t\).
\item
  (Filtering consistency) For a sequence of \(\lambda_t^{(F)} \to 0\) and a bounded sequence of positive integers \(S_t\), the following holds: \[\underset{\theta_{1:(t-1)}}{\sup}||\pi_t - P^{S_t}_t(\theta_{1:(t-1)},\cdot)||_1 \leq 2\lambda_t^{(F)}.\]
\item
  (Jumping consistency) For a sequence of \(\lambda_t^{(J)} \to 0\), the following holds: \[\underset{\theta_{1:(t-1)}}{\sup} ||\pi_t(\cdot | \theta_{1:(t-1)}) - J_t(\theta_{1:(t-1)}, \cdot)||_1 \leq 2\lambda_t^{(J)}.\]
\end{enumerate}

Let \(\epsilon_t = \rho_t^{m_t}\). Define \[\gamma^{(F)}_t = \sum_{v=1}^t\left\{\prod_{u=v+1}^t \epsilon_u(1-\lambda^{(F)}_u)\right\}\epsilon_v\lambda^{(F)}_v\] and \[\gamma^{(J)}_t = \sum_{v=1}^t\left\{\prod_{u=v}^t \epsilon_u\right\}(\lambda^{(J)}_v + \alpha_v)\] to be the bounds from Theorem \ref{thm:theorem39} and Theorem 3.9 of \citet{yang2013sequential}, respectively. If, for all \(u \leq t\), \(\lambda^{(F)}_u \leq \alpha_u + \lambda^{(J)}_u\), then \(\gamma_t^{(F)} \leq \gamma_t^{(J)}\).
\end{theorem}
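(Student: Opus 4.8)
The plan is to prove the inequality by a termwise comparison of the two sums, after factoring out a common geometric product. The key observation is that both $\gamma_t^{(F)}$ and $\gamma_t^{(J)}$ are sums over $v=1,\dots,t$ whose summands share the factor $\epsilon_v\prod_{u=v+1}^{t}\epsilon_u$.

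First I would rewrite the $(J)$ summand using $\prod_{u=v}^{t}\epsilon_u=\epsilon_v\prod_{u=v+1}^{t}\epsilon_u$, so that the $v$-th term of $\gamma_t^{(J)}$ is $\epsilon_v(\alpha_v+\lambda^{(J)}_v)\prod_{u=v+1}^{t}\epsilon_u$, while the $v$-th term of $\gamma_t^{(F)}$ is $\epsilon_v\lambda^{(F)}_v\prod_{u=v+1}^{t}\epsilon_u(1-\lambda^{(F)}_u)$, with the empty product at $v=t$ read as $1$. Next I would note that each factor $1-\lambda^{(F)}_u$ lies in $[0,1]$: the Filtering consistency hypothesis bounds an $L_1$ distance between two probability measures by $2\lambda^{(F)}_u$, and since that distance is between $0$ and $2$ we have $0\le\lambda^{(F)}_u\le 1$. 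Consequently $\prod_{u=v+1}^{t}(1-\lambda^{(F)}_u)\le 1$, and because every $\epsilon_u\ge 0$ this gives, for each $v$,
\[
\epsilon_v\lambda^{(F)}_v\prod_{u=v+1}^{t}\epsilon_u(1-\lambda^{(F)}_u)\;\le\;\epsilon_v\lambda^{(F)}_v\prod_{u=v+1}^{t}\epsilon_u.
\]

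Then I would apply the standing assumption $\lambda^{(F)}_v\le\alpha_v+\lambda^{(J)}_v$ to the remaining $\lambda^{(F)}_v$, obtaining
\[
\epsilon_v\lambda^{(F)}_v\prod_{u=v+1}^{t}\epsilon_u\;\le\;\epsilon_v(\alpha_v+\lambda^{(J)}_v)\prod_{u=v+1}^{t}\epsilon_u,
\]
which is precisely the $v$-th summand of $\gamma_t^{(J)}$. Summing these termwise inequalities over $v=1,\dots,t$ yields $\gamma_t^{(F)}\le\gamma_t^{(J)}$, as desired. Here I would record that all summands are nonnegative (since $\epsilon_u,\lambda^{(F)}_v,\lambda^{(J)}_v\ge 0$ and $\alpha_v=\tfrac12\|\pi_v-\pi_{v-1}\|_1\ge 0$), so the termwise bound transfers to the sums without sign complications.

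The argument is a routine monotonicity comparison rather than a deep one, so I do not expect a substantive obstacle. The only points needing care are justifying $\lambda^{(F)}_u\le 1$—so that dropping the factors $(1-\lambda^{(F)}_u)$ is an upper bound rather than a lower one—and correctly handling the empty-product edge case at $v=t$, where the $(F)$ summand reduces to $\epsilon_t\lambda^{(F)}_t$ and is compared directly against $\epsilon_t(\alpha_t+\lambda^{(J)}_t)$.
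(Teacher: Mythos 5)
Your proof is correct and follows essentially the same route as the paper's: drop the factors $(1-\lambda^{(F)}_u)\le 1$ to bound each summand by $\bigl\{\prod_{u=v}^{t}\epsilon_u\bigr\}\lambda^{(F)}_v$, then apply the hypothesis $\lambda^{(F)}_v\le\alpha_v+\lambda^{(J)}_v$ termwise and sum. Your added justification that $0\le\lambda^{(F)}_u\le 1$ (from the $L_1$ distance between probability measures being at most $2$) is a detail the paper leaves implicit.
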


\begin{proof}
Define \[\gamma^{(F)}_t = \sum_{v=1}^t\left\{\prod_{u=v+1}^t \epsilon_u(1-\lambda^{(F)}_u)\right\}\epsilon_v\lambda^{(F)}_s\] and \[\gamma^{(J)}_t = \sum_{v=1}^t\left\{\prod_{u=v}^t \epsilon_u\right\}(\lambda^{(J)}_v + \alpha_v).\] Assume that for all \(u \leq t\), \(\lambda^{(F)}_u \leq \alpha_u + \lambda^{(J)}_u\). Then we have,
\begin{align*}
\gamma^{(F)}_t &= \sum_{v=1}^t\left\{\prod_{u=v+1}^t \epsilon_u(1-\lambda^{(F)}_u)\right\}\epsilon_v\lambda^{(F)}_v \\
&\leq \sum_{v=1}^t\left\{\prod_{u=v+1}^t \epsilon_u\right\}\epsilon_v\lambda^{(F)}_v \\
&= \sum_{v=1}^t\left\{\prod_{u=v}^t \epsilon_u\right\}\lambda^{(F)}_v \\
&\leq \sum_{v=1}^t\left\{\prod_{u=v}^t \epsilon_u\right\}(\lambda^{(J)}_v + \alpha_v) \\
&= \gamma^{(J)}_t.
\end{align*}

Then \(\gamma_t^{(F)} \leq \gamma_t^{(J)}\).
\end{proof}

\begin{theorem}
\protect\hypertarget{thm:bound-recursive-inequality-restated}{}\label{thm:bound-recursive-inequality-restated}\textbf{(Theorem \ref{thm:bound-recursive-inequality} restated)} With the conditions and definitions of Theorem \ref{thm:bound-inequality}, assume \(\gamma^{(F)}_{t-1} = \gamma^{(J)}_{t-1}\) and define \(\gamma := \gamma^{(F)}_{t-1} = \gamma^{(J)}_{t-1}\). If \(\gamma < 1\) and \(\lambda^{(F)}_t \leq \frac{\alpha_t + \lambda^{(J)}_t}{1-\gamma}\), then \(\gamma^{(F)}_{t} \leq \gamma^{(J)}_{t}\). If \(\gamma \geq 1\) then \(\gamma^{(F)}_{t} \leq \gamma^{(J)}_{t}\) always.
\end{theorem}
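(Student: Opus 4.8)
The plan is to reduce each bound to a one-step recursion in $t$ and then verify a single scalar inequality by a short case split. First I would peel off the $v=t$ summand from each defining sum. For the Generative Filtering bound, the $v=t$ term is $\epsilon_t \lambda^{(F)}_t$ (the inner product over $u=t+1,\dots,t$ is empty), and factoring $\epsilon_t(1-\lambda^{(F)}_t)$ out of the remaining $v \leq t-1$ terms re-indexes the leftover sum into $\gamma^{(F)}_{t-1}$. This yields
\begin{equation*}
\gamma^{(F)}_t = \epsilon_t(1-\lambda^{(F)}_t)\,\gamma^{(F)}_{t-1} + \epsilon_t \lambda^{(F)}_t.
\end{equation*}
An identical peeling argument for the SMCMC bound, this time factoring out $\epsilon_t$, gives
\begin{equation*}
\gamma^{(J)}_t = \epsilon_t\,\gamma^{(J)}_{t-1} + \epsilon_t(\lambda^{(J)}_t + \alpha_t).
\end{equation*}

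With these recursions in hand, I would substitute the hypothesis $\gamma^{(F)}_{t-1} = \gamma^{(J)}_{t-1} = \gamma$ and cancel the common positive factor $\epsilon_t$. The target inequality $\gamma^{(F)}_t \leq \gamma^{(J)}_t$ then collapses to
\begin{equation*}
(1-\lambda^{(F)}_t)\gamma + \lambda^{(F)}_t \leq \gamma + \lambda^{(J)}_t + \alpha_t,
\end{equation*}
which rearranges to the single scalar condition $\lambda^{(F)}_t(1-\gamma) \leq \lambda^{(J)}_t + \alpha_t$.

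The final step is a case split on the sign of $1-\gamma$. When $\gamma < 1$ the factor $1-\gamma$ is positive, so dividing through reproduces exactly the stated hypothesis $\lambda^{(F)}_t \leq (\alpha_t + \lambda^{(J)}_t)/(1-\gamma)$, establishing that branch. When $\gamma \geq 1$ the left-hand side $\lambda^{(F)}_t(1-\gamma)$ is nonpositive while the right-hand side $\lambda^{(J)}_t + \alpha_t$ is nonnegative, so the inequality holds unconditionally, yielding the ``always'' branch. The main obstacle is purely the index bookkeeping in extracting the two recursions: I must confirm that the empty-product convention makes the $v=t$ terms equal $\epsilon_t\lambda^{(F)}_t$ and $\epsilon_t(\lambda^{(J)}_t+\alpha_t)$ respectively, and that the residual sums re-index cleanly onto the previous-time bounds. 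Once the recursions are verified, the remaining algebra and the two-way case analysis are routine.
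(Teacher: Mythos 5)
Your proposal is correct and follows essentially the same route as the paper's proof: both establish the one-step recursions $\gamma^{(F)}_t = \epsilon_t(1-\lambda^{(F)}_t)\gamma^{(F)}_{t-1} + \epsilon_t\lambda^{(F)}_t$ and $\gamma^{(J)}_t = \epsilon_t\gamma^{(J)}_{t-1} + \epsilon_t(\lambda^{(J)}_t+\alpha_t)$, substitute $\gamma$, reduce to the scalar inequality $\lambda^{(F)}_t(1-\gamma) \leq \lambda^{(J)}_t + \alpha_t$, and close with the same case split on the sign of $1-\gamma$. The only difference is that you spell out the peeling/re-indexing derivation of the recursions, which the paper asserts without proof.
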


\begin{proof}
We have the following recursive relationships for \(\gamma^{(F)}_{t}\) and \(\gamma^{(J)}_{t}\),
\begin{align}
\gamma^{(J)}_{t} &= \epsilon_t\gamma^{(J)}_{t-1}+\epsilon_t(\alpha_t + \lambda^{(J)}_t) \\
\gamma^{(F)}_{t} &= \epsilon_t(1-\lambda^{(F)})\gamma^{(F)}_{t-1}+\epsilon_t\lambda^{(F)}_t
\end{align}

Then for \(\gamma < 1\),
\begin{align}
\gamma^{(F)}_{t} &= \epsilon_t(1-\lambda^{(F)})\gamma^{(F)}_{t-1}+\epsilon_t\lambda^{(F)}_t \\
&= \epsilon_t(1-\lambda^{(F)})\gamma+\epsilon_t\lambda^{(F)}_t \\
&= \epsilon_t\lambda^{(F)}(1-\gamma)+\epsilon_t\gamma \\
&\leq \epsilon_t\frac{\alpha_t + \lambda^{(J)}_t}{1-\gamma}(1-\gamma)+\epsilon_t\gamma \\
&= \epsilon_t(\alpha_t + \lambda^{(J)}_t)+\epsilon_t\gamma \\
&= \gamma^{(J)}_{t}.
\end{align}

For \(\gamma \geq 1\),
\begin{align}
\gamma^{(F)}_{t} &= \epsilon_t(1-\lambda^{(F)})\gamma+\epsilon_t\lambda^{(F)}_t \\
&= \epsilon_t\lambda^{(F)}(1-\gamma)+\epsilon_t\gamma \\
&\leq \epsilon_t\gamma \\
&\leq \epsilon_t\gamma + \epsilon_t(\alpha_t + \lambda^{(J)}_t) \\
&= \gamma^{(J)}_{t}
\end{align}
\end{proof}

\begin{theorem}
\protect\hypertarget{thm:data-exponential-family-restated}{}\label{thm:data-exponential-family-restated}\textbf{(Theorem \ref{thm:data-exponential-family} restated)} Assume:

\begin{enumerate}
\def\labelenumi{\arabic{enumi}.}
\tightlist
\item
  The data \(\bm{y}_{t_1}\) and \(\bm{y}_{t_2}\), for all \(t_1 < t_2\), are conditionally independent given \(\bm{\theta}_{1:t_2}\).
\item
  Each \(\bm{y}_t\) is a sample of \(n_t\) i.i.d. observations \(\bm{y}_{t,i}\) for \(i=1,\dots,n_t\).
\item
  Each observation \(\bm{y}_{t,i}\) comes from an exponential family distribution.
\end{enumerate}

Then storage of the full data can be avoided through the use of sufficient statistics.
\end{theorem}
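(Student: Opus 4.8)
The plan is to show that the posterior density $p(\bm{\theta}_{1:t} \mid \bm{y}_{1:t})$, which the transition kernel $T_t$ targets, depends on the data $\bm{y}_{1:t}$ only through a collection of finite-dimensional statistics. Since any Metropolis--Hastings or Gibbs transition kernel requires the target only up to a normalizing constant (through density ratios or full conditionals), it suffices to exhibit this dependence in the unnormalized posterior.

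First I would factorize the likelihood across time. By the model specification each batch $\bm{y}_\tau$ depends only on $\bm{\theta}_{1:\tau}$, and by Assumption 1 the batches are mutually conditionally independent given $\bm{\theta}_{1:t}$, so
\begin{equation*}
p(\bm{y}_{1:t} \mid \bm{\theta}_{1:t}) = \prod_{\tau=1}^{t} p(\bm{y}_\tau \mid \bm{\theta}_{1:\tau}).
\end{equation*}
Next, by Assumption 2 each batch is a product over its $n_\tau$ i.i.d. observations, and by Assumption 3 each observation has an exponential-family density, which I would write in canonical form as
\begin{equation*}
p(\bm{y}_{\tau,i} \mid \bm{\theta}_{1:\tau}) = h_\tau(\bm{y}_{\tau,i}) \exp\!\left\{ \bm{\eta}_\tau(\bm{\theta}_{1:\tau})^\top \bm{T}_\tau(\bm{y}_{\tau,i}) - A_\tau(\bm{\theta}_{1:\tau}) \right\},
\end{equation*}
allowing the natural parameter $\bm{\eta}_\tau$, sufficient statistic $\bm{T}_\tau$, log-partition $A_\tau$, and base measure $h_\tau$ to depend on $\tau$ so as to accommodate the time-varying data distributions permitted by the theorem.

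Multiplying over the $n_\tau$ observations in each batch collapses the product into a single exponential whose data dependence enters only through $U_\tau(\bm{y}_\tau) := \sum_{i=1}^{n_\tau} \bm{T}_\tau(\bm{y}_{\tau,i})$, giving
\begin{equation*}
p(\bm{y}_{1:t} \mid \bm{\theta}_{1:t}) = \left(\prod_{\tau=1}^{t}\prod_{i=1}^{n_\tau} h_\tau(\bm{y}_{\tau,i})\right) \exp\!\left\{ \sum_{\tau=1}^{t} \left[ \bm{\eta}_\tau(\bm{\theta}_{1:\tau})^\top U_\tau(\bm{y}_\tau) - n_\tau A_\tau(\bm{\theta}_{1:\tau}) \right] \right\}.
\end{equation*}
The leading product of base measures is free of $\bm{\theta}_{1:t}$, so after multiplying by the prior $p(\bm{\theta}_{1:t})$ and normalizing it cancels, leaving
\begin{equation*}
p(\bm{\theta}_{1:t} \mid \bm{y}_{1:t}) \propto p(\bm{\theta}_{1:t}) \exp\!\left\{ \sum_{\tau=1}^{t} \left[ \bm{\eta}_\tau(\bm{\theta}_{1:\tau})^\top U_\tau(\bm{y}_\tau) - n_\tau A_\tau(\bm{\theta}_{1:\tau}) \right] \right\},
\end{equation*}
which depends on the data only through $\{U_\tau(\bm{y}_\tau)\}_{\tau=1}^t$ together with the fixed counts $\{n_\tau\}$. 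To finish, I would note the same holds for every full conditional, since those are obtained by restricting this unnormalized density to one coordinate block and renormalizing; hence every density ratio and acceptance probability consumed by $T_t$ is computable from $\{U_\tau\}$ alone, and the full data need not be stored.

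I expect the only genuinely delicate step is the first factorization: one must verify that Assumption 1 together with the model's restriction of $\bm{y}_\tau$ to $\bm{\theta}_{1:\tau}$ truly yields the clean product over batches, rather than merely pairwise independence. Everything afterward is bookkeeping, the key conceptual point being that the parameter-free base measures are irrelevant precisely because the transition kernel uses the target only up to proportionality.
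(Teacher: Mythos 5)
Your proof is correct and follows essentially the same route as the paper's: write each observation in exponential-family form, sum the natural statistics within a batch to obtain $U_\tau(\bm{y}_\tau) = \sum_{i=1}^{n_\tau} \bm{T}_\tau(\bm{y}_{\tau,i})$, and observe that the transition kernel only ever needs the target up to proportionality, so the data-dependent base measures are irrelevant. Your version is in fact more complete than the paper's, which omits the explicit time-factorization of the likelihood and the cancellation of the $h_\tau$ terms that you spell out.
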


\begin{proof}
We have
\begin{equation}
p(\bm{y}_{t,i}|\bm{\theta}_{1:t}) = h(\bm{y}_{t,i})g(\bm{\theta}_{1:t})\exp\left\{\eta^\prime(\bm{\theta}_{1:t}) \cdot T(\bm{y}_{t,i})\right\},
\end{equation}
where \(h\) and \(g\) are scalar-valued functions, and \(\eta\) and \(T\) are (possibly) vector-valued functions of the same dimension. Then
\begin{equation}
U(\bm{y}_t) := \sum_{i=1}^{n_t} T(\bm{y}_{t,i})
\end{equation}
is a sufficient statistic for the distribution \(p(\bm{y}_t | \bm{\theta}_{1:t}) = \prod_{i=1}^{n_t} p(\bm{y}_{t,i}|\bm{\theta}_{1:t})\). Further, \(\dim U(\bm{y}_t) \approx \dim \bm{\theta}_{1:t}\), with \(\dim U(\bm{y}_t) \leq \dim \bm{\theta}_{1:t}\) unless the distribution is curved.

Then, any transition kernel can be computed while only storing the sufficient statistics, \(U_t\).
\end{proof}

\hypertarget{sec:pprb-within-gibbs-approximation-error}{%
\section{PPRB-within-Gibbs approximation error}\label{sec:pprb-within-gibbs-approximation-error}}

Section \ref{sec:bounds-on-pprb-approximation-error} deals only with the case when PPRB is used with a parameter space for \(\bm{\theta}\) which is not expanding. In this section, we extend these results to the case in which the parameter space expands with new data.

\begin{lemma}
\protect\hypertarget{lem:joint-lp-norm}{}\label{lem:joint-lp-norm}Let \(f_1(x)\) and \(f_2(x)\) be densities on the same measure space. Let \(f_x(y) := f(y|x)\) be a the probability distribution of \(y\) conditioned on \(x\), and define the joint distributions \(f_i(x,y) = f(y|x)f_i(x)\). Then \(\underset{x}{\inf}\|f_x(y)\|_p \cdot \|f_1(x) - f_2(x)\|_p \leq \|f_1(x,y) - f_2(x,y)\|_p \leq \underset{x}{\sup}\|f_x(y)\|_p \cdot \|f_1(x) - f_2(x)\|_p\).
\end{lemma}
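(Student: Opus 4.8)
The plan is to reduce the joint $L^p$ difference to a one-dimensional integral over $x$ by exploiting the shared conditional factor in the two joint densities, and then to sandwich the conditional contribution pointwise between its infimum and supremum over $x$. The argument is essentially a computation; the only genuine structural fact being used is that $f_1$ and $f_2$ enter the joint densities only through the common kernel $f_x(y)$.

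First I would record the key factorization. Since $f_i(x,y) = f(y\mid x)\,f_i(x) = f_x(y)\,f_i(x)$, the two joints share the same conditional factor, so
\[
f_1(x,y) - f_2(x,y) = f_x(y)\bigl(f_1(x) - f_2(x)\bigr).
\]
Because $f_x(y)\geq 0$, taking absolute values and $p$-th powers gives $|f_1(x,y) - f_2(x,y)|^p = f_x(y)^p\,|f_1(x) - f_2(x)|^p$. Then I would integrate, first in $y$ and then in $x$, invoking Tonelli's theorem (all integrands are nonnegative, so there are no integrability concerns):
\[
\|f_1(x,y) - f_2(x,y)\|_p^p = \int |f_1(x)-f_2(x)|^p\left(\int f_x(y)^p\,dy\right)dx = \int |f_1(x)-f_2(x)|^p\,\|f_x\|_p^p\,dx.
\]
This expresses the joint norm as a single integral of $\|f_x\|_p^p$ against the nonnegative weight $|f_1(x)-f_2(x)|^p$.

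The crux, though elementary, is the two-sided bound on this weighted integral. For every $x$ one has $\inf_{x'}\|f_{x'}\|_p^p \leq \|f_x\|_p^p \leq \sup_{x'}\|f_{x'}\|_p^p$; multiplying each side by the nonnegative weight $|f_1(x)-f_2(x)|^p$ and integrating preserves the inequalities, yielding
\[
\inf_x\|f_x\|_p^p\cdot\|f_1 - f_2\|_p^p \;\leq\; \|f_1(x,y)-f_2(x,y)\|_p^p \;\leq\; \sup_x\|f_x\|_p^p\cdot\|f_1 - f_2\|_p^p.
\]
Taking $p$-th roots, and using that $t\mapsto t^{1/p}$ is monotone on $[0,\infty)$ so that it commutes with $\inf$ and $\sup$ of nonnegative quantities, gives exactly the stated bounds.

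The hard part will be nothing more than a boundary case: the argument above is written for finite $p$, and the case $p=\infty$ requires replacing the integral representation with an essential-supremum argument, bounding the conditional essential supremum of $f_x(y)\,|f_1(x)-f_2(x)|$ over both coordinates by $\sup_x\|f_x\|_\infty$ times the essential supremum in $x$, with the analogous lower bound. I expect this bookkeeping to be the only mild obstacle; for $1\leq p<\infty$ the result is the direct computation outlined above.
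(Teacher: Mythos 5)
Your proof is correct and follows essentially the same route as the paper's: factor out the common conditional density $f_x(y)$, reduce the joint $L^p$ norm to a weighted integral of $\|f_x\|_p^p$ against $|f_1(x)-f_2(x)|^p$, and bound that weight pointwise by its infimum and supremum before taking $p$-th roots. Your additional remark about the $p=\infty$ case is a reasonable refinement the paper does not address, but otherwise the two arguments coincide.
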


\begin{proof}
\begin{align}
\|f_1(x,y) - f_2(x,y)\|_p &= \left(\int\int \left|f(y|x)f_1(x) - f(y|x)f_2(x)\right|^p\ \mathrm{d}y\mathrm{d}x \right)^{\frac{1}{p}} \\
&= \left(\int\int f_x(y)^p \left|f_1(x) - f_2(x)\right|^p\ \mathrm{d}y\mathrm{d}x\right)^{\frac{1}{p}} \\
&= \left(\int \|f_x(y)\|_p^p \left|f_1(x) - f_2(x)\right|^p\ \mathrm{d}x\right)^{\frac{1}{p}} \\
&\leq \left(\int \underset{x}{\sup}\|f_x(y)\|_p^p \left|f_1(x) - f_2(x)\right|^p\ \mathrm{d}x\right)^{\frac{1}{p}} \\
&= \underset{x}{\sup}\|f_x(y)\|_p \cdot \|f_1(x) - f_2(x)\|_p.
\end{align}
Similarly,
\begin{align}
\|f_1(x,y) - f_2(x,y)\|_p &= \left(\int \|f_x(y)\|_p^p \left|f_1(x) - f_2(x)\right|^p\ \mathrm{d}x\right)^{\frac{1}{p}} \\
&\geq \left(\int \underset{x}{\inf}\|f_x(y)\|_p^p \left|f_1(x) - f_2(x)\right|^p\ \mathrm{d}x\right)^{\frac{1}{p}} \\
&= \underset{x}{\inf}\|f_x(y)\|_p \cdot \|f_1(x) - f_2(x)\|_p.
\end{align}
\end{proof}

\begin{corollary}
\protect\hypertarget{cor:joint-l1-norm}{}\label{cor:joint-l1-norm}Let \(f_1(x)\) and \(f_2(x)\) be densities on the same measure space. Let \(f_x(y) := f(y|x)\) be a the probability distribution of \(y\) conditioned on \(x\), and define the joint distributions \(f_i(x,y) = f(y|x)f_i(x)\). Let \(p=1\). Then \(\|f_1(x,y) - f_2(x,y)\|_1 = \|f_1(x) - f_2(x)\|_1\).
\end{corollary}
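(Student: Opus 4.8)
The plan is to obtain the corollary as an immediate specialization of Lemma \ref{lem:joint-lp-norm} to $p=1$, using the fact that a conditional density always has unit $L^1$ norm. First I would observe that for each fixed $x$, the function $f_x(y) = f(y\mid x)$ is a probability density in $y$, hence nonnegative and integrating to one, so that
\[
\|f_x(y)\|_1 = \int \left| f(y\mid x) \right|\, \mathrm{d}y = \int f(y\mid x)\, \mathrm{d}y = 1.
\]
Because this holds for \emph{every} $x$, both the infimum and the supremum of $\|f_x(y)\|_1$ over $x$ collapse to the same constant, namely $\underset{x}{\inf}\|f_x(y)\|_1 = \underset{x}{\sup}\|f_x(y)\|_1 = 1$.

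Next I would substitute these two values into the two-sided bound furnished by Lemma \ref{lem:joint-lp-norm} with $p=1$, giving
\[
\|f_1(x) - f_2(x)\|_1 \;\leq\; \|f_1(x,y) - f_2(x,y)\|_1 \;\leq\; \|f_1(x) - f_2(x)\|_1.
\]
Since the lower and upper bounds now coincide, a squeeze forces the joint $L^1$ distance to equal the marginal $L^1$ distance, which is exactly the claim.

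There is essentially no obstacle here: the only point worth flagging is \emph{why} $p=1$ is special. For general $p$ the quantity $\|f_x(y)\|_p$ depends on $x$ and need not be constant, so the infimum and supremum in Lemma \ref{lem:joint-lp-norm} can differ and the bound remains a genuine inequality; it is precisely the normalization $\int f(y\mid x)\,\mathrm{d}y = 1$, available only at $p=1$, that makes the conditioning factor out exactly. I would remark that this identity is what licenses the recursive error analysis to treat the expansion of the parameter space (appending $\bm{\phi}$ or $\bm{\theta}_t$) as distance-preserving in total variation.
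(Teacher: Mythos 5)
Your proof is correct and is precisely the intended derivation: the paper states the corollary immediately after Lemma \ref{lem:joint-lp-norm} without a separate proof, relying on exactly the observation that $\|f_x(y)\|_1 = \int f(y\mid x)\,\mathrm{d}y = 1$ for every $x$, so the infimum and supremum in the lemma's two-sided bound coincide and force equality. Your closing remark on why $p=1$ is special matches the role the corollary plays in Appendix \ref{sec:pprb-within-gibbs-approximation-error}.
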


Corollary \ref{cor:joint-l1-norm} shows that it is sufficient to consider only the accumulating \(L_1\) error of PPRB when interested in the accumulating \(L_1\) error of PPRB-within-Gibbs. PPRB-within-Gibbs at a time \(t\) targets a distribution proportional to \[p(\bm{\phi}|\bm{\theta},\bm{y}_1,\dots,\bm{y}_t)p(\bm{y}_t|\bm{\theta},\bm{y}_1,\dots,\bm{y}_{t-1})F^{(t-1)}_S(\bm{\theta}) \propto p(\bm{\phi}|\bm{\theta},\bm{y}_1,\dots,\bm{y}_t)A_{t},\]
while the true posterior at time \(t\) is proportional to
\[p(\bm{\phi}|\bm{\theta},\bm{y}_1,\dots,\bm{y}_t)p(\bm{\theta}|\bm{y}_1,\dots,\bm{y}_t) = p(\bm{\phi}|\bm{\theta},\bm{y}_1,\dots,\bm{y}_t)T_t.\]
Therefore the PPRB-within-Gibbs target distribution and the true posterior at time \(t\) have the form in Lemma \ref{lem:joint-lp-norm} and Corollary \ref{cor:joint-l1-norm} applies.

\hypertarget{sec:appendix-simulation-study}{%
\section{State space model simulation details}\label{sec:appendix-simulation-study}}

\hypertarget{formal-sampler-specification}{%
\subsection{Formal sampler specification}\label{formal-sampler-specification}}

We give a formal algorithm description for the PPRB-within-Gibbs (Algorithm \ref{alg:state-space-pprb-within-gibbs-2step}) and Generative Filtering (Algorithm \ref{alg:state-space-generative-filtering}) samplers used in Section \ref{sec:simulation-gaussian}. Steps (\ref{algstep:statespace-transition-start})-(\ref{algstep:statespace-transition-stop}) in Algorithm \ref{alg:state-space-generative-filtering} serve as applying the transition kernel (step (\ref{algstep:generative-filtering-transition-kernel}) in Algorithm \ref{alg:generative-filtering}).

\begin{algorithm}
\caption{PPRB-within-Gibbs for state space model in Section \ref{sec:simulation-gaussian}}\label{alg:state-space-pprb-within-gibbs-2step}
\algrenewcommand\algorithmicrequire{\textbf{Input:}}
\algrenewcommand\algorithmicensure{\textbf{Output:}}
\begin{algorithmic}[1]
\Require Posterior sample $\{\boldsymbol \theta_{1:(t-1),s}\}_{s=1}^S \sim p(\boldsymbol \theta_{1:(t-1)} | \boldsymbol y_{1:(t-1)})$
\Require Data $y_{t,i}$, for $i=1,\dots,n_t$
\Require Hyperparmeters, $\sigma^2, \phi^2$
\Ensure Updated posterior sample $\{\bm{\theta}_{1:t,s}\}_{s=1}^S \sim p(\bm{\theta}_{1:t} | \bm{y}_{1:t})$
    
    \State Draw $s^\ast \sim \text{Uniform}(\{1,\dots,S\})$
    \State Set $\boldsymbol \theta^{'}_{1:(t-1),0} = \boldsymbol \theta_{1:(t-1),s^\ast}$
    \State Draw $\theta_{t,0} \sim \text{N}(\theta^{'}_{t-1,0},\phi^2)$
    \For{$s=1,\dots,S$}
        \State Draw $s^\ast \sim \text{Uniform}(\{1,\dots,S\})$
        \State Set $\boldsymbol \theta_{1:(t-1)}^\ast = \boldsymbol \theta_{1:(t-1),s^\ast}$
        \State Set $\alpha = \min\left(\frac{\exp\left\{-(\theta_{t,s-1} - \theta_{t-1}^\ast)^2/(2\phi^2)\right\}}{\exp\left\{-(\theta_{t,s-1} - \theta_{t-1,s-1})^2/(2\phi^2)\right\}}, 1\right)$
        \State Draw $p \sim \text{Uniform}([0,1])$.
        \If{$p < \alpha$}
            \State Set $\boldsymbol \theta^{'}_{1:(t-1),s} = \boldsymbol \theta_{1:(t-1)}^\ast$
        \Else
            \State Set $\boldsymbol \theta^{'}_{1:(t-1),s} = \boldsymbol \theta^{'}_{1:(t-1),s-1}$
        \EndIf
        \State Set $V_t = (1/\phi^2 + n_t/\sigma^2)^{-1}$
        \State Set $C_t = \theta^{'}_{t-1,s}/\phi^2 + \sum_{i=1}^{n_t} y_{t,i} / \sigma^2$
        \State Draw $\theta_{t,s} \sim \text{N}(V_t C_t, V_t)$
    \EndFor

\end{algorithmic}
\end{algorithm}

\begin{algorithm}
\caption{Generative Filtering for state space model in Section \ref{sec:simulation-gaussian}}\label{alg:state-space-generative-filtering}
\algrenewcommand\algorithmicrequire{\textbf{Input:}}
\algrenewcommand\algorithmicensure{\textbf{Output:}}
\begin{algorithmic}[1]
\Require Posterior sample $\{\bm{\theta}_{1:(t-1),s}\}_{s=1}^S \sim p(\bm{\theta}_{1:(t-1)} | \bm{y}_{1:(t-1)})$
\Require Data $y_{t,i}$, for $i=1,\dots,n_t$
\Require Hyperparmeters, $\sigma^2, \phi^2$
\Require A $t \times t$ proposal covariance matrix, $\bm{\Sigma}_{rw}$
\Require $m_t \geq 1$
\Ensure Updated posterior sample $\{\bm{\theta}_{1:t,s}\}_{s=1}^S \sim p(\bm{\theta}_{1:t} | \bm{y}_{1:t})$

    \State Use Algorithm \ref{alg:state-space-pprb-within-gibbs-2step} to produce  $\{\bm{\theta}^{\ast}_{1:t,s}\}_{s=1}^S$ from $\{\bm{\theta}_{1:(t-1),s}\}_{s=1}^S$. 
    \For{$s=1,\dots,S$}
        \State Set $\bm{\theta}_{1:t,s} = \bm{\theta}^{\ast}_{1:t,s}$
        \For{$i=1,\dots,m_t$}
            \State Draw $\bm{\theta}_{1:t}^\ast \sim \text{N}(\bm{\theta}_{1:t,s}, \bm{\Sigma}_{rw})$ \label{algstep:statespace-transition-start}
            \State Set $\alpha = \min\left\{\frac{p(\bm{y}_{1:t}|\bm{\theta}_{1:t}^\ast,\sigma^2)p(\bm{\theta}_{1:t}^\ast|\phi^2)}{p(\bm{y}_{1:t}|\bm{\theta}_{1:t,s},\sigma^2)p(\bm{\theta}_{1:t,s}|\phi^2)}, 1\right\}$
            \State Draw $p \sim \text{Unif}([0,1])$
            \If{$p < \alpha$}
              \State Set $\bm{\theta}_{1:t,s} = \bm{\theta}_{1:t}^\ast$
            \EndIf \label{algstep:statespace-transition-stop}
        \EndFor
    \EndFor

\end{algorithmic}
\end{algorithm}

\hypertarget{parallel-computing-comparison}{%
\subsection{Parallel computing comparison}\label{parallel-computing-comparison}}

\begin{figure}
\centering
\includegraphics{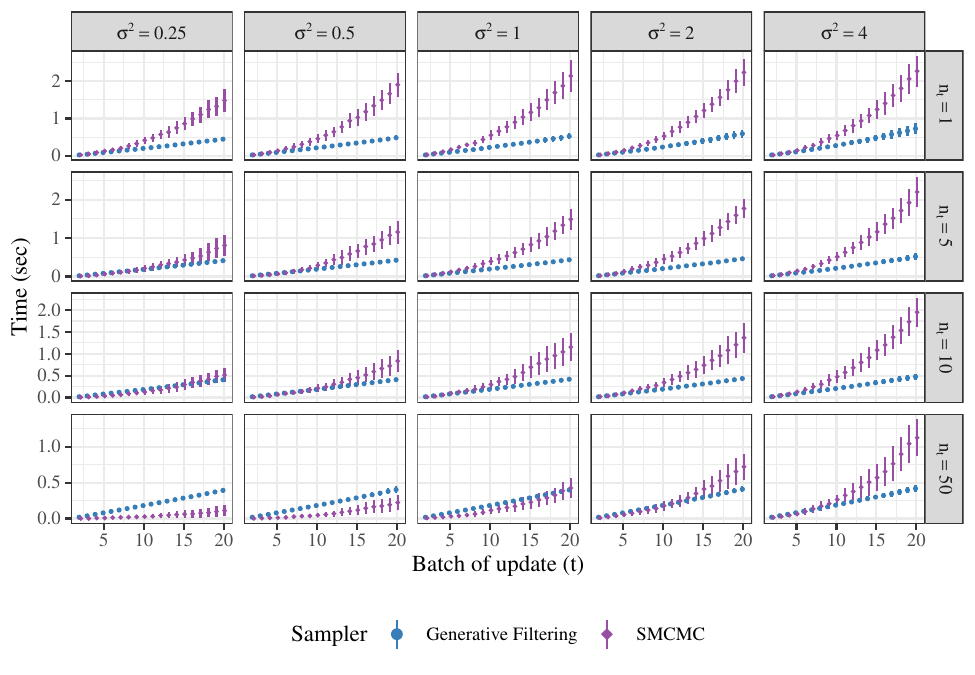}
\caption{\label{fig:simulation-plot-time}Cumulative time to converge to the posterior distribution on 8 cores. Time is shown as mean plus or minus standard deviation across all simulations. This figure illustrates the tradeoff between Generative Filtering and SMCMC. The PPRB step, while creating a better initial value, is not parallelizable. In scenarios where posterior distributions are not strongly affected by new data (e.g.~\(n_t=50, \sigma^2=0.25\)), SMCMC can converge more quickly in time because its jumping kernel is parallelizable.}
\end{figure}

The sequential nature of the PPRB-within-Gibbs filtering step of Generative Filtering presents a trade-off versus SMCMC. As seen in Figure \ref{fig:simulation-plot-steps} in Section \ref{sec:simulation-gaussian}, the filtering step initializes the Generative Filtering ensemble more effectivly than the jumping kernel of SMCMC, resulting in fewer required transition kernel steps to converge to the target distribution. However, the jumping kernel of SMCMC is parallelizable while PPRB-within-Gibbs is not. In Figure \ref{fig:simulation-plot-time} we compare the cumulative runtime required for each method to converge using 8 cores, typical of a personal workstation or laptop. When the jumping kernel and transition kernel are parallelized over 8 cores, in most scenarios, Generative Filtering takes less time to converge than SMCMC. As the number of available cores increases, this trade-off will begin to favor SMCMC in more scenarios. We see an example of this in Section \ref{sec:application}.

\hypertarget{correlation-based-online-m_t-selection}{%
\subsection{\texorpdfstring{Correlation-based online \(m_t\) selection}{Correlation-based online m\_t selection}}\label{correlation-based-online-m_t-selection}}

\citet{yang2013sequential} provide criteria for selecting the number of iterations, \(m_t\), for SMCMC in an online manner using the autocorrelation of the parallel chains. For a constant \(0 \leq \epsilon \leq 1\), the criteria selects \(m_t\) such that

\[m_t = \min\{k:\hat{f}_t(k) \leq 1 - \epsilon\}.\]
The function \(\hat{f}_t(k)\) computes the maximal correlation between the chains' values at iteration \(k\) and their initial values. For the state space model, it takes the form,

\[\hat{f}_t(k) := \underset{j=1,\dots,t}{\max}\frac{\sum_{s=1}^S\left(\theta_j^{(s,k+1)} - \overline{\theta}_j^{(k+1)}\right)\left(\theta_j^{(s,1)} - \overline{\theta}_j^{(1)}\right)}{\left(\sum_{s=1}^S\left(\theta_j^{(s,k+1)} - \overline{\theta}_j^{(k+1)}\right)^2\right)^{1/2}\left(\sum_{s=1}^S\left(\theta_j^{(s,1)} - \overline{\theta}_j^{(1)}\right)^2\right)^{1/2}},\]
where \(\theta_j^{(s,k)}\) is the value of the \(j^\text{th}\) component of \(\bm{\theta}\), in the \(k^\text{th}\) iteration of the \(s^\text{th}\) parallel chain, and \(\overline{\theta}_j^{(k)} = S^{-1}\sum_{s=1}^S \theta_j^{(s,k)}\) is the mean of the ensemble at iteration \(k\) across all chains. The initial value of each chain, \(\theta_j^{(s,1)}\), is the result of the jumping kernel (in SMCMC) or PPRB-within-Gibbs (in Generative Filtering). This criteria therefore chooses \(m_t\) such that the parameter values in each chain are sufficiently uncorrelated with their starting values.

Figure \ref{fig:simulation-plot-steps-yd} shows the cumulative number of steps required under this criteria using the recommended value of \(\epsilon = 0.5\) for small datasets \citep{yang2013sequential}. There are some key differences when compared to Figure \ref{fig:simulation-plot-steps} in Section \ref{sec:simulation-gaussian}. First, the correlation-based criteria always selects the same value of \(m_t\) for both SMCMC and Generative Filtering because it is based only on the correlation of the values at the current iteration with the initial values after the Jumping kernel or PPRB-within-Gibbs step and the same transition kernel is used for both SMCMC and Generative Filtering. Therefore it cannot take into account the accuracy of the initial ensemble at the beginning of the parallel chains. This criteria will either stop too soon for SMCMC or too late for Generative Filtering, depending on the choice of \(\epsilon\). Second, the cumulative values \(\sum_{t=1}^T m_t\) are consistent across all combinations of \(\sigma^2\) and \(n_t\), while they are not when using the oracle stopping criteria. This disparity suggests again that the correlation-based stopping criteria could select \(m_t\) that are too small or too large depending on the data, leading to either excess error or excess computational cost, respectively.

\begin{figure}
\centering
\includegraphics{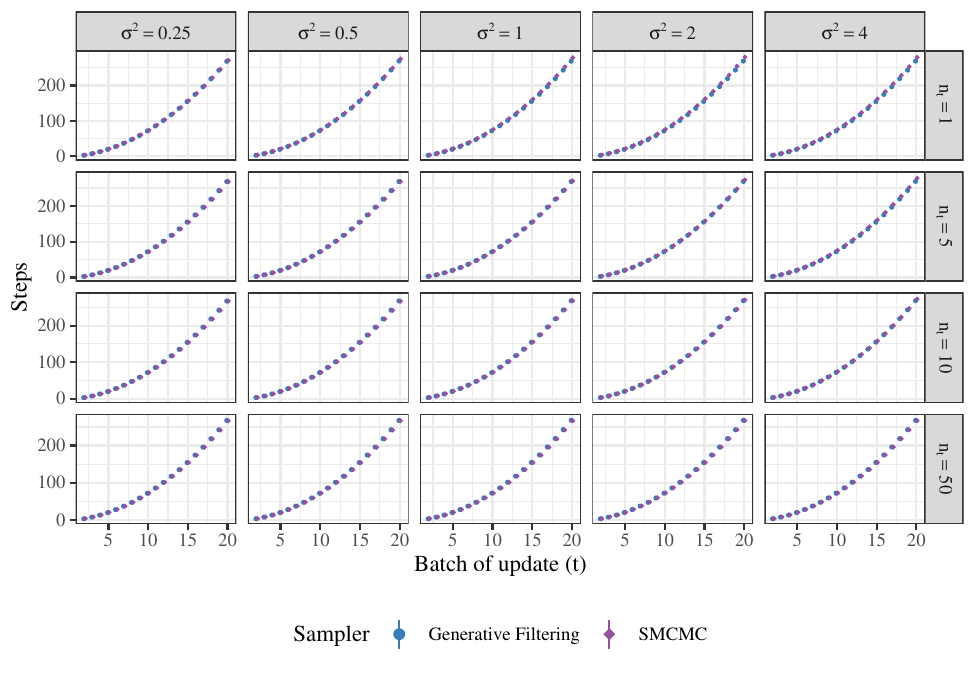}
\caption{\label{fig:simulation-plot-steps-yd}Cumulative transition kernel steps using the \citet{yang2013sequential} stopping criteria. Steps are shown as mean plus or minus standard deviation across all simulations.}
\end{figure}

Figure \ref{fig:simulation-plot-error-yd} shows the resulting MCMC error, measured by the KS statistic, when using the correlation-based stopping criteria with \(\epsilon = 0.5\). The chosen value of \(m_t\) appears sufficient for Generative Filtering, where the KS statistic is consistently below the threshold used in the oracle stopping criteria. However, in datasets with low signal from the data (\(\sigma^2\) large, \(n_t\) small, or both) the chosen value of \(m_t\) can leave significant MCMC error in the ensemble for SMCMC.

While the correlation-based stopping criteria can be inaccurate, it can serve as an approximate starting point for determining the appropriate \(m_t\) in a given problem. A stopping criteria that is flexible enough to apply to Generative Filtering and SMCMC while avoiding residual MCMC error or excessive computational cost as much as possible remains an area of future research.

\begin{figure}
\centering
\includegraphics{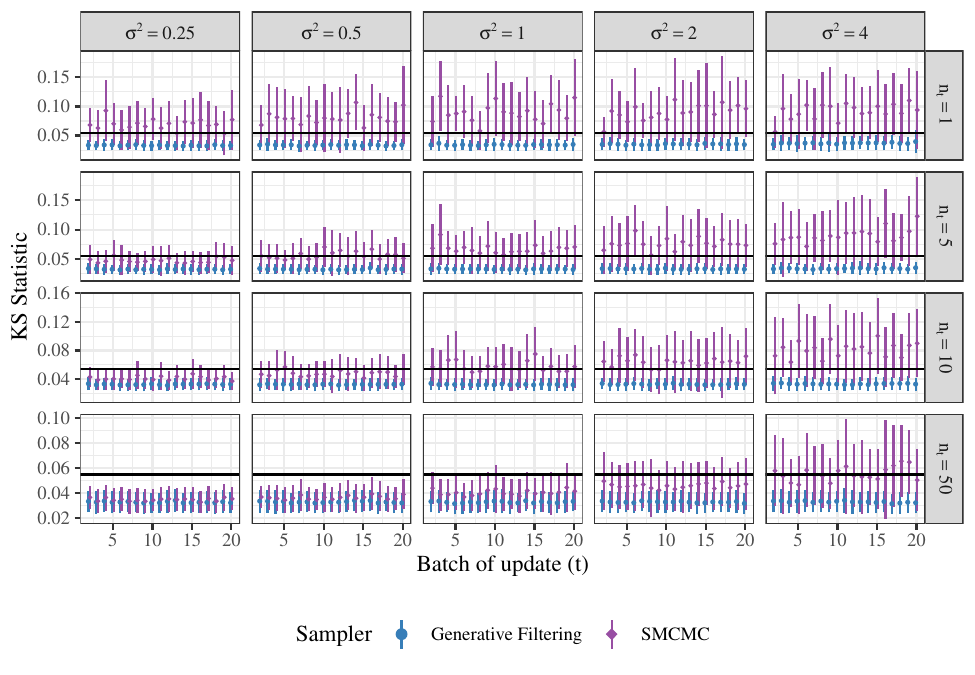}
\caption{\label{fig:simulation-plot-error-yd}Remianing MCMC error at the stopping point determined by the \citet{yang2013sequential} stopping criteria. Error shown as mean plus or minus standard deviation of KS statistic of sample. The KS critical value used in the oracle stopping criteria is shown as a horizontal line in each facet.}
\end{figure}

\hypertarget{sec:streaming-rl-model-details}{%
\section{Streaming Record Linkage Model}\label{sec:streaming-rl-model-details}}

The streaming record linkage model of \citet{taylor2023fast} is defined as follows. This section is a condensed version of the model definition in Section 2 of that paper. Consider \(k\) files \(X_1, \dots, X_k\) that are collected temporally, so that file \(X_m\) is available at time \(T_m\), with \(T_1 < T_2 < \dots < T_k\). Each file \(X_m\) contains \(n_m \geq 1\) records \(X_m = \{\boldsymbol x_{mi}\}_{i = 1}^{n_m}\), with each \(n_m\) potentially distinct. Each record is comprised of \(p_m\) fields and it is assumed that there is a common set of \(F\) fields numbered \(f=1,\dots,F\) across the \(k\) files which can be numeric, text or categorical. Records representing an individual (or entity) can be noisily duplicated across files, but it is assumed there are no duplicates within a file. We denote comparison between two records, \(\boldsymbol x_{m_1 i}\) in file \(X_{m_1}\) and \(\boldsymbol x_{m_2 j}\) in file \(X_{m_2}\), as a function, \(\gamma(\boldsymbol x_{m_1 i}, \boldsymbol x_{m_2 j})\), which compares the values in each field, \(f\), dependent on field type. Each comparison results in discrete levels \(0,\dots, L_f\) with 0 representing exact equality and subsequent levels representing increased difference. We define \(P = \sum_{f=1}^F (L_f+1)\), as the total number of levels of disagreement of all fields. The comparison \(\gamma(\boldsymbol x_{m_1 i}, \boldsymbol x_{m_2 j})\) takes the form of a \(P\)-vector of binary indicators containing \(F\) ones and \(P-F\) zeros which indicates the level of disagreement between \(\boldsymbol x_{m_1 i}\) and \(\boldsymbol x_{m_2 j}\) in each field. The comparison vectors are collected into matrices \(\Gamma^{(1)}, \dots, \Gamma^{(k-1)}\) where \(\Gamma^{(m-1)}\) contains all comparisons between the records in file \(X_m\) and previous files. Define \(\Gamma^{(1:m)}\) as \(\{\Gamma^{(1)}, \dots, \Gamma^{(m)} \}\) for \(m \in 1, \dots, k - 1\).

Records can be represented as a \(k\)-partite graph, with nodes representing records in each file and a link between two records indicating that they are coreferent. This graph can be segmented according to the order of files. First, a bipartite graph between \(X_1\) and \(X_2\); then a tripartite graph between \(X_1, X_2\), and \(X_3\), where records in \(X_3\) link to records in \(X_1\) and \(X_2\); until finally a \(k\)-partite graph between \(X_1,\dots,X_k\) where records in \(X_k\) link to records in \(X_1,\dots,X_{k-1}\). These graphs can be represented with \(k-1\) matching vectors, with one vector per file \(X_2, \dots, X_k\). Each vector, denoted \(\boldsymbol Z^{(m-1)}\), has length \(n_m\) with the value in index \(j\), denoted \(Z^{(m-1)}_j\), corresponding to the record \(\boldsymbol x_{mj}\) as follows,
\[
Z^{(m-1)}_j = \begin{cases}
\sum_{\ell=1}^{t-1} n_\ell + i & \parbox[t]{.60\textwidth}{for $t < m$, if $\boldsymbol x_{ti} \in X_t$ and $\boldsymbol x_{mj}$ are coreferent,} \\ 
\sum_{\ell=1}^{m-1} n_\ell +j & \text{otherwise.}
\end{cases}
\]
Let \(\boldsymbol Z^{(m-1)} = \left(Z^{(m-1)}_j\right)_{j=1}^{n_{m}}\) and \(\boldsymbol Z^{(1:m)} = \left\{\boldsymbol Z^{(1)},\dots,\boldsymbol Z^{(m)}\right\}\) for \(m \in 1, \dots, k - 1\). These vectors identify which records are coreferent and are therefore the main parameters of interest in the record linkage problem.

Preserving the assumption of duplicate-free files with a large number of files is a challenge because the combination of several links throughout the parameters \(\boldsymbol Z^{(1:(k-1))}\) may imply that two records in the same file are coreferent. To enforce transitivity of the coreference relationship, comparisons with files \(X_m, m \geq 3\) will be constrained.

\begin{definition}
\protect\hypertarget{def:link-validity}{}\label{def:link-validity}\textbf{Link Validity Constraint.} Let \(\mathcal{C}_k\) be the set of all matching vectors \(\boldsymbol Z^{(1:(k-1))}\) such that every record \(\boldsymbol x_{m_1 i}\) receives at most one link from a record \(\boldsymbol x_{m_2 j}\) where \(m_2 > m_1\). That is, there is at most one value in any \(\boldsymbol Z^{(m_2-1)}\) with \(m_2 > m_1\) that equals \(\sum_{\ell=1}^{m_1 - 1} n_\ell + i\). Matching vectors \(\boldsymbol Z^{(1:(k-1))}\) are valid if and only if \(\boldsymbol Z^{(1:(k-1))} \in \mathcal{C}_k\).
\end{definition}

We also define parameters \(\boldsymbol m\) and \(\boldsymbol u\), which fully specify the distributions \(\cal M\) and \(\cal U\) respectively. Both \(\boldsymbol m\) and \(\boldsymbol u\) are \(P\)-vectors which can be separated into the sub-vectors \(\boldsymbol m = \begin{bmatrix}\boldsymbol m_1 & \dots & \boldsymbol m_F\end{bmatrix}\) and \(\boldsymbol u = \begin{bmatrix}\boldsymbol u_1 & \dots & \boldsymbol u_F\end{bmatrix}\), where \(\boldsymbol m_f\) and \(\boldsymbol u_f\) have length \(L_f+1\). Then \({\cal M}(\boldsymbol m) = \prod_{f=1}^F \text{Multinomial}(1; \boldsymbol m_f)\) and \({\cal U}(\boldsymbol u) = \prod_{f=1}^F \text{Multinomial}(1; \boldsymbol u_f)\) are the distributions for matches and non-matches, respectively. To define the likelihood, we first define the match set,
\(M := M(\boldsymbol Z^{(1:(k-1))}) = \{(\boldsymbol x_{m_1 i}, \boldsymbol x_{m_2 j}): \boldsymbol x_{m_1 i}\text{ and }\boldsymbol x_{m_2 j}\text{ are linked}\},\) to contain all pairs of records that are linked either directly or transitively through a combination of multiple vectors \(\boldsymbol Z^{(1:(k-1))}\).

The full data model in the \(k\)-file case is then
\begin{equation}
P(\Gamma^{(1:(k-1))} | \boldsymbol m, \boldsymbol u, \boldsymbol Z^{(1:(k-1))}) = \prod_{m_1 < m_2}^k \prod_{i=1}^{n_{m_1}} \prod_{j=1}^{n_{m_2}} \prod_{f=1}^F \prod_{\ell=0}^{L_f} \left[m_{f\ell}^{\mathbb{I}((\boldsymbol x_{m_1 i}, \boldsymbol x_{m_2 j}) \in M)} u_{f\ell}^{\mathbb{I}((\boldsymbol x_{m_1 i}, \boldsymbol x_{m_2 j}) \notin M)} \right]^{\gamma^{f\ell}(\boldsymbol x_{m_1 i}, \boldsymbol x_{m_2 j})}. \label{eqn:streaming-data-model}
\end{equation}
The support of the data distribution is dependent on the vectors \(\boldsymbol Z^{(1:(k-1))}\), specifically, the matching vectors must satisfy the link validity constraint given in Definition \ref{def:link-validity}. We explicitly write this constraint as an indicator function in the likelihood:
\begin{equation}
L(\boldsymbol m, \boldsymbol u, \boldsymbol Z^{(1:(k-1))}) = \mathbb{I}(\boldsymbol Z^{(1:(k-1))} \in \mathcal{C}_k)\cdot P(\Gamma^{(1:(k-1))} | \boldsymbol m, \boldsymbol u, \boldsymbol Z^{(1:(k-1))}). \label{eqn:streaming-likelihood}
\end{equation}

The parameters \(\boldsymbol m\) and \(\boldsymbol u\) are probabilities of a multinomial distribution, so we specify conjugate Dirichlet priors. Specifically, we let \(\boldsymbol m_f \sim \text{Dirichlet}(\boldsymbol a_f)\) and \(\boldsymbol u_f \sim \text{Dirichlet}(\boldsymbol b_f)\), for \(f=1,\dots,F\), where \(\boldsymbol a_f\) and \(\boldsymbol b_f\) are vectors with the same dimension, \(L_f+1\), as \(\boldsymbol m_f\) and \(\boldsymbol u_f\). To define the prior for the matching vectors, let \(w_j^{(k)} := \mathbb{I}\left(Z^{(k-1)}_j \leq \sum_{m=1}^{k-1} n_m\right)\), that is let \(w_j^{(k)}\) be an indicator that record \(j\) in file \(k\) is linked, and \(\boldsymbol w^{(k)} = \left\{w_j^{(k)}: j = 1, \dots, n_k\right\}\). Then to specify the prior for \(\boldsymbol Z^{(k-1)}\), let
\begin{align}
\left. w_j^{(k)} \middle| \pi \right. &\stackrel{\text{iid}}{\sim} \text{Bernoulli}(\pi) \notag \\
\left. \boldsymbol Z^{(k-1)} \middle| \boldsymbol w^{(k)}\right. &\sim \text{Uniform}\left(\left\{\text{all valid $k$-partite matchings}\right\}\right). \label{eqn:z-prior}
\end{align}

Allowing \(\pi \sim \text{Beta}(\alpha_\pi, \beta_\pi)\) results in the marginal streaming prior
\begin{equation*}
P(\boldsymbol Z^{(k-1)}|\alpha_\pi,\beta_\pi) = \frac{(N - n_{k\cdot}(\boldsymbol Z^{(k-1)}))!}{N!}\cdot\frac{\mbox{B}(n_{k\cdot}(\boldsymbol Z^{(k-1)}) + \alpha_\pi, n_k - n_{k\cdot}(\boldsymbol Z^{(k-1)}) + \beta_\pi)}{\mbox{B}(\alpha_\pi, \beta_\pi)}, 
\end{equation*}
where \(N = \sum_{m=1}^{k-1} n_m\) and \(n_{k\cdot}(\boldsymbol Z^{(k-1)}) = \sum_{j=1}^{n_k} I(Z^{(k-1)}_j \leq N)\).

\hypertarget{sec:application-sampling-details}{%
\section{Pups Sampling Details}\label{sec:application-sampling-details}}

We model the sea lion pup counts using the following hierarchical model,
\begin{align*}
y_{s,t} &\sim \mbox{Pois}(\lambda_{s,t}) \\
\log(\lambda_{s,1}) &\sim \mbox{N}(\mu_1, \sigma_1^2) \\
\log(\lambda_{s,t}) &\sim \mbox{N}(\phi_s + \log(\lambda_{s,t-1}), \sigma^2_s) \\
\phi_s &\sim \mbox{N}(0, \sigma^2_\phi) \\
\sigma^2_s &\sim \mbox{Inverse-gamma}(\alpha, \beta),
\end{align*}
where \(s=1,2,3,4\) for each of our four studied sites, and \(t=1978, \dots, 2016\). The parameters \(\lambda_{s,t}\) represent the latent population intensities, and the parameters \(\phi_s\) and \(\sigma^2_s\) define the relationship between population intensity parameters over time. The parameters of interest are population intensities \(\lambda_{s,t}\) and population intensity trends \(\phi_s\). Negative values of \(\phi_s\) indicate the population intensities are decreasing at site \(s\) while positive values of \(\phi_s\) indicate the population intensities are increasing at site \(s\). We set the hyperparameters \(\mu_1 = 8.7\), \(\sigma^2_1 = 1.69\), \(\sigma^2_\phi = 1\), \(\alpha = 1\), and \(\beta = 20\).

We estimate the parametrers using a Gibbs sampler, SMCMC, PPRB-within-Gibbs (Algorithm \ref{alg:pprb-within-gibbs-2step}), and Generative Filtering (Algorithm \ref{alg:generative-filtering}). The Gibbs sampler uses conjugate full conditional updates for \(\phi_s\) and \(\sigma^2_s\), and Metropolis-within-Gibbs proposals for \(\log(\lambda_{s,t})\), as described in \citet{hooten2021making}. We reproduce these full conditional distributions here for convenience. When data, \(\bm{y} = \{y_{s,t}\}_{s=1,\dots,4; t=1,\dots,T}\), have arrived for a time \(T\). Then each parameter has full conditional distributions,
\begin{align*}
\phi_s|\cdot &\sim \mathrm{N}(a^{-1}b, a^{-1}), \\
\sigma^2_s|\cdot &\sim \mathrm{IG}(\Tilde{\alpha}, \Tilde{\beta}), \\
p(\log(\lambda_{s,t})|\cdot) &\propto \begin{cases}
p(y_{s,1}|\lambda_{s,1})p(\log(\lambda_{s,2})|\phi_s,\sigma_s^2,\log(\lambda_{s,1}))p(\log(\lambda_{s,1})) & \text{for $t=1$}, \\
p(y_{s,t}|\lambda_{s,t})p(\log(\lambda_{s,t+1})|\phi_s,\sigma_s^2,\log(\lambda_{s,t}))p(\log(\lambda_{s,t})|\phi_s,\sigma_s^2,\log(\lambda_{s,t-1})) & \text{for $1<t<T$}, \\
p(y_{s,T}|\lambda_{s,T})p(\log(\lambda_{s,T})|\phi_s,\sigma_s^2,\log(\lambda_{s,
T-1})) & \text{for $t=T$},
\end{cases}
\end{align*}
where
\begin{align*}
a &= \frac{T-1}{\sigma_s^2} + \frac{1}{\sigma_\phi^2}, \\
b &= \frac{1}{\sigma_s^2}\left(\sum_{t=2}^T (\log(\lambda_{s,t}) - \log(\lambda_{s,t-1})) \right) = \frac{\log(\lambda_{s,T}) - \log(\lambda_{s,1})}{\sigma_s^2}, \\
\Tilde{\alpha} &= \frac{T-1}{2} + \alpha, \\
\Tilde{\beta} &= \left(\frac{\sum_{t=2}^T (\log(\lambda_{s,t}) - \phi_s - \log(\lambda_{s,t-1}))^2}{2}+\frac{1}{\beta}\right)^{-1}.
\end{align*}

Algorithm \ref{alg:pups-gibbs-sampler} gives the full algorithm for producing posterior samples of the parameters in this model. We choose the proposal variances \(\sigma^2_{\mathrm{tune},s,t}\) such that for each \(s,t\) the proposals will have an acceptance rate of approximately 0.44. We refer to steps \ref{algstep:pups-gibbs-kernel-begin} through \ref{algstep:pups-gibbs-kernel-end} in Algorithm \ref{alg:pups-gibbs-sampler} as the Gibbs-style transition kernel, which updates all parameters in a manner such that the posterior is the stationary distribution. This kernel is used as the transition kernel for Generative Filtering (step \ref{algstep:generative-filtering-transition-kernel} of Algorithm \ref{alg:generative-filtering}) and SMCMC. The random walk Metropolis proposal for \(\log(\lambda_{s,T})\) (steps \ref{algstep:pups-lambda-fc-begin} through \ref{algstep:pups-lambda-fc-end} of Algorithm \ref{alg:pups-gibbs-sampler}, when \(t=T\)) is used to update \(\log(\lambda_{s,T})\) in the PPRB-within-Gibbs sampler (step \ref{algstep:pprb-within-gibbs-phi-update} of Algorithm \ref{alg:pprb-within-gibbs-2step}) and as the jumping kernel in SMCMC.

\spacingset{1}
\begin{algorithm}
\caption{Gibbs Sampler for sea lion pup count model}\label{alg:pups-gibbs-sampler}
\algrenewcommand\algorithmicrequire{\textbf{Input:}}
\algrenewcommand\algorithmicensure{\textbf{Output:}}
\algblock{If}{EndIf}
\algcblock[If]{If}{ElsIf}{EndIf}
\algcblock{If}{Else}{EndIf}
\begin{algorithmic}[1]
\Require Data $y_{s,t}$ for $s=1,\dots,4$ and $t=1,\dots,T$
\Require Proposal variances $\sigma^2_{\mathrm{tune},s,t}$ for $s=1,\dots,4$ and $t=1,\dots,T$
\Require Hyperparameter values $\mu_1, \sigma^2_1, \sigma^2_\phi, \alpha, \beta$
\Require Desired number of samples, $N$
\Require Initial parameter values $(\phi_1,\dots,\phi_4,\sigma^2_1,\dots,\sigma^2_4,\lambda_{1,1},\dots,\lambda_{4,T})^{(0)}$
\Ensure Posterior sample $\{(\phi_1,\dots,\phi_4,\sigma^2_1,\dots,\sigma^2_4,\lambda_{1,1},\dots,\lambda_{4,T})^{(i)}\}_{i=1}^N$

  \For{$i=1,\dots,N$}
    \For{$s=1,\dots,4$} \label{algstep:pups-gibbs-kernel-begin}
      \State Set $a = \frac{T-1}{{\sigma_s^2}^{(i-1)}} + \frac{1}{\sigma_\phi^2}$, $b = \frac{\log(\lambda_{s,T}^{(i-1)}) - \log(\lambda_{s,1}^{(i-1)})}{{\sigma_s^2}^{(i-1)}}$.
      \State Draw $\phi_s^{(i)} \sim \mathrm{N}(a^{-1}b, a^{-1})$.
      \State Set $\Tilde{\alpha} = \frac{T-1}{2} + \alpha$, $\Tilde{\beta} = \left(\frac{\sum_{t=2}^T (\log(\lambda_{s,t}^{(i-1)}) - \phi_s^{(i)} - \log(\lambda_{s,t-1}^{(i-1)}))^2}{2}+\frac{1}{\beta}\right)^{-1}$.
      \State Draw ${\sigma^2_s}^{(i)} \sim \mbox{Inverse-gamma}(\Tilde{\alpha}, \Tilde{\beta})$.
      \For{$t=1,\dots,T$}
        \State Draw $\log(\lambda_{s,t}^{'}) \sim \mathrm{N}(\log(\lambda_{s,t}^{(i-1)}), \sigma^2_{\mathrm{tune},s,t})$. \label{algstep:pups-lambda-fc-begin}
        \If{$t=1$}
          \State Set $\alpha = \min\left\{\frac{p(y_{s,1}|\lambda_{s,1}^{'})p(\log(\lambda_{s,2}^{(i-1)})|\phi_s^{(i)},{\sigma_s^2}^{(i)},\log(\lambda_{s,1}^{'}))p(\log(\lambda_{s,1}^{'}))}{p(y_{s,1}|\lambda_{s,1}^{(i-1)})p(\log(\lambda_{s,2}^{(i-1)})|\phi_s^{(i)},{\sigma_s^2}^{(i)},\log(\lambda_{s,1}^{(i-1)}))p(\log(\lambda_{s,1}^{(i-1)}))}, 1\right\}$
        \ElsIf{$t=T$}
          \State Set $\alpha = \min\left\{\frac{p(y_{s,T}|\lambda_{s,T}^{'})p(\log(\lambda_{s,T}^{'})|\phi_s^{(i)},{\sigma_s^2}^{(i)},\log(\lambda_{s,T-1}^{(i)}))}{p(y_{s,T}|\lambda_{s,T}^{(i-1)})p(\log(\lambda_{s,T}^{(i-1)})|\phi_s^{(i)},{\sigma_s^2}^{(i)},\log(\lambda_{s,T-1}^{(i)}))}, 1\right\}$
        \Else
          \State Set $\alpha = \min\left\{\frac{p(y_{s,t}|\lambda_{s,t}^{'})p(\log(\lambda_{s,t+1}^{(i-1)})|\phi_s^{(i)},{\sigma_s^2}^{(i)},\log(\lambda_{s,t}^{'}))p(\log(\lambda_{s,t}^{'})|\phi_s^{(i)},{\sigma_s^2}^{(i)},\log(\lambda_{s,t-1}^{(i)}))}{p(y_{s,t}|\lambda_{s,t}^{(i-1)})p(\log(\lambda_{s,t+1}^{(i-1)})|\phi_s^{(i)},{\sigma_s^2}^{(i)},\log(\lambda_{s,t}^{(i-1)}))p(\log(\lambda_{s,t}^{(i-1)})|\phi_s^{(i)},{\sigma_s^2}^{(i)},\log(\lambda_{s,t-1}^{(i)}))}, 1\right\}$
        \EndIf
        \State Draw $p \sim \text{Uniform}([0,1])$.
        \If{$p < \alpha$}
            \State Set $\log(\lambda_{s,t}^{(i)}) = \log(\lambda_{s,t}^{'})$
        \Else
            \State Set $\log(\lambda_{s,t}^{(i)}) = \log(\lambda_{s,t}^{(i-1)})$
        \EndIf \label{algstep:pups-lambda-fc-end}
      \EndFor
    \EndFor \label{algstep:pups-gibbs-kernel-end}
  \EndFor

\end{algorithmic}
\end{algorithm}
\spacingset{1.75}

\end{document}